\newtheorem{theorem}{Theorem}
\newtheorem{lemma}[theorem]{Lemma}
\newtheorem{proposition}[theorem]{Proposition}
\newtheorem{observation}[theorem]{Observation}
\newtheorem{corollary}[theorem]{Corollary}
\newtheorem{definition}[theorem]{Definition}
\newcommand{\email}[1]{\href{mailto:#1}{#1}}
\def\capStyle#1{#1}
\newcommand{\calelize}[1]{\ensuremath{\mathcal #1}\xspace}
\newcommand{\calA}{\calelize{A}}
\newcommand{\calS}{\calelize{S}}
\newcommand{\calTS}{\calelize{T(S)}}
\newcommand{\calTSi}{\calelize{T(S_i)}}
\newcommand{\calAS}{\calelize{A(S)}}
\newcommand{\TrpSrchTree}{\ensuremath{\mathbb{T}}\xspace}
\newcommand{\G}{\ensuremath{\mathbb{G}}\xspace}
\newcommand{\Tall}{\ensuremath{\calelize{T}^\ast}\xspace}
\newcommand{\Rall}{\ensuremath{\calelize{R}^\ast}\xspace}
\newcommand{\Arr}[1]{\ensuremath{\calA}(#1)\xspace}
\newcommand{\tred}{\ensuremath{t}\xspace} 
\newcommand{\rred}{\ensuremath{r}\xspace} 
\newcommand{\TcollReduc}{\ensuremath{\calelize{T}^{c}}\xspace} 
\newcommand{\RcollReduc}{\ensuremath{\calelize{R}^{c}}\xspace}  
\newcommand{\E}{\ensuremath{\mathbb{E}}\xspace}
\newcommand{\Left}{\text{\em left}}
\newcommand{\Top}{\text{\em top}}
\newcommand{\Bottom}{\text{\em bottom}}
\newcommand{\Right}{\text{\em right}}
\newcommand{\myqed}{}
\newcommand{\providelength} [1]{
  \ifthenelse{\isundefined{#1}}
             {\newlength{#1}}
             {}}
\newcommand{\ie}{i.\,e.,\xspace}
\newcommand{\cpp}{\textsc{C\raise.08ex\hbox{\tt ++}}\xspace}
\newcommand{\cgal}{{\sc Cgal}\xspace}
\newcommand{\RIC}{random incremental construction\xspace}
\newcommand{\NN}{nearest-neighbor\xspace}
\newcommand{\PL}{point location\xspace}
\newcommand{\PPL}{planar point location\xspace}
\newcommand{\lqpl}{\ensuremath{{\mathcal L}}\xspace}
\newcommand{\depth}{\ensuremath{{\mathcal D}}\xspace}
\newcommand{\N}{\ensuremath{\mathbb{N}}}
\newcommand{\deberg}{de Berg\xspace}
\newcommand{\first}[2]{#1}
\newcommand{\ignore}[1]{}
\newcommand{\TODO}[1]{}}
\newcommand{\TODO}[1]{\medskip\begin{tabular}{rp{5cm}}{\bf TODO}&{\bf #1}\medskip\end{tabular}}}
\newcommand{\arxivdcg}[2]{#2} 
\begin{document}


\title{Optimal randomized incremental construction for guaranteed logarithmic planar point location%
\thanks{
This work has been supported in part by the 7th Framework
Programme for Research of the European Commission, under
FET-Open grant number 255827 (CGL---Computational Geometry
Learning), by the Israel Science Foundation (grant no.
1102/11), by the German-Israeli Foundation (grant no.
1150-82.6/2011), and by the Hermann Minkowski--Minerva
Center for Geometry at Tel Aviv University.
}}

\author{
  Michael Hemmer%
  \thanks{ 
    Institute of Operating Systems and Computer Networks,
    University of Technology Braunschweig, Braunschweig, Germany;
    \email{mhsaar@gmail.com}}%
  \and 
  Michal Kleinbort 
  \thanks{
    School of Computer Science, Tel Aviv University, Tel Aviv, Israel;
    \email{balasmic@tau.ac.il}, \email{danha@tau.ac.il}
  }
  \and 
  Dan Halperin$^\ddagger$}

\date{21 October, 2014}

\maketitle

\begin{abstract}
Given a planar map of $n$ segments in which we wish to
efficiently locate points, we present the first randomized
incremental construction of the well-known trapezoidal-map
search-structure that only requires expected
$O(n \log n)$ preprocessing time while deterministically guaranteeing
worst-case linear storage space and worst-case
logarithmic query time.
This settles a long standing open problem; the best
previously known construction time of such a structure,
which is based on a directed acyclic graph, so-called the {\it history DAG},
and with the above worst-case space and query-time guarantees,
was expected $O(n \log^2 n)$.
The result is based on a deeper understanding of the structure
of the history DAG,
its depth in relation to the length of its longest search path,
as well as its correspondence to the
{\it trapezoidal search  tree}.
Our results immediately extend to planar maps induced by finite collections of
pairwise interior disjoint well-behaved curves.\\

The article significantly extends the theoretical aspects of the work presented in \url{http://arxiv.org/abs/1205.5434}.

\end{abstract}
\clearpage


 \section{Introduction}
\label{sec:intro}


The \PPL problem for a set $S$ of $n$
pairwise interior disjoint $x$-monotone curves
inducing a planar subdivision (or a planar arrangement)~$\calA(S)$  is defined as follows:
given a query point~$q$, locate the feature of~$\calA(S)$
containing~$q$, i.e., the face, edge or vertex of~$\calA(S)$  that~$q$ lies in.
It is one of the fundamental problems in Computational
Geometry and has numerous applications in a variety of
domains, such as computer graphics, motion planning,
computer aided design (CAD), geographic information
systems (GIS), and many more.

In this work we revisit one of the most elegant and general
algorithms for planar point location,
namely the randomized incremental construction
of the trapezoidal map and the related search structures.
It is also the only algorithm for general $x$-monotone curves
that has an exact, complete and maintained
implementation~\cite{FlatoHHNE2000_planar_maps_cgal,HemmerKH12},
which is available via \cgal, the Computational Geometry
Algorithms Library~\cite{cgal}.

\subsection{Previous Work}

As a core problem in Computational Geometry,
the \PPL problem has been
well-studied for many years.
Among the various solutions to the problem,
some methods can only provide an \emph{expected} query time of $O(\log{n})$
but cannot guarantee logarithmic query time for all cases.
It is particularly true for solutions that only require $O(n)$ space.
In addition, certain solutions may only support linear subdivisions,
while others are applicable to non-linear ones as well.
Triangulation-based \PL methods, such as Kirkpatrick's
approach~\cite{Kirkpatrick1983_opt_stch_subdivisions} and
Devillers's Delaunay Hierarchy~\cite{Devillers2002_del_hierarchy}
are restricted to linear subdivisions, since they build on
a triangulation of the actual input.
Kirkpatrick creates a hierarchy of $O(\log{n})$ levels of triangulated faces
(including the outer face),
where at each level an independent-set of low-degree vertices
is removed when creating the next level in the hierarchy.
This approach guarantees that the data structure
requires only $O(n)$ space and that a query
takes only $O(\log n)$ time.
The \emph{Delaunay Hierarchy} of Devillers, on the other hand,
does not guarantee logarithmic query time,
and may have a linear query time in the worst case.

\arxivdcg{
The \emph{Landmarks} \PL strategy~\cite{HaranH08}
is a method maintaining good running times in practice,
which is available for \cgal{} arrangements.
The Landmarks strategy combines a \NN search to find the nearest landmark,
and a hopefully short walk in the full subdivision
from the landmark to the query point.
It is a heuristics,
and therefore does not guarantee a logarithmic query time.
}{}

Most of the other methods can be summarized under the
\emph{trapezoidal search graph} model of computation,
as pointed out by Seidel and Adamy~\cite{SeidelA2000_wc_query_comp}.
The fundamental search structure
used by this model is a directed acyclic graph~\G
(which may even be just a tree for some methods)
with one root and many leaves.
Internal nodes in~\G have two outgoing
edges each, and are either labeled with a vertical line
and are therefore left-right nodes, or labeled by an
input curve and in such a case are top-bottom nodes.
In principal,
all these solutions can be generalized to support
well-behaved curves~\cite[Subsection~1.3.3]{FogelHW2012_CGAL_arr_book},
that is, curves that can be decomposed into a finite number
of $x$-monotone pieces.

One of the earliest solutions that can be subsumed under this model
is known as the \emph{slabs method} introduced by
Dobkin and Lipton~\cite{DobkinL1976_multidim_srch_prblm}.
Every endpoint induces a vertical wall
giving rise to $2n+1$ vertical slabs.
A point location query is performed by
a binary search to locate the correct slab
and another search within the slab in $O(\log n)$ time.
Preparata~\cite{Preparata1981_trpz_graph_pl} introduced the
\emph{Trapezoid Graph method} based on the slabs method.
His method, reduces the space bounds from
$O(n^2)$, as required by Dobkin and Lipton's slabs method,
to $O(n\log{n})$ only, by uniquely decomposing each edge
into $O(\log{n})$ fragments.
Sarnak and Tarjan~\cite{SarnakT1986_pl_perst_trees}
achieved a significant improvement in memory usage
for a slabs-based method
by using a~\emph{persistent} data structure.
Their key observation is that the sequence of search structures
in all slabs can be interpreted as one structure that changes over time,
which can be stored as a persistent data structure
requiring only $O(n)$ size.
Another example for this model is the \emph{separating
chains method} by Lee and Preparata~\cite{LeeP1976_sep_chains_pl},
which also requires linear space.
Their algorithm expects a monotone subdivision
and uses horizontal monotone chains to separate faces.
It is based on the idea that faces
of any monotone subdivision can be totally ordered
preserving the above-below relation.
Each chain is a node in a binary search tree
(each edge is kept only once).
Querying the structure is essentially deciding whether
the query point is above or below $O(\log n)$ chains.
However, for each chain this test takes $O(\log n)$, using
a binary search.
Therefore, the total query time is $O(\log ^2 n)$.
Another linear size data structure was proposed by
Edelsbrunner et al.~\cite{EdelsbrunnerGS1986_opt_pl_mon_subdiv}.
They used Fractional Cascading in order to create a layered chain tree as
a search structure by copying every other $x$-value from a node to its parent
and maintaining pointers from parent list to child lists.
Querying this structure takes $O(\log n)$ time.

This work is focused on the trapezoidal map
randomized incremental construction (RIC), which was introduced by
Mulmuley~\cite{Mulmuley1990_fast_planar_part_alg}
and Seidel~\cite{Seidel1991_simp_fast_inc_rand_td}.
Its associated search structure is a Directed Acyclic Graph (DAG)
recording the history of the construction.
It achieves expected $O(n \log n)$ preprocessing time,
expected $O(\log n)$ query time and expected $O(n)$ space.
As pointed out by \deberg et al.~\cite{CG-alg-app},
the latter two can even be guaranteed. However,
their sketched solution would require $O(n \log^2 n)$ time.
%
%
A general major advantage of all variants of this approach is
that they can also handle dynamic scenes to some extent,
namely, it is possible to add or delete edges later on.
The entire method is discussed in more detail in
Section~\ref{sec:prelim} below.

In an invited talk~\cite{Seidel_cccg_2009} at CCCG 2009, Raimund Seidel
briefly sketched a deterministic variant with equivalent guarantees that,
like the approach of Kirkpatrick,
uses independent sets to determine a proper insertion order of segments.
However, he also concludes that
the elegant and less cumbersome randomized approach, i.e., the RIC,
is preferable.

A variant of the latter
adds weights and thus gives expected query time
satisfying entropy bounds~\cite{AryaMM2001_simp_entrpy_alg}.
Arya et al.\ also stated that entropy preserving cuttings can be
used to give a method the query time of which 
approaches the optimal entropy bound,
at the cost of increased space and
programming complexity~\cite{AryaMM2001_entrpy_cutt}.
These methods guarantee a logarithmic query time,
however maintaining the search structures
requires a considerably large amount of memory and
a significant increase in the preprocessing time.
Therefore, these solutions are generally
rather complicated to implement.
For other methods
and variants the reader is referred to a comprehensive overview
given in~\cite{Snoeyink2004_handbook_disc_comp_geom_pl}.

\subsection*{Contribution}

This article extends the theoretical aspects of the work presented
in the European Symposium on Algorithms (ESA) 2012~\cite{HemmerKH12},
which also presented a major revamp of the exact implementation of the RIC,
which now guarantees $O(\log n)$ query time and $O(n)$ space.
%
%

Section~\ref{sec:prelim} discusses the basic algorithm
by Mulmuley~\cite{Mulmuley1990_fast_planar_part_alg}
and Seidel~\cite{Seidel1991_simp_fast_inc_rand_td}
and the variant by \deberg et al.~\cite{CG-alg-app}.
The latter guarantees logarithmic query time
by reconstructing the search structure
if the length of the longest search path~\lqpl
or the size~\calS exceed some thresholds.
However, to keep the preprocessing efficient,~\lqpl
and~\calS would have to be efficiently
accessible, which is not trivial for~\lqpl.
In fact, an early version of~\cite{CG-alg-app}
did not make the distinction between~\lqpl and the
depth~\depth of the DAG, which is the length
of the longest DAG path and can be efficiently accessed.
Section~\ref{sec:depth_vs_qlen} discusses
the fundamental difference between~\depth and~\lqpl.
Specifically, we show that the worst-case ratio between~\depth and~\lqpl
can be $\Theta(n/\log n)$.
%
In Section~\ref{sec:eff_ver_alg} we introduce two algorithms to
verify~\lqpl after the actual construction, both leading to an overall
expected $O(n \log n)$ preprocessing time.
The first relies on a deeper understanding of the relation
between the trapezoidal search tree \TrpSrchTree and the DAG \G
(A preparatory discussion of~\TrpSrchTree and~\G is given in Section~\ref{sec:relation_G_and_T}).
It operates directly on \G and
requires expected $O(n \log n)$ time.
The second runs in deterministic $O(n \log n)$ time and
is based on the computation of the ply of all trapezoids
that existed during the construction of \G.
Conclusions and open problems
are given in Section~\ref{sec:conclusion}.
We defer to Appendices some auxiliary material including
some straightforward case-analysis, description of folklore
results that we have not found archived, and adaptation of known
algorithms to our specific needs.

\section{Preliminaries}
\label{sec:prelim}%

This section briefly reviews the trapezoidal map \RIC
for \PL.
After some relevant general definitions in
Subsection~\ref{subsec:background:defs},
the basic algorithm presented by
Mulmuley~\cite{Mulmuley1990_fast_planar_part_alg}
and Seidel~\cite{Seidel1991_simp_fast_inc_rand_td}
is provided in Subsection~\ref{subsec:background:basic}.
The variant by \deberg et al.~\cite{CG-alg-app},
which gives the guarantees on \lqpl and \calS, is described
in Subsection~\ref{subsec:background:guaranteed}.


\subsection{Definitions}
\label{subsec:background:defs}

\arxivdcg{
\begin{figure}
\centering
  \includegraphics[width=0.7\textwidth]{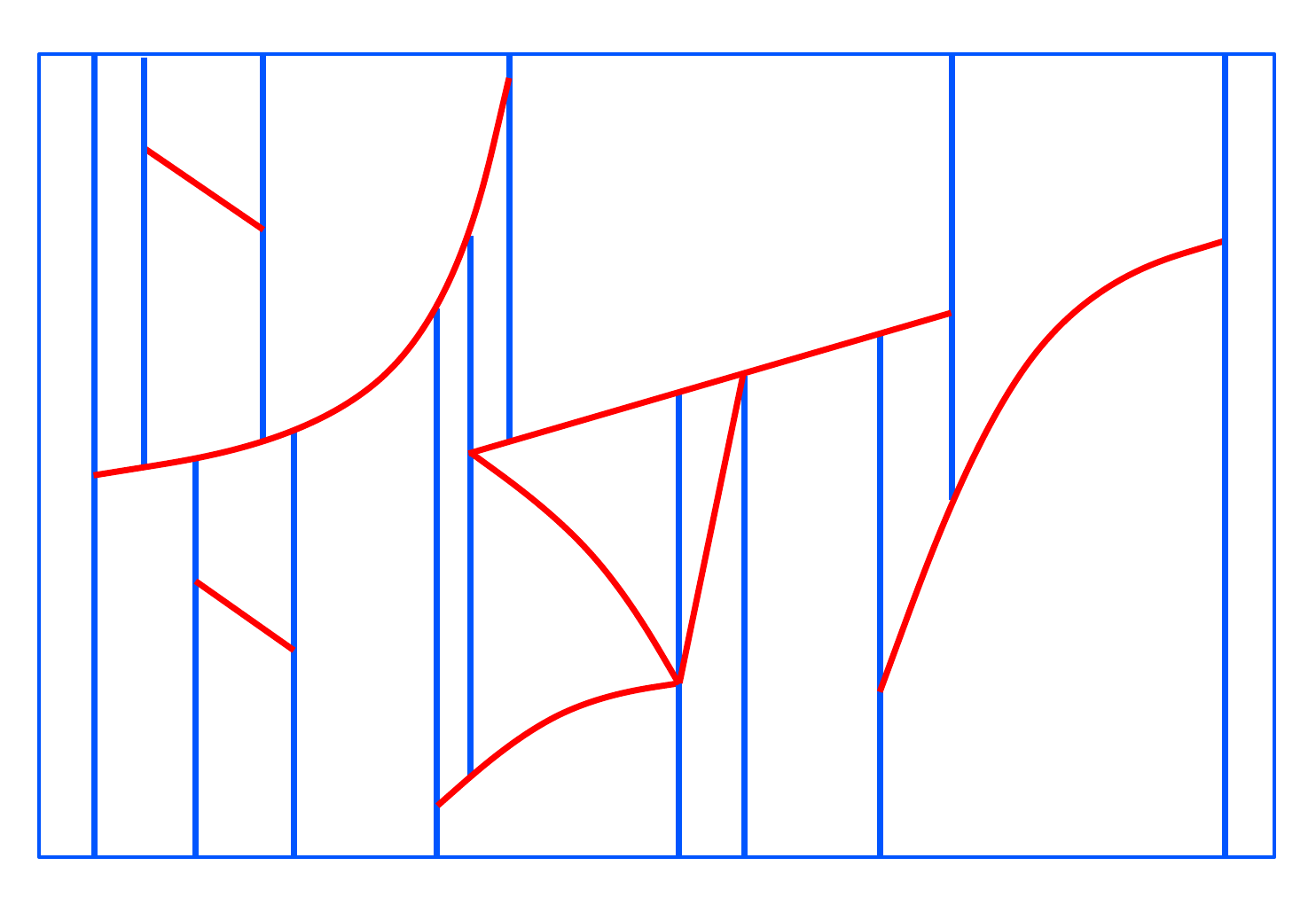}
  \caption[Trapezoidal Map Illustration]
   {\capStyle{
    The trapezoidal map \calTS for an arrangement \calAS.
    The edges of \calAS are in red while the vertical walls
    are in blue.}}
  \label{fig:trapezoidal_map_def}
\end{figure}
}{}

Let $S$ be a set of $n$ pairwise interior disjoint
$x$-monotone curves in general position,
i.e., no two distinct endpoints have the same $x$-coordinate and
no endpoint of one curve lies in the interior of another curve.
$S$ induces a planar subdivision
(or a planar arrangement) $\calA(S)$, which
is composed of vertices, and faces,
in addition to its $n$ edges.

The \emph{Trapezoidal Map} of an arrangement \calAS,
denoted by \calTS, is obtained by
extending vertical walls
from each endpoint upwards and downwards
until an input curve is reached
or the wall extends to infinity.
\begin{wrapfigure}{r}{0.32\textwidth}
\centering
  \vspace{-10pt}
  \includegraphics[width=0.32\textwidth]{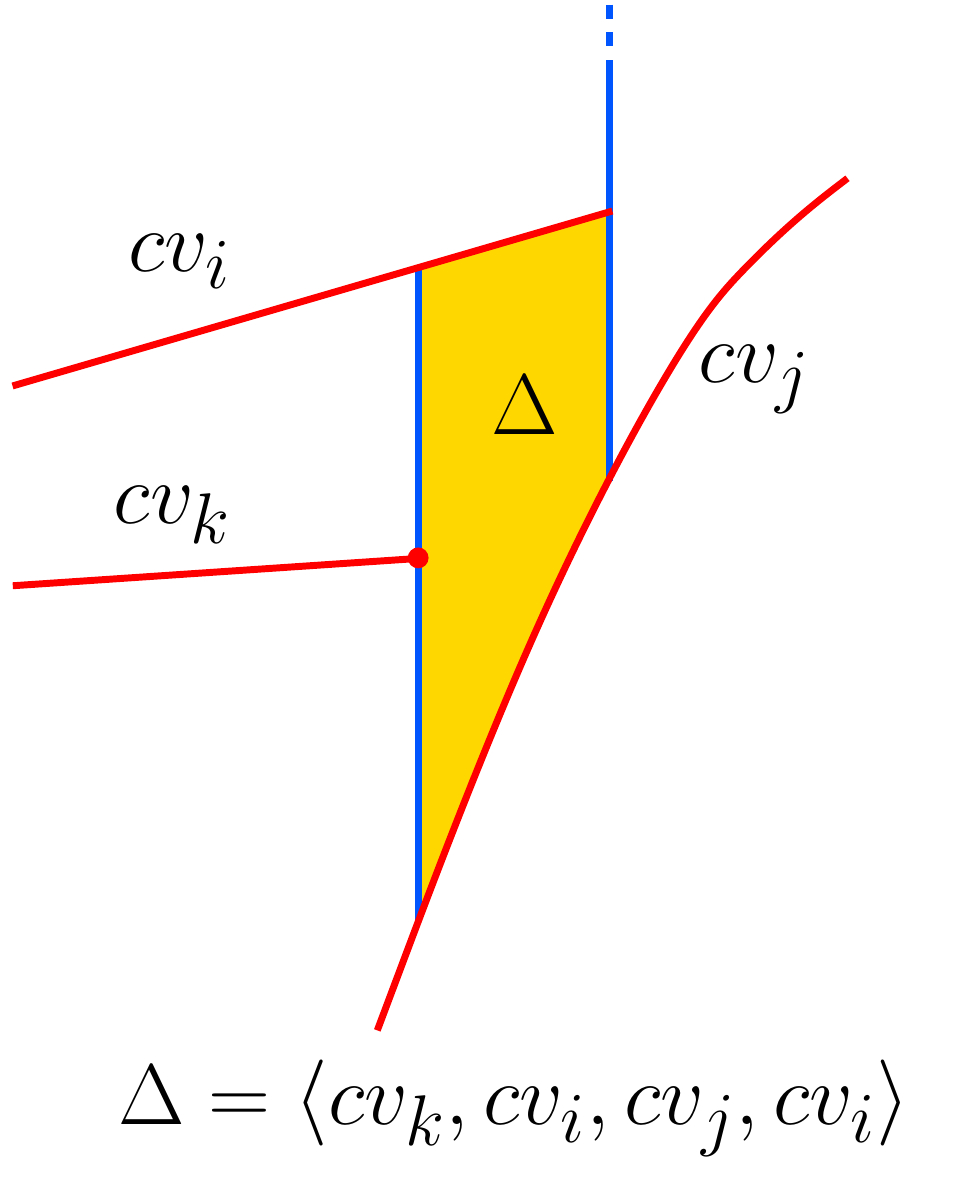}
  \vspace{-10pt}
  \label{fig:trapezoid}
  \vspace{-10pt}
\end{wrapfigure}
Each trapezoid $\Delta$ of \calTS is restricted by at most two
curves%
\footnote{We use the term {\it trapezoid} even when the side edges are
not linear segments.},
denoted by~\Bottom($\Delta$) and~\Top($\Delta$),
and also by either one or two vertical walls (trapezoid bases).
\Left($\Delta$), \Right($\Delta$) denote the curves
whose endpoints induce the left and right vertical walls, respectively.
Therefore, each trapezoid $\Delta$ in \calTS can be defined
by a unique quadruplet:
$\langle$\Left($\Delta$), \Right($\Delta$), \Bottom($\Delta$), \Top($\Delta$)$\rangle$,
as depicted in the figure to the right.
%
\arxivdcg{In a degenerate trapezoid, having only one vertical edge (a triangle),
the two non-vertical edges are adjacent.}{}
The trapezoidal map \calTS is unique and does not depend on the order of insertion.
As shown in~\cite{CG-alg-app}, \calTS of
an arrangement consisting of $n$
curves
has at most $3n + 1$ trapezoids and at most
$6n + 4$ vertices.
\arxivdcg{In other words, the trapezoidal map of an arrangement of linear size
has a linear size as well.}{}

\arxivdcg{
The trapezoids in~\calTS are neighbors if they share a vertical wall.
As we assume general position, each trapezoid has at most four neighboring
Each trapezoid has at most four neighboring
trapezoids: at most two along its left vertical edge,
and the same along its right vertical edge.
In order to avoid the general position assumption, a symbolic shear
transformation should be used.
This is done by employing
lexicographical comparison, that is,
comparing first by the $x$-coordinate and then by the $y$-coordinate.
This implies that two covertical points produce
a virtual trapezoid, which has zero width.

For simplicity of presentation, and w.l.o.g., 
the following figures contain horizontal line-segments.
However, our analysis is for general $x$-monotone curves.
}
{The trapezoids in~\calTS are neighbors if they share a vertical wall.
As we assume general position, each trapezoid has at most four neighboring
trapezoids: at most two along its left vertical edge,
and the same along its right vertical edge.
We remark that the general position assumption poses
no restriction on the algorithm as one
can use a symbolic shear transformation, i.e., by simply replacing comparisons
of $x$-coordinates by lexicographical $xy$-comparisons.

For simplicity of presentation, and w.l.o.g., 
the following figures contain horizontal line-segments.
}

\subsection{The Basic RIC Algorithm}
\label{subsec:background:basic}

We review the random incremental construction (RIC)
of a \PL structure, as introduced
in~\cite{Mulmuley1990_fast_planar_part_alg,Seidel1991_simp_fast_inc_rand_td}
and described in~\cite{CG-alg-app,CG-intro-rand-alg}.
Given an arrangement \calAS of $n$ pairwise interior disjoint $x$-monotone
curves, a random permutation of the curves is inserted incrementally,
constructing the trapezoidal map \calTS.
During the incremental construction,
an auxiliary search structure, a directed acyclic graph (DAG) \G,
is maintained.
The DAG \G has one root and many leaves, one for every trapezoid
in the trapezoidal map \calTS.
Every internal node is a binary decision node, representing either
an endpoint $p$ of an input curve, deciding whether a query point $q$ lies
to the left or to the right of the vertical line through $p$,
or an $x$-monotone curve $cv_i$, deciding whether the query point $q$ is above or below it.
When reaching a curve-node representing a curve $cv_i$, it is guaranteed that
the query point $q$ lies in the $x$-range of $cv_i$.
%
%
In addition, the trapezoids in the leaves of \G are interconnected,
that is, each trapezoid knows its (at most) four horizontal neighbors
(two to the left and two to the right).
\arxivdcg{
In particular, there are no common $x$-coordinates for two distinct
endpoints, since we use a symbolic shear transform.
For example, trapezoid $A$ in Figure~\ref{fig:background:dag_insert}(a),
has only two neighbors, namely $B$ and $C$, which are
its top-right and bottom-right neighbors, respectively.
Trapezoid $C$ has also two different neighbors,
which are trapezoid $A$ to its left and trapezoid $D$ to its right.
In the underlying data structure, each trapezoid maintains four
pointers to its potential neighbors.
}{}
For a simple example refer to Figure~\ref{fig:background:dag_insert}(a),
where the DAG is still a tree.

\arxivdcg{
\subsubsection{Querying}
\label{subsubsec:background:basic:querying}
%
%
%
In order to locate the trapezoid containing a query point $q$ one needs to
perform a search in the history DAG.
The search begins at the root of the DAG
and ends in a DAG leaf.
At each internal node a decision is made,
depending on whether the node represents an endpoint or a curve,
in order to find the next node in the path.
In other words, the number of comparisons used by one query equals
to the number of internal nodes along a search path to the
leaf representing the trapezoid that contains the query point.
Therefore, the query time is proportional to the length of the path.
}{}

\begin{figure} [h] 
  \centering
    \setlength{\subfigwidth}{0.51\textwidth}

  \subfigure[]{
        \hspace{-3mm}
        \includegraphics[width=\subfigwidth]{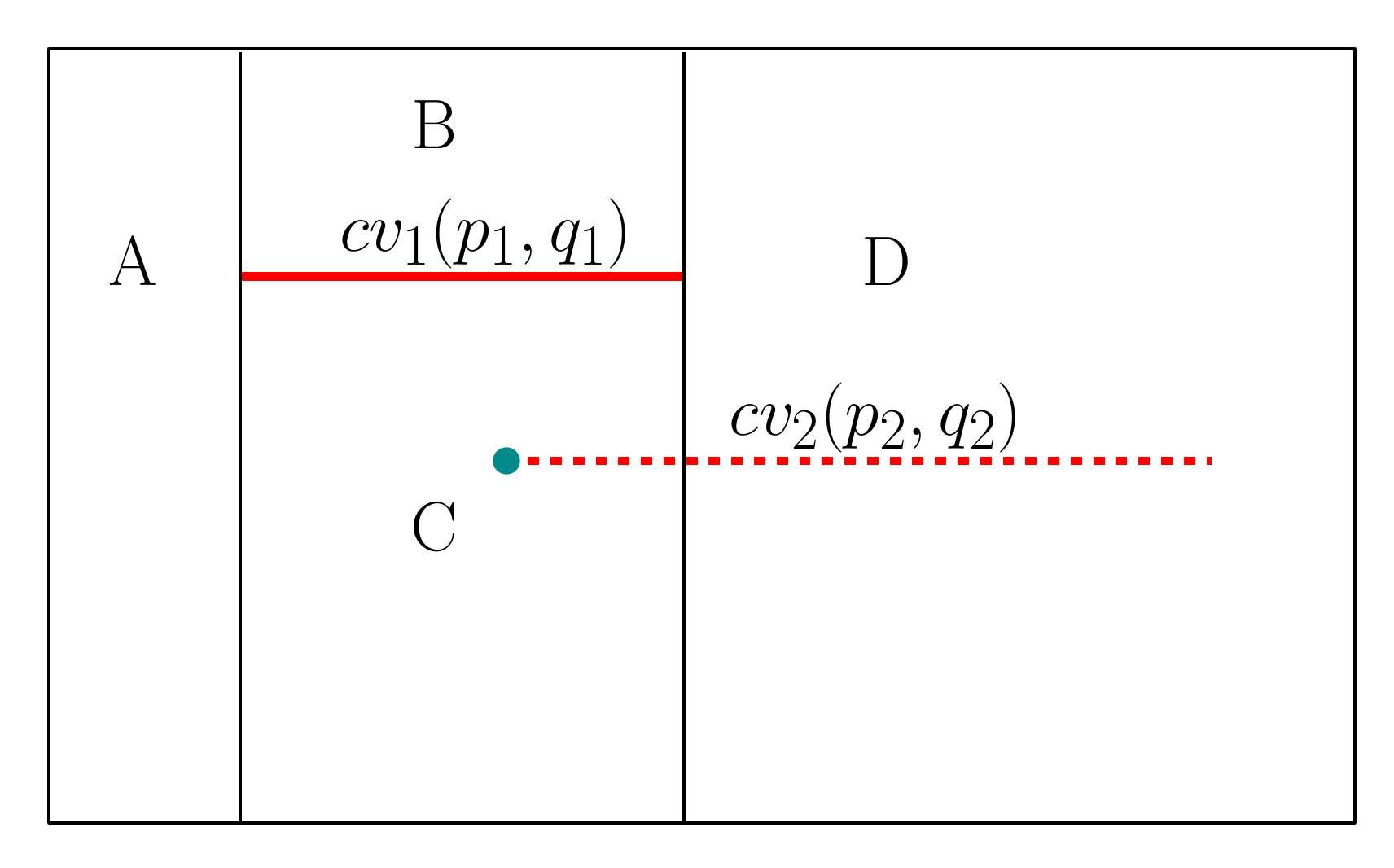}\hspace{-2mm}
        \includegraphics[width=\subfigwidth]{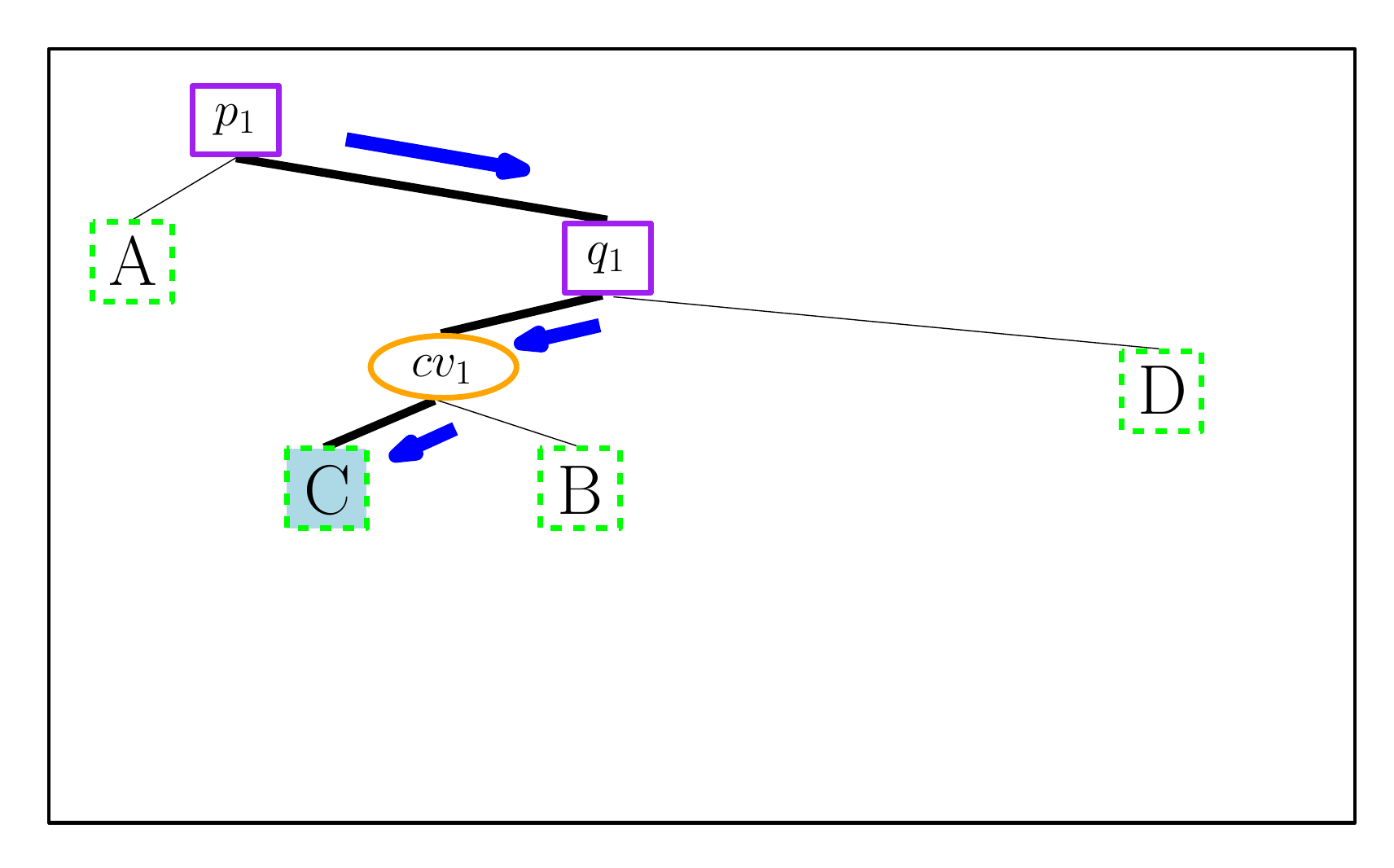}
  }

  \subfigure[]{
        \hspace{-3mm}
      \includegraphics[width=\subfigwidth]{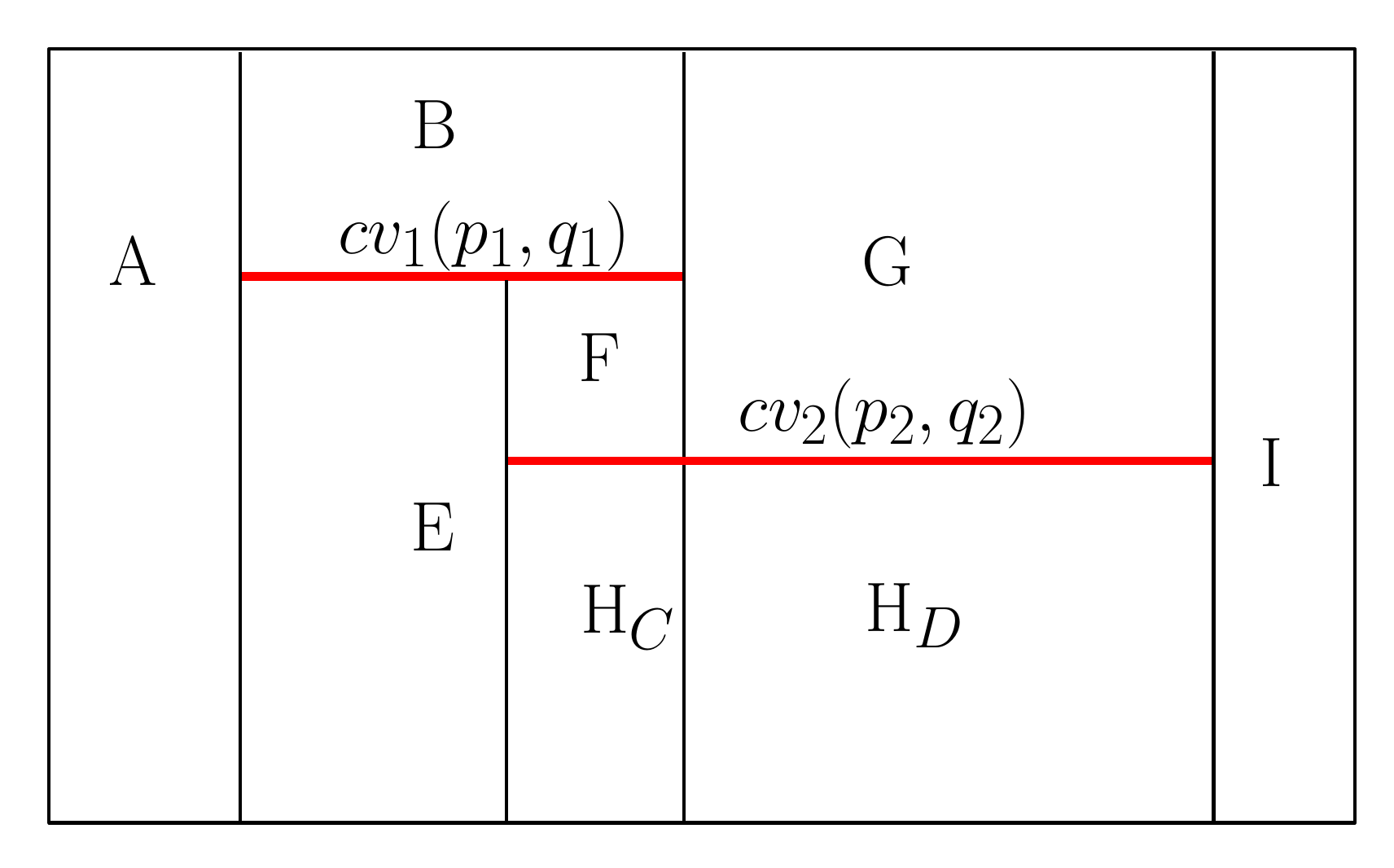}\hspace{-2mm}
      \includegraphics[width=\subfigwidth]{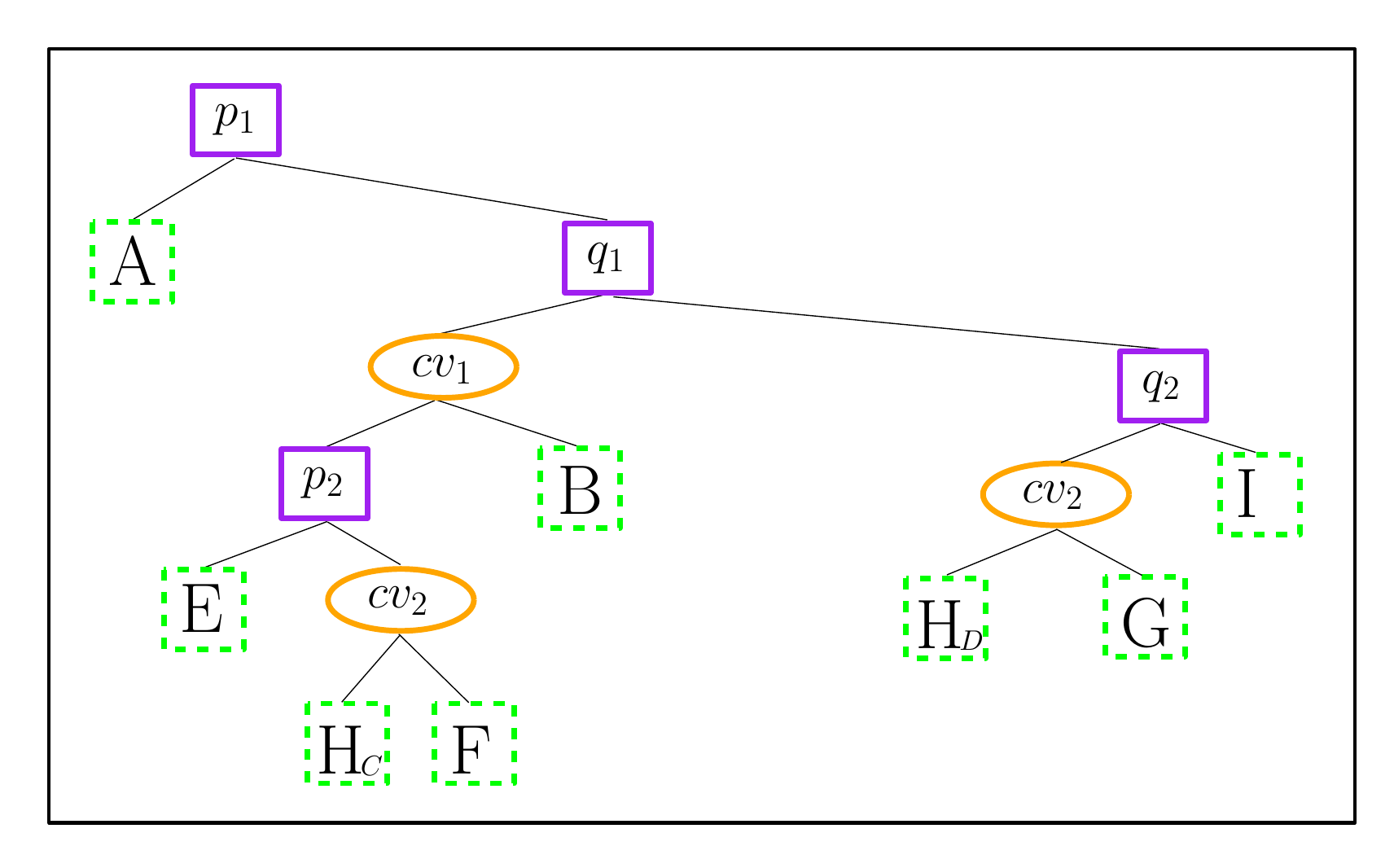}
  }

\subfigure[]{
    \hspace{-3mm}
      \includegraphics[width=\subfigwidth]{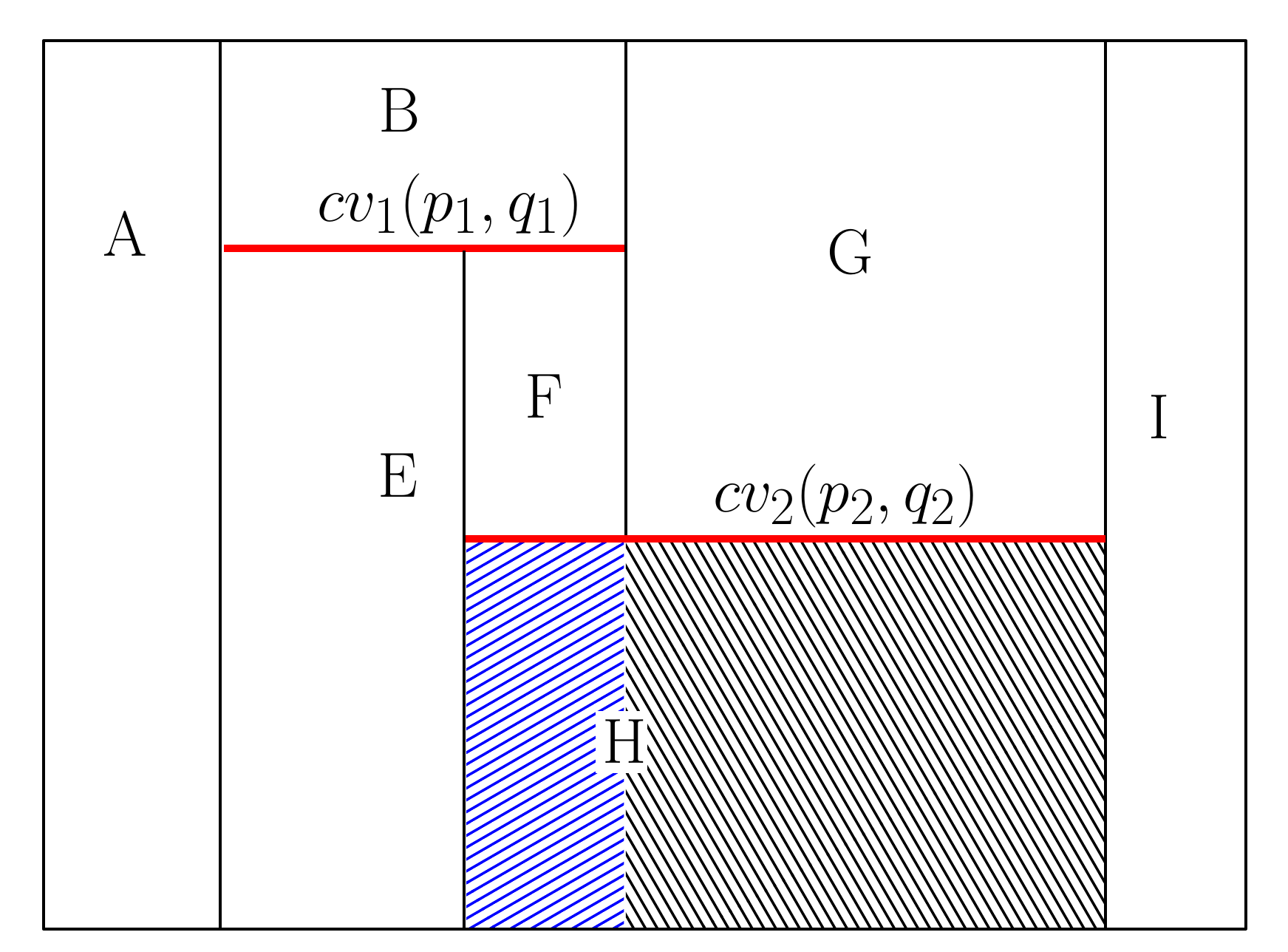}\hspace{-2mm}
      \includegraphics[width=\subfigwidth]{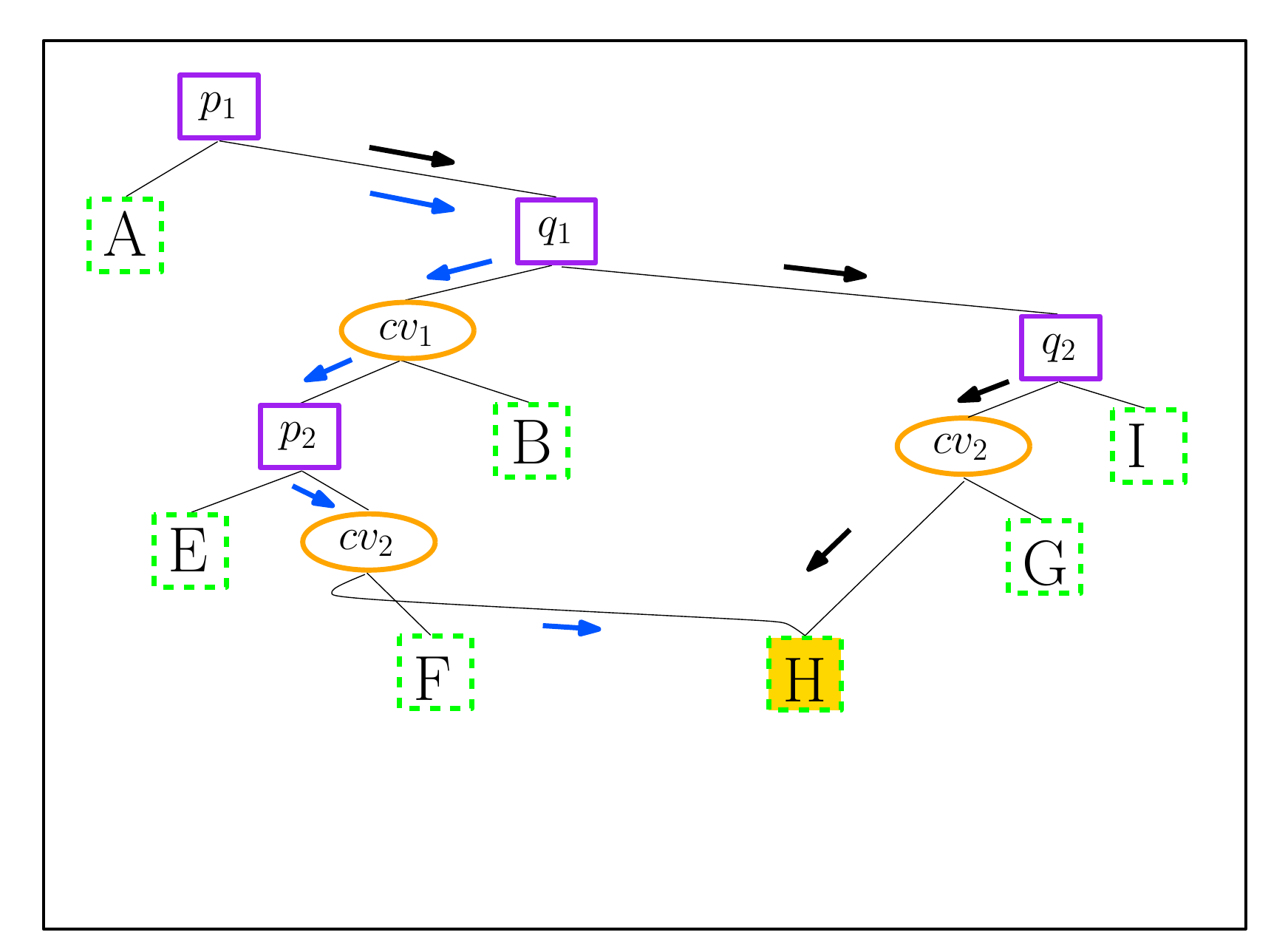}
  }

  \vspace{-10pt}
   \caption[Updating the Search Structure]{
   \capStyle{Updating the DAG with a second curve $cv_2(p_2,q_2)$.
      (a) Locating $p_2$ (the left endpoint of $cv_2$)
      in the DAG of the trapezoidal map for $cv_1(p_1,q_1)$.
      The query path is highlighted with blue arrows.
      (b) Due to the insertion of $cv_2$, trapezoids $C$ and $D$ are split
      into trapezoids ${E, F, H_C}$ and ${H_D, G, I}$, respectively.
      The obtained structure is in fact the Trapezoidal Search Tree~\TrpSrchTree for $cv_1, cv_2$.
      (c) Newly created trapezoids $H_C, H_D$ are merged into trapezoid~$H$,
      since $cv_2$ blocks the wall induced by~$q_1$.
      The resulting DAG contains two directed paths (marked) from the root to the leaf
      labeled~$H$, which are both valid search paths.
}}
  \label{fig:background:dag_insert}
\end{figure}

\subsubsection{Insertion}
\label{subsubsec:background:basic:insertion}
When a new $x$-monotone curve is inserted,
the trapezoid containing its left endpoint is located
by a search from root to leaf.
Then, using the connectivity information described above,
the trapezoids intersected by the curve are
gradually revealed and updated.
Merging new trapezoids, if needed,
takes time that is linear in the number of intersected trapezoids.
The merge turns the data structure into a DAG
with expected $O(n)$ size~\cite{Mulmuley1990_fast_planar_part_alg,Seidel1991_simp_fast_inc_rand_td}.
By skipping the merge step, one would obtain a binary tree,
known as the \emph{trapezoidal search tree}, having  expected
$O(n\log{n})$ size, as shown in Section~\ref{sec:relation_G_and_T}.
The whole insertion process is illustrated in Figure~\ref{fig:background:dag_insert}.
For an unlucky insertion order the size of the resulting data structure
may be quadratic, and the longest search path may be linear.
However, since the curves are inserted in a random order, one can expect $O(n)$ space,
$O(\log{n})$ query time, and $O(n\log{n})$ preprocessing time.
For proofs see~\cite{Mulmuley1990_fast_planar_part_alg,Seidel1991_simp_fast_inc_rand_td,CG-alg-app}.

As a result of the merge 
the search structure may contain more than one valid search
path from the root to a certain leaf,
as demonstrated in Figure~\ref{fig:background:dag_insert}(c).

\subsection{Previous Attempts at Guaranteeing Logarithmic Query Time}
\label{subsec:background:guaranteed}

The basic algorithm, described in Subsection~\ref{subsec:background:basic},
requires expected $O(n\log n)$ time and expected $O(n)$ space.
Moreover, it is not hard to see that the expected query time for
an arbitrary but fixed query point is $O(\log n)$.
However, \deberg et al.~\cite{CG-alg-app} showed that the
probability that the length of the longest search path~\lqpl is
larger than $3\lambda\ln(n+1)$ is rather small, e.g., for~$\lambda=20$ and~$n>4$
it is less than $1/4$.
A similar argument can be applied for the size~\calS of the constructed DAG~\cite{Kleinbort_master}.
This leads to the idea that one can guarantee a linear size data structure
with guaranteed logarithmic query time by simply re-constructing the data
structure with a new random insertion order until
it has the required properties.
Essentially \deberg et al.~\cite{CG-alg-app} show the following crucial lemma:

\begin{lemma}
\label{lemma:prelim:guaranteed:constant_rebuilds}
It is possible to choose suitable constants $c_1, c_2 > 0$ such that
the expected number of rebuilds
that are required to achieve
$\calS < c_1 n$ and $\lqpl < c_2 \log n$
is a small constant.
\end{lemma}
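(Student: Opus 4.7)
The plan is to combine the two tail bounds mentioned just before the lemma statement (one on $\lqpl$, one on $\calS$) with a union bound, and then observe that independent rebuilds with a uniformly bounded failure probability give a geometrically distributed number of attempts.

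First I would invoke the bound from \deberg et al., namely that for a sufficiently large constant $\lambda$ (say $\lambda=20$ and $n>4$, as mentioned in the text), the random permutation used in the RIC satisfies $\Pr[\lqpl > 3\lambda \ln(n+1)] < 1/4$. Setting $c_2 := 3\lambda / \ln 2$ yields $\Pr[\lqpl \geq c_2 \log n] < 1/4$. Next I would invoke the analogous statement about the size of the DAG (attributed in the text to \cite{Kleinbort_master}): there exists a constant $c_1 > 0$ such that $\Pr[\calS \geq c_1 n] < 1/4$. Both tail bounds are over the uniform random choice of insertion order.

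A union bound then gives
\[
  \Pr\bigl[\calS \geq c_1 n \text{ or } \lqpl \geq c_2 \log n\bigr] < 1/2,
\]
so a single run of the RIC produces a structure with the desired size and path-length bounds with probability at least $1/2$. Since each rebuild draws a fresh independent random permutation and the quantities $\calS, \lqpl$ depend only on that permutation, the outcomes of successive rebuilds are mutually independent. Hence the number $N$ of rebuilds until both bounds hold is stochastically dominated by a geometric random variable with success probability $1/2$, and
\[
  \E[N] \leq \sum_{k\geq 1} k \cdot (1/2)^{k} \cdot (1/2)^{-1} = 2,
\]
which is a small constant as claimed.

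I expect no real obstacle: the entire content of the lemma is packaged in the two tail bounds, which are cited from prior work, plus the independence of reshuffles. The only point worth being careful about is stating explicitly that we rerandomize the insertion order on each rebuild (so that the failure events are independent across rebuilds), and choosing the constants $c_1$ and $c_2$ consistently with the specific tail bounds that are being imported.
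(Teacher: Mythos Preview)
Your proposal is correct and matches the paper's approach: the paper does not give a formal proof of this lemma but attributes it to \deberg et al.\ and sketches, in the paragraph immediately preceding the lemma, exactly the argument you wrote out (tail bound on $\lqpl$ from~\cite{CG-alg-app}, analogous tail bound on $\calS$ from~\cite{Kleinbort_master}, then rebuild until both hold). One trivial slip: in your displayed expectation the factor $(1/2)^{-1}$ is spurious, since $\sum_{k\ge 1} k\,(1/2)^k = 2$ already; either way the expected number of rebuilds is a small constant.
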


Not taking the cost for the verification of~\lqpl and \calS
into account this would immediately lead to an algorithm that
in total still runs in expected $O(n\log n)$ time;
see also Section~\ref{sec:eff_ver_alg}.
However, while it is straightforward to keep track of~\calS,
an efficient verification of~\lqpl is not trivial at all.
Specifically, one should be aware of that~\lqpl
is not equal to the depth \depth of the DAG, i.e.,
the length of the longest DAG path.
Note that~\lqpl is determined only by the valid search paths and not by all paths.
This subtle difference, which is discussed in Section~\ref{sec:depth_vs_qlen},
caused some confusion in the past.
Thus, since the sketch of the verification algorithm in the last version of
\deberg et al.~\cite{CG-alg-app} requires $O(n\log^2 n)$ time,
until now no expected $O(n\log n)$ time algorithm
giving the above guarantees was known.

 \section{Depth vs. Maximum Query Path Length}
\label{sec:depth_vs_qlen}

The depth~\depth of the DAG
is an upper bound on~\lqpl, as the set of all possible
search paths is a subset of all paths in the DAG.
\depth can be made accessible in constant time,
by storing the depth of each leaf in the leaf itself, and by maintaining
the maximum depth in a separate variable.
The cost of maintaining the depth can be charged to new nodes,
since existing nodes never change their depth value.
Having said that, it is not clear how to efficiently access~\lqpl while retaining linear space,
since each leaf would have to store a non-constant number of values, i.e.,
one for each valid search path that reaches it.
In fact, the memory consumption would be equivalent to the trapezoidal search tree,
which is expected to be of size $O(n\log{n})$%
, as shown in Appendix~\ref{appendix_trapezoidal_search_tree}.
\arxivdcg{In practice there is a considerable difference between the size of the
resulting search tree and the size of the resulting DAG, as demonstrated in
Table~\ref{tbl:tree_dag_size}.
This indicates that performing merges is truly necessary.
\begin{table}
  \vspace{-10pt}
  \caption{
    The table displays the number of trapezoidal search tree nodes
    vs.\ number of DAG nodes for the
    same input with the same insertion order.
    The trapezoidal search tree is obtained by inserting the curves
    in the same order as in the construction of the DAG,
    however, while avoiding merges.
    The last column presents the ratio
    between the number of tree nodes and the number of
    DAG nodes.
    As expected, its values correspond to the function $\log n$,
    where $n$ is the number of edges.}
  \begin{center}
    \begin{tabular}{| c | c | c | c |}
    \hline
    \# Arrangement Edges & \# Tree nodes & \# DAG nodes & \ ratio\\
    \hline
    22 & 125 & 101 & 1.23\\
    \hline
    138 & 1263 & 681 & 1.85\\
    \hline
    285 & 3167 & 1492 & 2.12\\
    \hline
    1483 & 23511 & 8019 & 2.93\\
    \hline
    2977 & 51551 & 16330 & 3.15\\
    \hline
    14975 & 350629 & 84576 & 4.14\\
    \hline
    29973 & 759075 & 169355 & 4.48\\
    \hline    \end{tabular}
  \end{center}
  \label{tbl:tree_dag_size}
  \vspace{-20pt}
\end{table} 
}{}

We show that the depth~\depth of a given DAG
can be linear while its maximum query path length~\lqpl
is still logarithmic, that is, such a
DAG would trigger an unnecessary rebuild.
\arxivdcg{
It is thus questionable whether
we can still expect a constant number of
rebuilds when relying on~\depth.
Our experiments, reported in Subsection~\ref{subsec:depth_vs_qlen:experiments},
show that in practice the two values hardly differ,
indicating that it is sufficient to rely on~\depth.
However, a theoretical proof to consolidate this is still missing.
}
{
It is thus questionable whether
we can still expect a constant number of
rebuilds when relying on~\depth.
}
Figure~\ref{fig:depth_vs_qlen:wc_ratio:third_curve}
demonstrates the difference between the DAG
depth~\depth and the maximum query path length~\lqpl.

\begin{figure}[t] 
  \centering
  \setlength{\subfigwidth}{0.51\textwidth}
  \hspace{-3mm}
    \includegraphics[width=\subfigwidth]{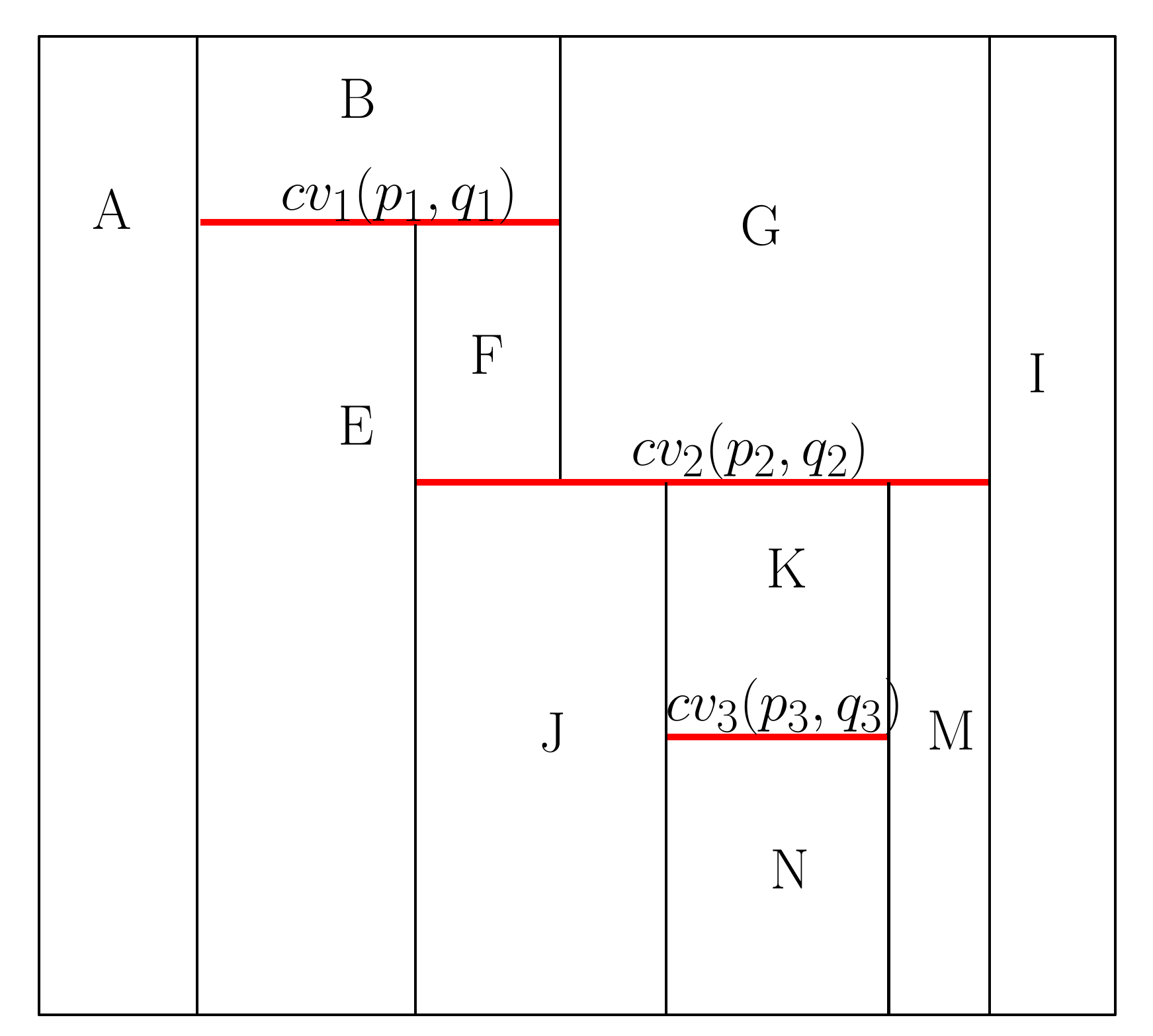} \hspace{-2mm}
    \includegraphics[width=\subfigwidth]{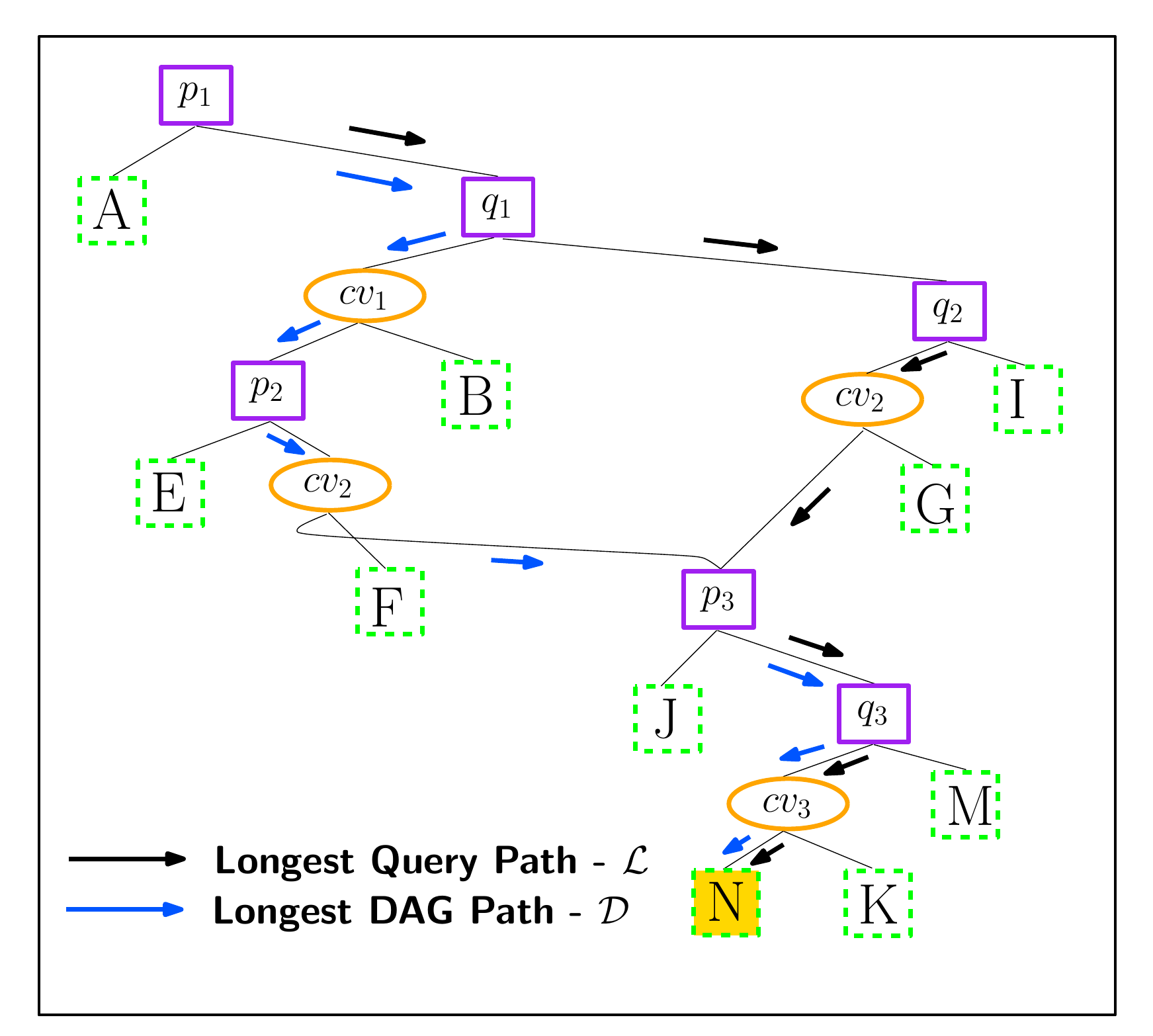}

  \caption[The Longest Search Path vs. the Longest DAG Path]
  {\capStyle{The trapezoidal map and corresponding DAG after inserting
      $cv_3(p_3, q_3)$ to the subdivision from Figure~\ref{fig:background:dag_insert}.
      The leaf representing trapezoid $H$ in the former
      structure is replaced with a subtree rooted at $p_3$.
      There are two directed paths starting at the root that reach
      trapezoid~$N$, marked in black and blue.
      The black
      represents the longest query path,
      and the blue represents the longest DAG path.
      The black path
      is the search path for all queries that end up in trapezoid $N$.
      The blue path is not a valid search path, since
        all points in $N$ are to the right of~$q_1$,
        that is, such a query
        would never visit the left child of $q_1$.
    This scenario occurs due to the merge
    that was part of the insertion of $cv_2$
    (see Figure~\ref{fig:background:dag_insert}(c))
    creating two different paths to a leaf,
    which became the inner node $p_3$ in the updated structure.
      }}
  \label{fig:depth_vs_qlen:wc_ratio:third_curve}
\end{figure}

\arxivdcg{
In Subsection~\ref{subsec:depth_vs_qlen:wc_ratio:sqrt_construction},
a construction in which the difference
between~\depth and~\lqpl is significant is presented.
However, the ratio between the two values can be even
larger, as shown in
Subsection~\ref{subsec:depth_vs_qlen:wc_ratio:wc_construction}.
In particular, the $\Omega(n/\log{n})$
ratio of~\depth and~\lqpl achieved in the latter
is tight. 
}
{}

%

\arxivdcg{
\subsection{Towards a Worst Case Bound} 
\label{subsec:depth_vs_qlen:wc_ratio:sqrt_construction}

\begin{wrapfigure}{r}{0.4\textwidth}
\centering
\includegraphics[width=0.4\textwidth]{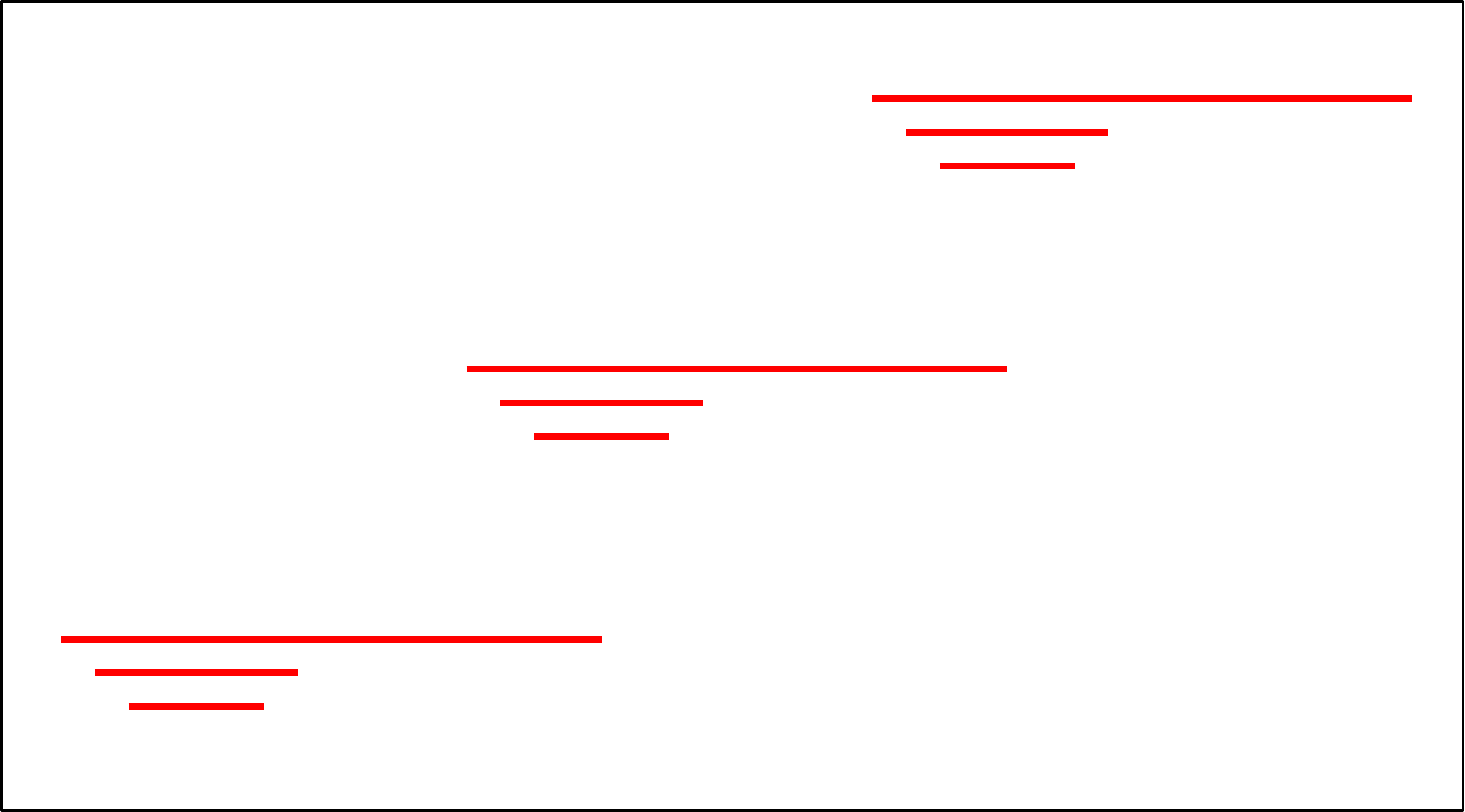}
\vspace{-0.6cm}
\end{wrapfigure}
The figure to the right
demonstrates a simple construction achieving
an $\Omega(\sqrt{n})$ lower bound for the
ratio between \depth and \lqpl.
Assuming that $n = k^2 \in \N$, the construction consists of $k$ blocks,
each containing $k$ horizontal segments.
The blocks are arranged as depicted in the figure.
Segments are inserted from top to bottom.
A block starts with a large segment at the top, which we call the {\it cover segment},
while the other segments successively shrink in size.
Now the next block is placed to the left and below the previous block.
\arxivdcg{Only the cover segment of this block extends below the previous block,
which causes a merge as illustrated in Figure~\ref{fig:depth_vs_qlen:wc_ratio:sqrt_detail}.}
{Only the cover segment of this block extends below the previous block,
which causes a merge.}
All $k=\sqrt{n}$ blocks are placed in this fashion.
This construction ensures that each newly inserted segment intersects the
trapezoid with the largest depth, which increases~\depth.
The largest depth of $\Omega(n)$ is finally achieved in the
trapezoid below the lowest segment.
However, the actual search path into this trapezoid
has only $O(\sqrt{n})$ length,
since
there are $O(\sqrt{n})$ blocks and
a query is able to skip an entire block using only
one comparison to the leftmost point of the cover segment.
Within the relevant block there are at most $O(\sqrt{n})$ possible comparisons.

\begin{figure}[h] 
  \centering
  \setlength{\subfigwidth}{0.45\textwidth}
  \begin{tabular}{cc}
    \subfigure[]{\includegraphics[width=\subfigwidth]{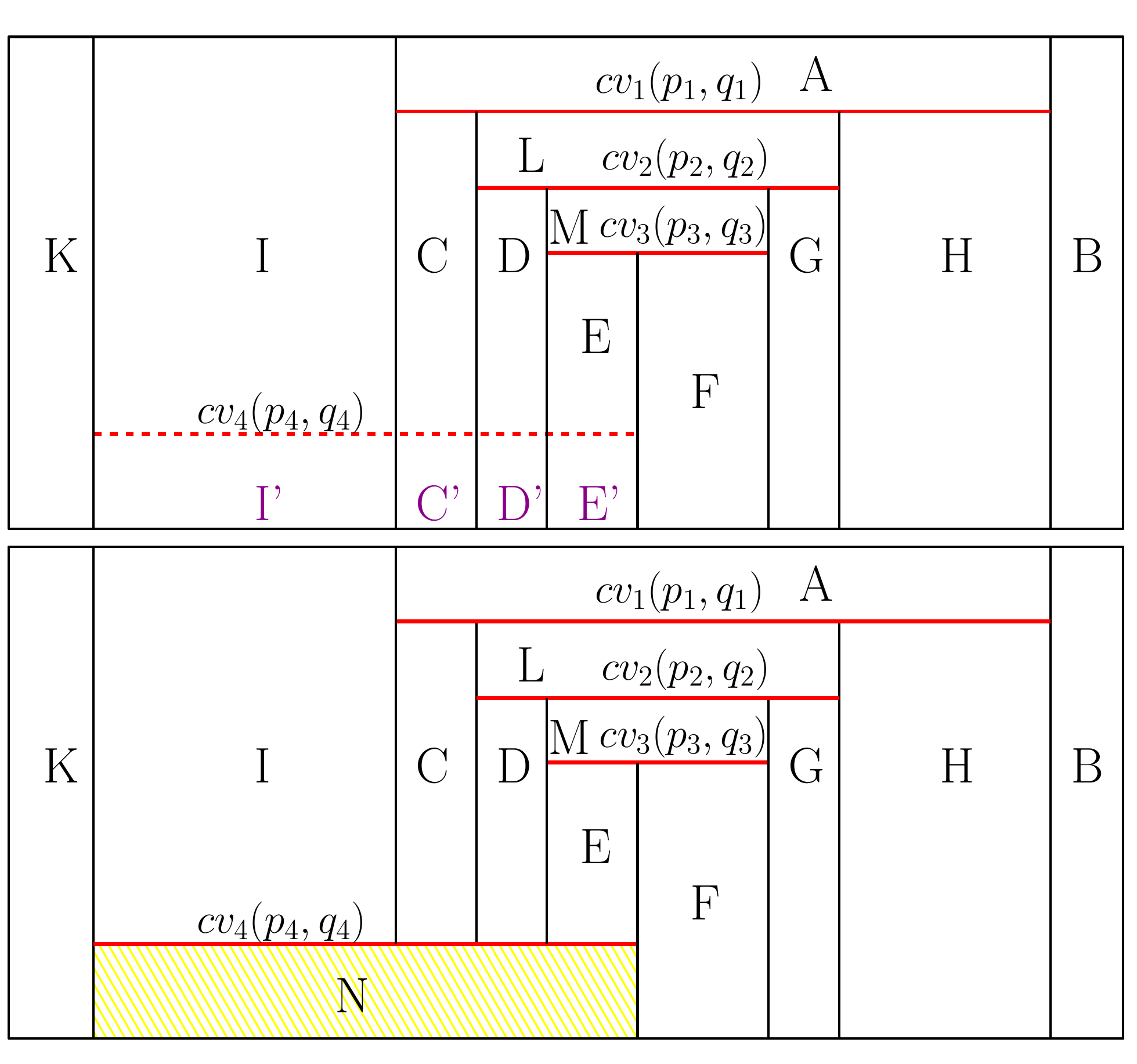}} &
    \subfigure[]{\includegraphics[width=\subfigwidth]{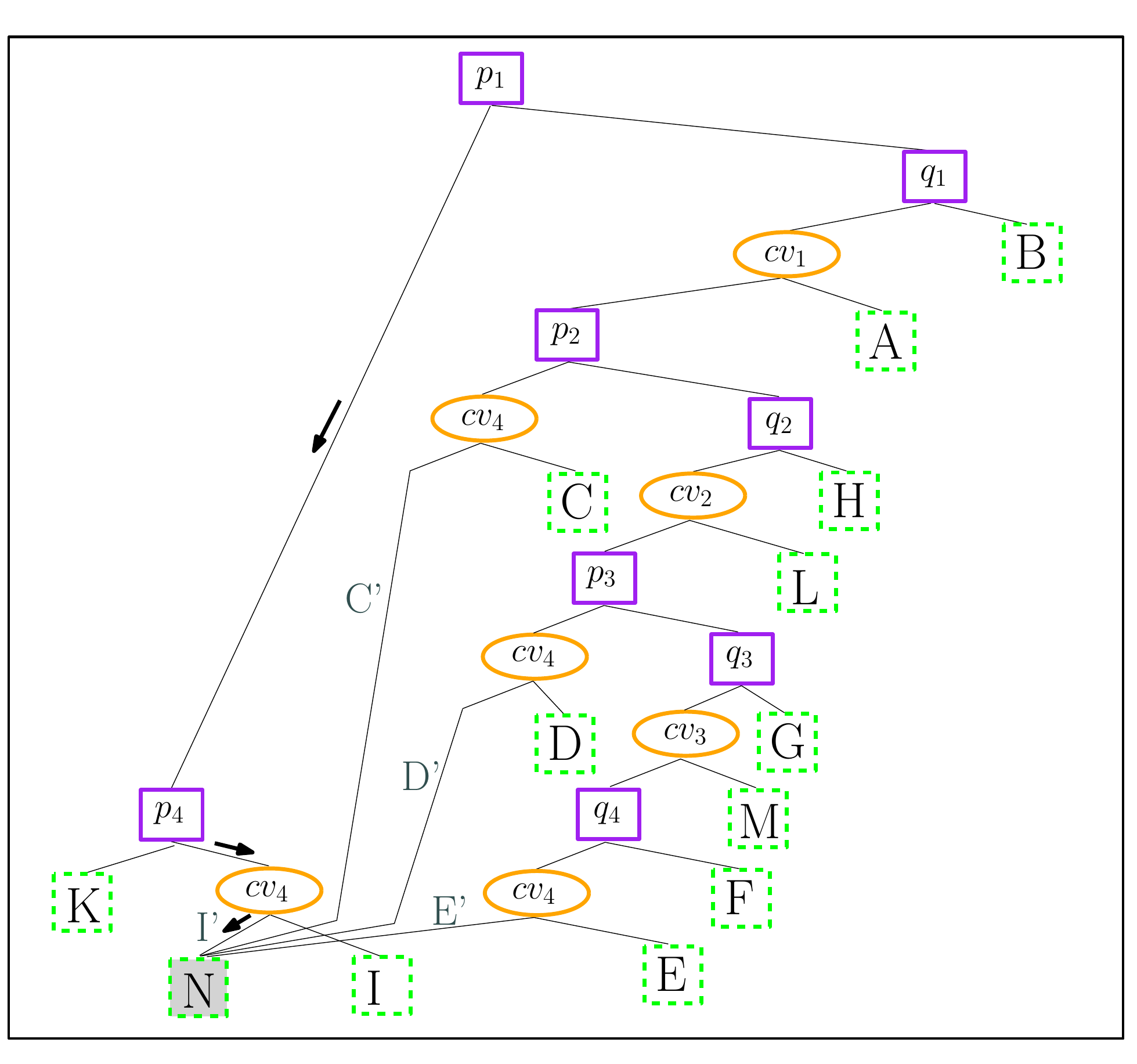}}
  \end{tabular}

  \caption[Demonstrating the $\Omega(\sqrt{n})$ Ratio Construction]
  {\capStyle{\subref{fig:various-a}
      The trapezoidal map after inserting $cv_4$.
      The map is displayed before and after the
      merge of $I'$, $C'$, $D'$, and $E'$ into $N$,
      in the top and bottom illustrations, respectively.
      \subref{fig:various-b} The DAG after merging.
      A query path to the region of $I'$ in $N$ will take 3 steps,
      while the depth of $N$ in this example is 11.
    }}
  \label{fig:depth_vs_qlen:wc_ratio:sqrt_detail}
\end{figure}
}{}

\arxivdcg{
\subsection{Worst Case Ratio} 
\label{subsec:depth_vs_qlen:wc_ratio:wc_construction}}

The following construction,
which uses a recursive scheme,
establishes the worst-case lower bound $\Omega(n/\log{n})$ for~\depth/\lqpl.
\arxivdcg{
Blocks are constructed and arranged in a similar fashion as in the
previous construction. However, this time we have $\log_2{n}$ blocks,
where block $i$ contains $n/2^i$ segments. Within each block
we then apply the same scheme recursively as depicted in
Figure~\ref{fig:depth_vs_qlen:wc_ratio:wc_construction}.
Again segments are inserted from top to bottom such that
the depth of $\Omega(n)$ is achieved in the trapezoid below the
lowest segment.
}
{
There are $\log_2{n}$ blocks,
where block $i$ contains $n/2^i$ segments.
Within each block
the same scheme is applied recursively, as depicted in
Figure~\ref{fig:depth_vs_qlen:wc_ratio:wc_construction}.
The segments are inserted from top to bottom such that
the depth of $\Omega(n)$ is achieved in the trapezoid below the
lowest segment.
}
The fact that the lengths of all search paths
\begin{figure}[t] 
 \centering
  \includegraphics[width=0.6\textwidth]{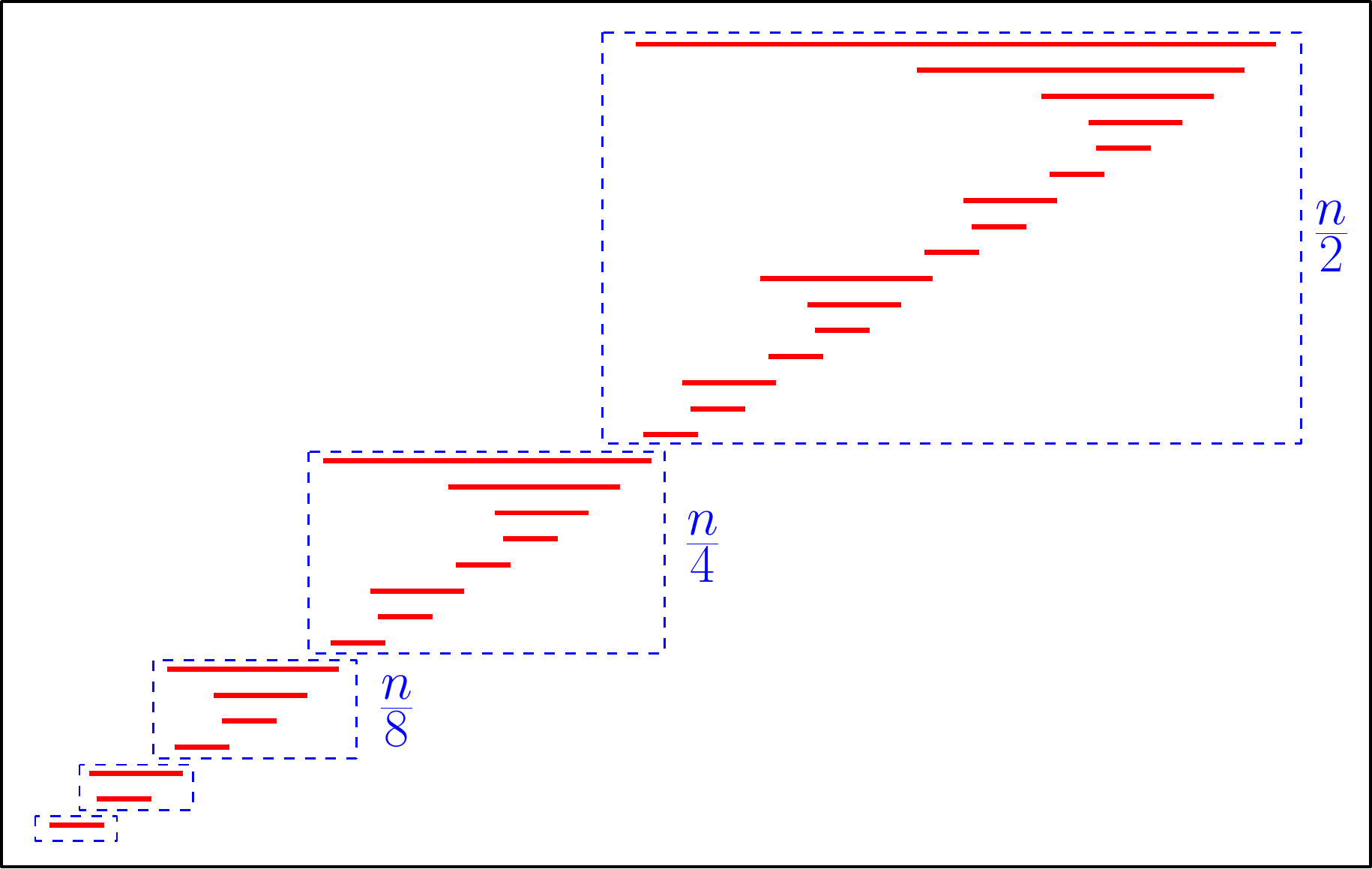}
   \caption[Demonstrating the $\Omega(n/\log{n})$ Ratio Construction]
  {\capStyle{
      A recursive construction establishing
      the $\Omega(n/\log{n})$ lower bound for the~\depth/\lqpl ratio.
    }}
  \label{fig:depth_vs_qlen:wc_ratio:wc_construction}
\end{figure}
are logarithmic
can be proven by the following argument.
By induction we assume that the longest search path within a block of
size $n/2^i$ is some constant times $(\log_2{n} - i)$.
Obviously this is true for a block containing only one segment.
Now, in order to reach block~$i$ containing~$n/2^i$ segments,
we require~$i-1$ comparisons to skip the~$i-1st$ preceding blocks.
Thus in total the search path is of logarithmic length.

\begin{theorem}
The $\Omega(n/\log{n})$ worst-case lower bound on~\depth/\lqpl is tight.
\end{theorem}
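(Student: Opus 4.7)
My plan is to establish the matching upper bound $\depth/\lqpl = O(n/\log n)$. I would work in the regime where the DAG has linear size, i.e., $\calS = O(n)$, which is the regime guaranteed by the rebuild mechanism (Lemma~\ref{lemma:prelim:guaranteed:constant_rebuilds}) and the only one relevant to the algorithm after a successful rebuild.

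The upper bound on $\depth$ is immediate: the length of any path in a DAG is at most the number of nodes it visits, so $\depth \leq \calS = O(n)$.

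For the matching lower bound on $\lqpl$, I would invoke an information-theoretic argument. The trapezoidal map $\calTS$ of $n$ pairwise interior-disjoint $x$-monotone curves in general position has exactly $3n+1$ trapezoids, as reviewed in Subsection~\ref{subsec:background:defs}, so the DAG has $m = \Theta(n)$ leaves. Every leaf is reachable from the root by at least one valid search path: take any point in the interior of the corresponding trapezoid as a query. Because each internal node of the DAG has out-degree two, the number of distinct root-initiated valid search paths of length at most $L$ is bounded by $2^{L+1} - 1$, and therefore so is the number of distinct leaves they can reach. Requiring all $m$ leaves to be reachable forces $\lqpl \geq \log_2 m - O(1) = \Omega(\log n)$.

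Combining the two bounds gives $\depth/\lqpl \leq O(n)/\Omega(\log n) = O(n/\log n)$, matching the $\Omega(n/\log n)$ lower bound from the construction and thereby proving tightness. The main subtlety I anticipate is in the information-theoretic step: because $\G$ is a DAG rather than a tree, valid search paths may share prefixes or even converge on a common leaf, so the argument must count distinct \emph{reachable leaves} rather than distinct paths, using the key fact that each of the $\Theta(n)$ trapezoids yields at least one valid path to its leaf. Once phrased this way the bound reduces to the standard depth-versus-leaves inequality for a binary decision process.
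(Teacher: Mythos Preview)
Your proof is essentially the paper's: bound $\depth = O(n)$ from above and $\lqpl = \Omega(\log n)$ from below, then divide. The information-theoretic step for $\lqpl$ is the same argument the paper gives (phrased there as ``the depth of the binary decision tree must be at least $\log_2 n$'').

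The one substantive difference is how you obtain $\depth = O(n)$. You invoke $\depth \le \calS = O(n)$, which forces you to restrict to linear-size DAGs and hence to a weaker statement than the theorem actually asserts. The paper instead uses a structural property of the history DAG: along any root-to-leaf path each segment can appear at most once (each insertion contributes only a constant number of nodes to any given path), so $\depth = O(n)$ holds \emph{unconditionally}, for every insertion order and regardless of whether $\calS$ happens to be linear or quadratic. Replacing your size-based bound with this observation removes the unnecessary assumption and yields the full theorem.
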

\begin{proof}
Obviously,~\depth of $O(n)$ is the maximal achievable depth, since
by construction each segment can only appear once
along {\em any} path in the DAG.
It remains to show that for any scenario with $n$ segments
there is no DAG for which~\lqpl
is smaller than $\Omega(\log{n})$.
Since there are $n$ segments, there are at least~$n$ different
trapezoids having these segments as their top boundary.
Let~$T$ be a decision tree of the optimal search structure
in the sense that its longest query path is the shortest possible.
Each path in the decision tree corresponds to a valid search path
in the DAG and vice versa.
The depth of~$T$ must be at least~$\log_2{n}$, since it is only
a binary tree.
We conclude that the worst-case ratio $\depth/\lqpl$ is
$\Theta(n/\log{n})$.
\myqed
\end{proof}

 \section{A Bijection between the Search Paths in the History DAG and in the Trapezoidal Search Tree}
\label{sec:relation_G_and_T}

Let $S$ be a set of $n$
pairwise interior disjoint $x$-monotone curves
inducing a planar subdivision.
The trapezoidal search tree~\TrpSrchTree
for $S$ is a full binary tree constructed as the DAG~\G using the same insertion order
while skipping the merge step.

The DAG~\G
has an expected linear size~\cite{Mulmuley1990_fast_planar_part_alg,Seidel1991_simp_fast_inc_rand_td}.
On the other hand, the trapezoidal search tree~\TrpSrchTree requires
$\Omega(n\log{n})$ memory for certain scenarios, as shown
in~\cite{SeidelA2000_wc_query_comp}.
The following lemma,
which seems to be folklore,
bounds the expected size of \TrpSrchTree.
For completeness, we give a proof in Appendix~\ref{appendix_trapezoidal_search_tree}.

\begin{lemma}
\label{lemma:tree_size}
Let $S$ be a set of $n$
pairwise interior disjoint
$x$-monotone curves
inducing a planar subdivision.
The expected number of leaves
in the trapezoidal search tree \TrpSrchTree,
which is constructed as the DAG but without merges,
is $O(n\log{n})$.
\end{lemma}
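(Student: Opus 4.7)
The plan is to bound the expected number of leaves by counting the net growth per insertion and then controlling each step via a combination of backward analysis and induction.

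The starting point is that $\mathbb{T}$ is a full binary tree built incrementally: when $cv_i$ is inserted, each of the $k_i$ tree-leaves it crosses at time $i-1$ is replaced by a subtree of constant size (two, three, or four new leaves, depending on whether $cv_i$'s endpoints lie in the crossed leaf). A straightforward case analysis yields a net growth of at most $k_i+2$ leaves per step, so
\[
|\mathbb{T}| \le 1 + \sum_{i=1}^n (k_i + 2),
\]
and it suffices to prove $\E\!\left[\sum_i k_i\right] = O(n \log n)$.

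To bound $\E[k_i]$, I would exploit the fact that merges in $\G$ occur only along vertical walls, so each real trapezoid $\Delta$ of \calTSiminus is partitioned into vertical strips in $\mathbb{T}_{i-1}$; since $cv_i$ is $x$-monotone, one can write $k_i = \sum_{\Delta \in C_i} s(\Delta)$, where $C_i$ is the set of real trapezoids of \calTSiminus crossed by $cv_i$ and $s(\Delta)$ counts the strips of $\Delta$ that $cv_i$ traverses. The classical Mulmuley-Seidel backward analysis applied to \calTSi gives $\E[|C_i|] = O(1)$. For the strip count, I would argue by induction on $i$: assuming $\E[|\mathbb{T}_j|] \le C\,j \log j$ for every $j < i$, and using that \calTSiminus has $\Theta(i)$ trapezoids, the average strip count per real trapezoid is $O(\log i)$ in expectation. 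Combining these two estimates would yield $\E[k_i] = O(\log i)$ and hence $\E[|\mathbb{T}|] = O(n \log n)$, closing the induction.

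The main obstacle is the final combination step, since the event ``$\Delta \in C_i$'' is correlated with the strip count $s(\Delta)$: trapezoids that have been around longer tend to accumulate more strips but may also have altered likelihoods of being intersected by a fresh random curve. I would address this either via a direct Clarkson-Shor bound on the quadruples of boundary elements defining tree-trapezoids---which encapsulates both the events ``$\Delta$ is a tree-leaf'' and ``$cv_i$ crosses $\Delta$'' simultaneously---or through an amortized argument that charges each newly created leaf at step $i$ to the root-to-leaf path in $\mathbb{T}_{i-1}$ at the moment of its birth, exploiting that the expected length of that path is $O(\log i)$ by the standard randomized incremental construction analysis.
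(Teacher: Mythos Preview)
Your overall strategy---bounding the per-step growth $k_i$ and summing---is sound in outline, and you correctly identify the crux: the correlation between ``$\Delta \in C_i$'' and the strip count $s(\Delta)$. Unfortunately, neither of your two proposed fixes closes this gap.

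For the Clarkson--Shor route, the problem is that a tree-trapezoid is \emph{not} a configuration in the Clarkson--Shor sense. Whether four curves (top, bottom, and the two curves whose endpoints give the left and right walls) actually produce a given tree-trapezoid depends on the \emph{order} in which they are inserted, not merely on which subset has been inserted; the framework requires that a configuration be present whenever its defining set is in the sample and its conflict set is not, independently of order. So you cannot invoke the usual $O(1)$-defining-set moment bound here. For the path-charging route, the $O(\log i)$ bound from the standard RIC analysis applies to a \emph{fixed} query point; the leaves destroyed at step~$i$ sit at positions determined jointly by $cv_i$ and by the random structure of $\mathbb{T}_{i-1}$, so the bound does not transfer. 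Even setting correlation aside, your inductive scheme has a constant-factor problem: you would get $\E[k_i] \le c\,c'\,C\log i$ for absolute constants $c,c'$ coming from the backward analysis and the strip-count average, and there is no reason $cc' \le 1$, so the induction on $C$ does not close.

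The paper sidesteps all of this by counting a different quantity. Rather than tracking leaf growth step by step, it observes that the number of tree-leaves is (up to additive constants) the number of vertical \emph{wall segments} in the final unmerged map, and it bounds those directly. For a fixed endpoint $p$, consider the vertical line $W$ through $p$; let $m$ be the number of curves crossing $W$. If the curve owning $p$ is the $i$th inserted among these $m$ (which is uniform), the wall at $p$ occupies one of the $i$ intervals induced by the $i-1$ earlier curves, and the expected number of the remaining $m-i$ curves that later subdivide this interval is $O((m-i)/i)$. Averaging over $i$ gives $O(\log m) = O(\log n)$ segments per endpoint, and there are $O(n)$ endpoints, yielding $O(n\log n)$ in total. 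This per-wall random-permutation argument avoids any need to control correlations between trapezoid survival and strip counts.
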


As we show next,
there is a bijection between all possible search paths in~\TrpSrchTree and those of~\G,
even though the two structures differ in size.
First, let us define here the notion of {\it  bouncing nodes}.
Suppose we query~\G with point~$q$.
Additionally, assume that while searching for~$q$ we maintain an interval of the $x$-values that
are still possible with respect to the decisions taken so far.
This {\em history interval} is updated at each decision node according to the
following scheme:
(i) if the node is a curve node then the history interval does not change
(ii) if the node is a point node whose $x$-coordinate is contained in the
current interval, then the interval is updated according to the position of $q$
(iii) if the node is a point node whose $x$-coordinate is not contained in the current interval
then the interval remains unchanged.
Such a point node that is not contained in the current history interval of the path is
named a {\bf bouncing-node for the corresponding path} in \G;
Figure~\ref{fig:relation:bouncing} gives an example of a bouncing node.

\begin{figure} [h] 
  \centering
  \includegraphics[width=0.7\textwidth]{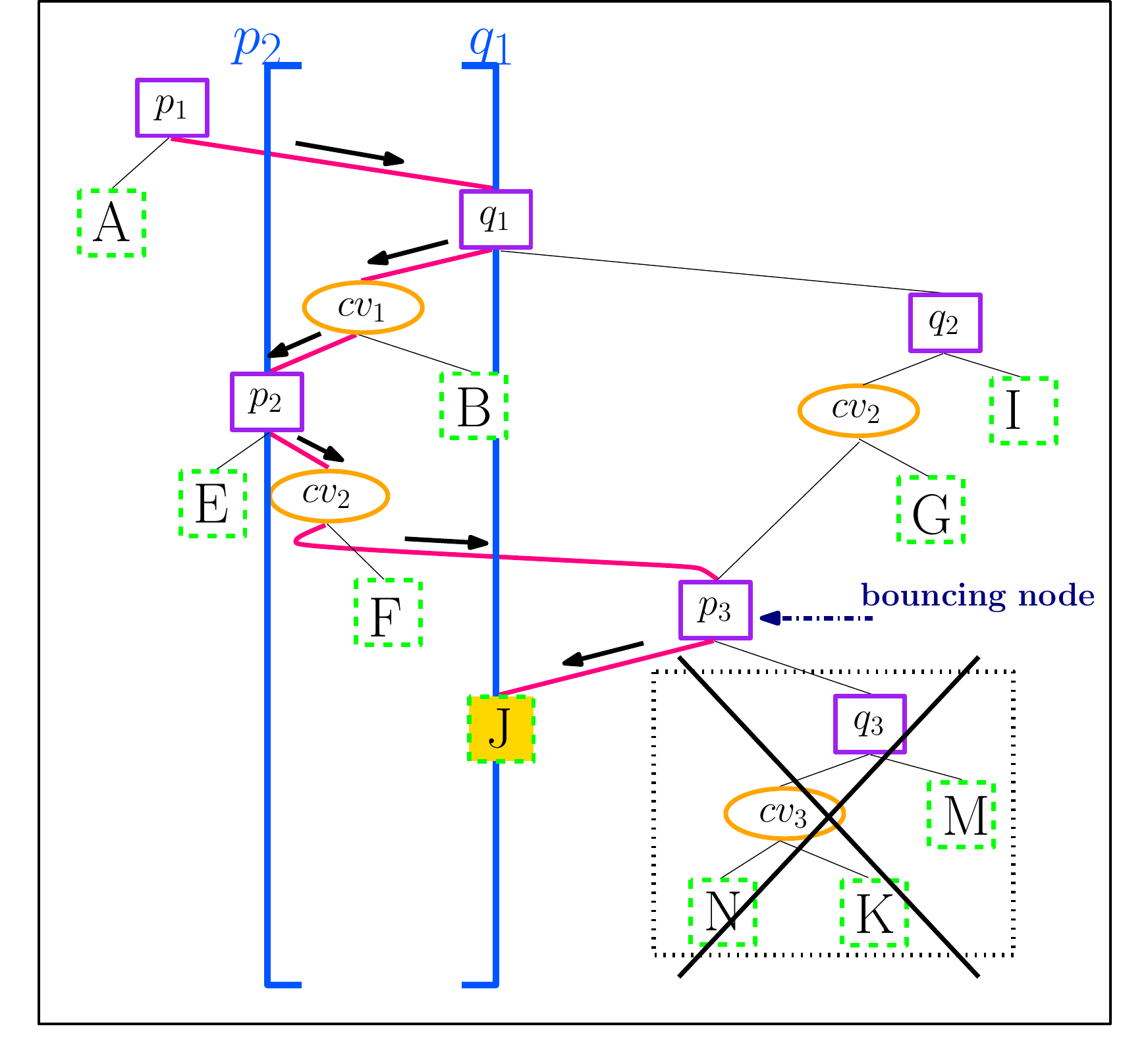}
  \caption{
    \capStyle{%
      The node $p_3$ is a {\bf bouncing node} for the path leading
      into the left part of trapezoid~$J$
      (see also Figure~\ref{fig:depth_vs_qlen:wc_ratio:third_curve}).
      The history interval when reaching node $p_3$ is $(p_2,q_1)$.
      The decision at $p_3$ is already predetermined by the history of the path:
      $q_1$ is to the left of $p_3$, the fact that the path descended to the left at
      node $q_1$ implies that it also descends to the left at
      $p_3$.
    }}
  \label{fig:relation:bouncing}
\end{figure}

The following proposition shows that each search path in~\G has a
corresponding path in~\TrpSrchTree (and vice versa),
and these two paths are identical up to additional bouncing nodes in the path in~\G.

\begin{proposition}
\label{prop:identical_search_paths_per_query}
Let $S$ be a set of $n$
pairwise interior disjoint $x$-monotone curves
inducing a planar subdivision.
Let \G and~\TrpSrchTree be the DAG and the trapezoidal search tree
created using the same permutation of the curves in $S$, respectively.
There exists a canonical bijection among all search paths in~\G and those of~$\TrpSrchTree$,
that is, for any query point $q$, the corresponding search paths for $q$
in~\G and~\TrpSrchTree are identical up to bouncing nodes.
\end{proposition}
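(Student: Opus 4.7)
My plan is to proceed by induction on the number of curves inserted so far, writing $\G$ and $\TrpSrchTree$ for the DAG and the trapezoidal search tree as they evolve during the incremental construction. The canonical bijection I will construct sends the (unique) search path for a query point $q$ in the current $\G$ to the (unique) search path for the same $q$ in the current $\TrpSrchTree$. The inductive invariant is: at every stage, for every query point $q$, these two paths are identical as sequences of comparisons once the bouncing nodes (which only appear on the $\G$-side) are removed. For the base case of a single curve no merge can occur, so $\G$ and $\TrpSrchTree$ are isomorphic trees and the invariant holds trivially with no bouncing nodes.

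For the inductive step, let $T_1,\dots,T_m$ be the leaves of $\G$ (before the current insertion) that the new curve intersects in left-to-right order. By the inductive hypothesis, each $T_j$ is represented in the current $\TrpSrchTree$ by one leaf per search path reaching $T_j$ in $\G$. The insertion replaces each such leaf by a small subtree whose internal nodes are drawn from the left endpoint, the right endpoint, and the curve itself, and whose leaves are the pieces into which the splitting cuts the trapezoid. In $\TrpSrchTree$ each copy of $T_j$ receives its own independent copy of this subtree; in $\G$ the subtrees of consecutive intersected trapezoids may share leaves because of merges but never share internal nodes. Fixing a query point $q$ that lands in some $T_j$, the prefix up to the new subtree is handled by the inductive hypothesis, and the decisions made inside the subtree in $\G$ and in $\TrpSrchTree$ are identical because both copies test the exact same geometric objects. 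Moreover, none of the newly created point-nodes is bouncing: the $x$-coordinate of the left endpoint (resp.\ the right endpoint) lies in the $x$-range of $T_1$ (resp.\ $T_m$), which is contained in the history interval at the subtree's root for any path that reaches it. If $q$ lands outside $\bigcup_j T_j$, the two paths are literally unchanged, and the invariant persists.

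The main obstacle is this last claim about bouncing: it rests on an auxiliary structural statement asserting that, along any search path in $\G$, the history interval at any prefix contains the $x$-range of the leaf reached by that prefix. I would prove this as a separate short induction on the construction of $\G$: a freshly created leaf lies inside its parent trapezoid, whose $x$-range is already in the history interval; and a merge only glues two horizontally adjacent trapezoids sharing top and bottom, so their union remains an $x$-interval and stays contained in the history interval at the last common non-bouncing ancestor of the two merged paths. Once this lemma is in hand, the inductive step above goes through, and applying it to the entire insertion sequence yields the canonical bijection claimed.
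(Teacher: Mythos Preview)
Your inductive framework matches the paper's, but the argument breaks at the auxiliary lemma. You claim that along any search path in $\G$ the history interval \emph{contains} the $x$-range of the DAG leaf reached. This is false, and in fact the containment goes the other way. Take the very example in Figure~\ref{fig:background:dag_insert}(c): the merged leaf $H$ has $x$-range $(p_2,q_2)$, yet the search path that enters $H$ through the former $H_C$ passes through $q_1$ and goes left, so its history interval is $(p_2,q_1)\subsetneq(p_2,q_2)$. Your merge argument (``the union \dots\ stays contained in the history interval at the last common non-bouncing ancestor'') talks about the interval \emph{before} the split node, but what matters for bouncing is the interval \emph{after} each individual path has committed to a side; at that point each path's interval covers only its own pre-merge piece, not the full merged leaf.

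Because the lemma fails, so does the claim that ``none of the newly created point-nodes is bouncing.'' When the new curve's left endpoint lies in the DAG leaf $T_1$ but to the left of the history interval of some path reaching $T_1$, that endpoint \emph{is} a bouncing node on that path in $\G$ and does not appear at all on the corresponding path in $\TrpSrchTree$ (the tree leaf there is not touched by the endpoint). This is exactly how bouncing nodes are born, and your induction does not account for them. The paper repairs this by carrying a stronger invariant: for each query $q$ it tracks both the DAG leaf $t_{i-1}$ and the tree leaf $t'_{i-1}\subseteq t_{i-1}$, and proves that the history interval along the $\G$-path equals the $x$-range of $t'_{i-1}$. The case analysis then splits on whether each endpoint of the new curve lies outside $t_{i-1}$, inside $t'_{i-1}$, or in $t_{i-1}\setminus t'_{i-1}$; the last case is precisely where a bouncing node is added to the $\G$-path while the $\TrpSrchTree$-path is unchanged. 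Your proof needs this three-way distinction (or an equivalent) to go through.
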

\begin{proof}

Let~$q$ be a query point and~$t$ and~$t'$ be the leaf trapezoids containing~$q$ 
in~$\G$ and~\TrpSrchTree, respectively. Obviously, the top and bottom curves
of~$t$ and~$t'$ are identical and~$t$ covers~$t'$ since merges are only allowed
in~\G.
Suppose that while searching for~$q$ we maintain the history-interval of possible~$x$-values.
At the end of the search in~\TrpSrchTree this interval is obviously identical
to the~$x$-range of~$t'$.
We show by induction the bijection between the two search paths for~$q$ and in
fact we also show that the history intervals maintained while searching in~\G
and~\TrpSrchTree are identical, i.e., the history interval that is eventually
obtained by the search in~\G is identical to the~$x$-interval of~$t'$.

Let~$\G_{i}$, $\TrpSrchTree_{i}$ denote the DAG and the trapezoidal search tree
after the first~$i$ curves were inserted, respectively.
We denote by~$t_{i}$ and~$t_{i}'$ the trapezoids containing the query point~$q$
in~$\G_{i}$ and~$\TrpSrchTree_{i}$, respectively.
Let~$(a_{i}, b_{i})$ and~$(a'_{i}, b'_{i})$ denote the~$x$-intervals 
of~$t_i$ in~$\G_i$ and~$t'_i$ in~$\TrpSrchTree_i$, respectively.
The base case is trivial since~$\G_1 = \TrpSrchTree_1$.
Now suppose that the statement holds for~$i-1$.
We show that it holds for~$i$ as well.
The~$i$th curve~$cv_i(p_i,q_i)$ is now inserted into both~$\G_{i-1}$ and~$\TrpSrchTree_{i-1}$.
The basic argument is as follows. For both endpoints of~$cv_i$ there are essentially
three cases:
(i) the point is outside $t_{i-1}$ and $t'_{i-1}$: the point has no effect on both paths.
(ii) the point is inside $t_{i-1}$ and $t'_{i-1}$: the point shows up as a normal node in both paths
and the (identical) history intervals are updated accordingly.
(iii) the point is inside $t_{i-1}$ but not in $t'_{i-1}$: the point has no effect on the search path
in $T_i$ while it may show up on the search path in~$\G_i$, but only as a bouncing node, i.e., the
history interval remains unchanged.
Figure~\ref{fig:induction_all} shows the 15 possible positions to insert $cv_i$ with respect to
$t_{i-1}$ and $t'_{i-1}$.

\begin{figure}[h]
    \centering
    \includegraphics[width=0.6\textwidth]{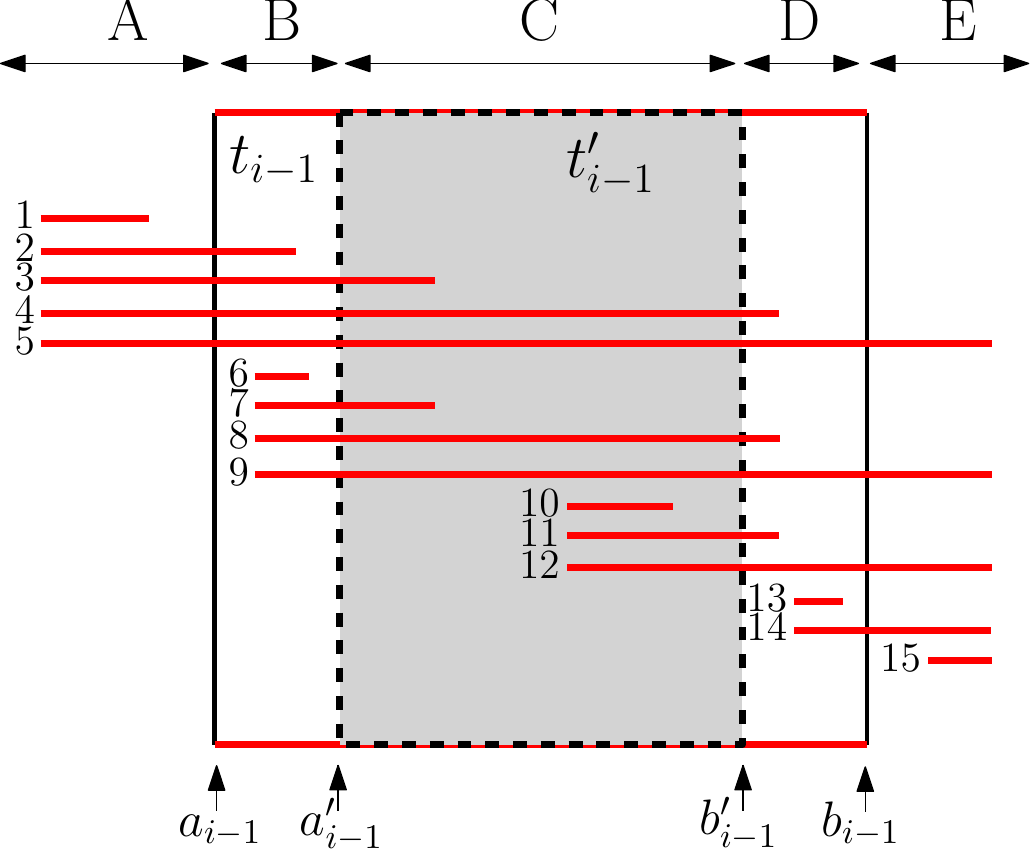}
     \caption[Optional Next-Step in the Induction]{
       \capStyle{Possible positions for $cv_i$ in relation to trapezoid $t_{i-1}$ in $\G_{i-1}$, which covers trapezoid $t'_{i-1}$ in $\TrpSrchTree_{i-1}$. }
       \label{fig:induction_all}}
\end{figure}

As an example we discuss position 13, while the full and
rather straightforward case analysis
is given in Appendix~\ref{appendix_bijection}.
In this case $p_i$ as well as $q_i$
are inside $t_{i-1}$ but to the right of $t'_{i-1}$.
The search path for $q$ in $T_i$ remains unchanged since $t'_{i-1}$ is not destroyed
whereas the path in $\G_i$ changes as $t_{i-1}$ is destroyed.
However, the only change is the addition of $p_i$, which is a bouncing node for that path
since it is not contained in the history-interval, i.e.,
the $x$-range remains unchanged since $t'_{i-1} = t'_i$.
Notice that, in this particular case, the right end point $q_i$ does not even appear as
a bouncing node since it is shadowed by $p_i$.

\myqed
\end{proof}


\begin{lemma} \label{lemma:charging_edges}
Every edge $e'\in \TrpSrchTree$ can be associated to precisely one
sequence of edges in the corresponding DAG \G.
\end{lemma}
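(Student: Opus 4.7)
My plan is to leverage Proposition~\ref{prop:identical_search_paths_per_query} and define the desired association canonically by a descent procedure inside~\G. First I would fix the correspondence between nodes of \TrpSrchTree{} and~\G: each node $u'$ of \TrpSrchTree{} maps to the node $u$ of~\G{} that is created at the same insertion step. This map is in general many-to-one, since a subtree attached to a merged \G-leaf is shared among the several separate subtrees that the merge represents in \TrpSrchTree.

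Given an edge $e'=(u',v')\in\TrpSrchTree$ with \G-counterparts $u$ and $v$, I would define the associated sequence by descending in~\G{} from~$u$. Take the child of~$u$ in the direction specified by the decision at~$u'$ that leads to~$v'$, and call it~$c$. If $c=v$, the sequence is the single edge $(u,v)$. Otherwise, by Proposition~\ref{prop:identical_search_paths_per_query}, the first non-bouncing node encountered after~$u$ along the \G-descent must coincide with~$v$, and so $c$ is necessarily a bouncing node whose decision is dictated by the current history interval. I would then iterate: follow the predetermined direction at~$c$, append the traversed edge to the sequence, and continue in the same manner until reaching~$v$.

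To argue that this sequence is well-defined, observe that since \TrpSrchTree{} is a tree, the path from the root to~$u'$ is unique; hence every search path that traverses~$e'$ reaches~$u'$, and therefore~$v'$, with the same history interval. Consequently, at every bouncing node encountered along the \G-descent from~$u$ to~$v$, the predetermined direction is identical for every query using~$e'$. The resulting sequence therefore depends only on~$e'$ and on~\G, not on any particular query point. The main obstacle I anticipate is verifying rigorously that every intermediate node of the descent is indeed a bouncing (endpoint) node, rather than a genuine curve-node decision; this ultimately follows from Proposition~\ref{prop:identical_search_paths_per_query}, since any additional internal nodes appearing on a \G-search path but absent from the parallel \TrpSrchTree-search path must be exactly the bouncing nodes guaranteed by that proposition.
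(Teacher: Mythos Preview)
Your proposal is correct and follows essentially the same approach as the paper: the key observation in both is that since \TrpSrchTree{} is a tree, the root-to-$u'$ path is unique, so the history interval upon reaching~$e'$ is identical for every query using that edge, which predetermines all bouncing-node decisions along the corresponding descent in~\G. The paper's own proof is considerably terser---just two sentences stating exactly this---whereas you spell out the descent construction and the node correspondence explicitly.
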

\begin{proof}
Since \TrpSrchTree is a tree, all search paths in \TrpSrchTree that use $e'$
are identical up to that point.
Thus, the decisions taken at intermediate bouncing nodes
while following the corresponding path in \G are predetermined due
to their common history.
\myqed
\end{proof}

\noindent Hence, in the following we say that $e'\in \TrpSrchTree$ accumulates
bouncing nodes, namely the bouncing nodes on its corresponding
subpath in $\G$.


\begin{observation} \label{obs:bouncing_nodes_on_leaf_edges}
Let $\G_{i-1}$, $\TrpSrchTree_{i-1}$ be as defined above.
Only edges in $\TrpSrchTree_{i-1}$ that currently end in
leaves may accumulate additional bouncing nodes due to
the insertion of the $i$th curve.
More precisely, let $t'_{i-1}$ be the trapezoid in which the
leaf edge $e'$ ends and let $t_{i-1}$ be the trapezoid in $\G_{i-1}$
that covers $t'_{i-1}$, as illustrated in Figure~\ref{fig:induction_all}.
The edge $e'$ may only accumulate additional bouncing nodes iff
$t_{i-1}$ is destroyed.
\end{observation}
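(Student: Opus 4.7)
My plan is to transfer the statement from $\TrpSrchTree$ to $\G$ via Lemma~\ref{lemma:charging_edges}, and then exploit the very restricted way in which $\G_{i-1}$ is modified when the $i$th curve is inserted. Recall the update rule: when $cv_i$ is inserted, only leaves of $\G_{i-1}$ corresponding to trapezoids intersected (destroyed) by $cv_i$ are touched; each such leaf is replaced by a small subtree of new endpoint-nodes and a curve-node for $cv_i$, possibly followed by merging of sibling leaves created in this very insertion. All internal nodes of $\G_{i-1}$, and all edges between them, persist unchanged in $\G_i$.

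This immediately handles internal edges of $\TrpSrchTree_{i-1}$. If $e'$ is an internal edge, its lower endpoint in $\TrpSrchTree_{i-1}$ is an internal node, and via the bijection of Proposition~\ref{prop:identical_search_paths_per_query} (together with Lemma~\ref{lemma:charging_edges}) the associated subpath in $\G_{i-1}$ ends at an internal node of $\G_{i-1}$. Since such nodes are never modified, the subpath is identical in $\G_{i-1}$ and $\G_i$, so no new nodes --- bouncing or otherwise --- can be added. This establishes that only leaf edges are candidates for accumulating bouncing nodes.

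For a leaf edge $e' \in \TrpSrchTree_{i-1}$ whose $\TrpSrchTree$-endpoint $t'_{i-1}$ is covered by $t_{i-1}$ in $\G_{i-1}$, I would prove the \emph{iff} in two steps. If $t_{i-1}$ is not destroyed, then its leaf in $\G_{i-1}$ is not modified, so the subpath associated to $e'$ is unchanged in $\G_i$ and no new bouncing node can appear; this is the \emph{only if} direction. Conversely, if $t_{i-1}$ is destroyed, its leaf is replaced by a subtree of new decision nodes that extends the subpath, and whenever a new endpoint-node lies outside the current history-interval of $e'$ --- which, by the case analysis around Figure~\ref{fig:induction_all} (for instance position~13), does occur --- it is, by definition, a bouncing node for every search path through $e'$. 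The main subtlety, and the only point that really needs care, is the invariant that the subpath of $\G_{i-1}$ associated to $e'$ ends at a leaf of $\G_{i-1}$ precisely when $e'$ is a leaf edge of $\TrpSrchTree_{i-1}$; this I would maintain inductively on the insertion order, using the fact that a newly created $\TrpSrchTree$-leaf corresponds to a newly created $\G$-leaf, and that both are demoted to internal nodes together in the next insertion that destroys them.
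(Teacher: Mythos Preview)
The paper states this as an observation without a separate proof; it is treated as immediate from the surrounding material (the insertion mechanics of Subsection~\ref{subsubsec:background:basic:insertion}, Proposition~\ref{prop:identical_search_paths_per_query}, and Lemma~\ref{lemma:charging_edges}). Your argument is correct and simply makes that implicit reasoning explicit: since an insertion only replaces leaves of $\G_{i-1}$ and never touches existing internal nodes or the edges among them, a subpath in $\G_{i-1}$ that already terminates at an internal node cannot be extended, and the subpath associated to a leaf edge $e'$ terminates at the leaf $t_{i-1}$, which is modified exactly when $t_{i-1}$ is destroyed. The inductive invariant you isolate---that the $\G$-subpath of $e'$ ends at a leaf of $\G$ iff $e'$ is currently a leaf edge of $\TrpSrchTree$---is indeed the one point that deserves a sentence of justification, and your sketch of how it is maintained (new $\TrpSrchTree$-leaves and new $\G$-leaves are created together and destroyed together) is sound.
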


\begin{definition}
\label{def:critical_edge}
An edge of \TrpSrchTree that at some intermediate step of the construction
was a leaf edge, that is, ended in a trapezoid,
is named a {\em critical edge}.
\end{definition}

\noindent
Note the direct correspondence between a critical edge and its trapezoid.
An edge remains a leaf edge until its trapezoid is destroyed,
in which case the trapezoid is replaced by an internal node
of \TrpSrchTree.


\begin{observation} \label{obs:max_bounce_per_curve}
The insertion of a single curve may incur at most two additional
bouncing nodes for a
search path in $\G_i$.
\end{observation}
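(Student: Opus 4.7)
The plan is to trace precisely how the search path for a fixed query point $q$ differs between $\G_{i-1}$ and $\G_i$, and to show that the two endpoints $p_i$ and $q_i$ of the inserted curve $cv_i(p_i,q_i)$ are the only candidates to become new bouncing nodes on that path. Since a bouncing node must, by definition, be a point-node (a curve-node can never bounce), this immediately caps the number of new bouncing nodes at two.

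First I would recall the structural effect of a single insertion step. Inserting $cv_i$ leaves every internal node of $\G_{i-1}$ untouched; it merely replaces those leaves whose trapezoids are crossed by $cv_i$ by small ``insertion subtrees'' whose internal nodes are drawn from the set $\{p_i,q_i,cv_i\}$. In particular, the only new point-nodes that appear anywhere in $\G_i$ are $p_i$ and $q_i$. Let $t_{i-1}$ be the leaf trapezoid of $\G_{i-1}$ that contains $q$. If $t_{i-1}$ is not crossed by $cv_i$ then the search path is unchanged and there is nothing to prove; otherwise the search path in $\G_i$ decomposes naturally as a \emph{prefix} that coincides with the $\G_{i-1}$ path from the root down to the position previously occupied by $t_{i-1}$, followed by a \emph{suffix} that descends through the single subtree that now replaces $t_{i-1}$.

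Second, I would analyze the prefix and the suffix separately. On the prefix, the sequence of visited internal nodes and the decisions taken at each of them are identical in $\G_{i-1}$ and $\G_i$, so the history-interval evolves identically along both paths; consequently every node on the prefix keeps its bouncing status, and no ``new'' bouncing node can be accumulated there. On the suffix, every internal node belongs to $\{p_i,q_i,cv_i\}$; the curve-node $cv_i$ is disqualified from being a bouncing node, so the suffix contributes at most the two point-nodes $p_i$ and $q_i$ as potential new bouncing nodes, proving the claim.

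The only mild bookkeeping subtlety, and the closest thing to an obstacle, is that because of earlier merges the leaf $t_{i-1}$ may in fact be reached in $\G_{i-1}$ by several distinct DAG paths; however, the subtree substituted for $t_{i-1}$ is a single shared gadget, so the bound of two additional bouncing nodes holds uniformly regardless of which incoming edge the current search path used. Beyond this, the observation is a direct corollary of the fact that an insertion introduces exactly the three potential new path-nodes $p_i,q_i,cv_i$, and of the decomposition of search paths into bouncing and non-bouncing nodes already established in Proposition~\ref{prop:identical_search_paths_per_query}.
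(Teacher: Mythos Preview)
Your argument is correct and is precisely the natural justification: the prefix of the search path is untouched, and the suffix lives entirely inside the replacement gadget for the destroyed leaf $t_{i-1}$, whose only internal point-nodes are $p_i$ and $q_i$. The paper states this as a bare observation with no proof; the reasoning you give is the one implicit in the case analysis of Proposition~\ref{prop:identical_search_paths_per_query} (Appendix~\ref{appendix_bijection}), where in every one of the fifteen cases the nodes added to the search path for $q$ are drawn from $\{p_i,q_i,cv_i\}$ and at most two of them are point-nodes.
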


 \section{Efficient Construction Algorithms for Static Settings}
\label{sec:eff_ver_alg}

Given a set $S$ of $n$
pairwise interior disjoint $x$-monotone curves
inducing a planar subdivision, we seek to devise
an efficient construction algorithm for static settings,
when all input curves are given in advance,
which results in a linear-size DAG and
logarithmic query time in the worst case.
Theorem~\ref{thrm:gnrl_preproc_time} gives resource bounds on such
an algorithm based on the availability of an efficient verification
algorithm for \lqpl.

\begin{definition}
Let $f(n)$ denote the time it
takes to verify that, in a linear size DAG constructed over
 a set of~$n$
pairwise interior disjoint $x$-monotone curves,~\lqpl is bounded by $c \log n$
for a constant~$c$.
\end{definition}

\begin{theorem}
\label{thrm:gnrl_preproc_time}
Let $S$ be a set of $n$
pairwise interior disjoint $x$-monotone curves
inducing a planar subdivision.
A \PL data structure for $S$, which has~$O(n)$
size and~$O(\log{n})$ query time in the worst case,
can be built in $O(n\log{n} + f(n))$ expected time, where
$f(n)$ is as defined above.
\end{theorem}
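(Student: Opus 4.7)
The plan is to combine the basic RIC construction with the rebuild strategy outlined in Subsection~\ref{subsec:background:guaranteed}: repeatedly run the RIC on a fresh random permutation of $S$ until the resulting DAG $\G$ satisfies both $\calS < c_1 n$ and $\lqpl < c_2 \log n$, where $c_1, c_2$ are the constants from Lemma~\ref{lemma:prelim:guaranteed:constant_rebuilds}. When the predicates hold, $\G$ is returned as the final search structure; by definition of the predicates it has $O(n)$ size and $O(\log n)$ worst-case query time.

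First I would describe one attempt of the algorithm. A random permutation of $S$ is drawn and the RIC of Subsection~\ref{subsec:background:basic} is executed. Because $\calS$ is the number of nodes in $\G$, it can be tracked incrementally at $O(1)$ cost per new node; the construction can even be aborted as soon as this counter exceeds $c_1 n$, capping the construction cost of any single attempt at $O(n \log n)$ in expectation (the standard RIC analysis). Once construction completes with $\calS < c_1 n$, the algorithm invokes the verification routine whose running time is $f(n)$ by definition. If $\lqpl < c_2 \log n$ the attempt succeeds; otherwise the DAG is discarded and a new attempt begins with a fresh independent permutation.

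Next I would bound the total expected work. Let $N$ denote the (random) number of attempts and let $X_i$ denote the running time of the $i$th attempt, so the total time is $\sum_{i=1}^{N} X_i$. The attempts use mutually independent permutations, and each $X_i$ has expectation $O(n \log n + f(n))$: the RIC contributes expected $O(n \log n)$ and the verification contributes at most $f(n)$ deterministically (construction is aborted earlier if the size predicate fails). Lemma~\ref{lemma:prelim:guaranteed:constant_rebuilds} gives $\E[N] = O(1)$, and $N$ is a stopping time with respect to the sequence of attempts. Applying Wald's identity to the i.i.d.\ sequence $\{X_i\}$ yields
\[
\E\!\left[\sum_{i=1}^{N} X_i\right] = \E[N]\cdot \E[X_1] = O(1)\cdot O(n \log n + f(n)) = O(n \log n + f(n)),
\]
which is the claimed bound.

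The only delicate step is the last one: one cannot naively multiply expectations when both the per-trial cost and the stopping time are random. The reason Wald's identity applies cleanly here is that successive attempts are fully independent (each draws a fresh permutation and discards the old DAG) and the decision to stop depends only on the outcome of the current and previous attempts, not on future ones. I expect this to be the one place where the argument must be spelled out carefully; everything else follows directly from Lemma~\ref{lemma:prelim:guaranteed:constant_rebuilds}, the definition of $f(n)$, and the classical RIC running time from~\cite{Mulmuley1990_fast_planar_part_alg,Seidel1991_simp_fast_inc_rand_td,CG-alg-app}.
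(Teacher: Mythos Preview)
Your proof is correct and follows the same approach as the paper: run the RIC, monitor \calS on the fly, verify \lqpl afterwards at cost $f(n)$, and rebuild with a fresh permutation if either test fails, invoking Lemma~\ref{lemma:prelim:guaranteed:constant_rebuilds} to bound the expected number of attempts by a constant. You are in fact more careful than the paper, which simply multiplies the constant expected number of rebuilds by the expected per-attempt cost without justification; your explicit appeal to Wald's identity (independent attempts, $N$ a stopping time, finite $\E[X_1]$) fills exactly that gap.
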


\begin{proof}
The construction of a DAG with some random insertion order takes expected
$O(n\log n)$ time. The linear size can be verified trivially on the fly
(as discussed in Subsection~\ref{subsec:background:guaranteed}).
After the construction an algorithm, requiring~$f(n)$ time, that verifies
that the maximum query path length~\lqpl is logarithmic is used.
The verification of the size~\calS and
the maximum query path length~\lqpl may trigger a rebuild with a new
random insertion order.
However, according to Lemma~\ref{lemma:prelim:guaranteed:constant_rebuilds}
one can expect
only a constant number of rebuilds.
Thus, the overall expected running time remains~$O(n\log{n} + f(n))$.
\myqed
\end{proof}

The next two subsections describe
two efficient verification
algorithms for~\lqpl that can be used by the
general construction algorithm.
The first one uses the existing search structure and
has expected~$O(n\log{n})$ running time.
The second algorithm is less straightforward to apply
but has worst-case~$O(n\log n)$ running time.

\subsection{An Expected $O(n\log {n})$ Verification Algorithm}
\label{subsec:eff_ver_alg:rec}

The following algorithm verifies that the
maximum query path length~\lqpl in the search structure
is bounded by $c_{\lqpl}\log(n)$, where $c_{\lqpl}$ is some properly chosen constant
according to~\cite[Sec~6.4]{CG-alg-app}.
%
The algorithm is recursive, starting at the root it descends
towards the leaves and explores all possible search paths and
discards all other paths that are geometrically unrealizable.
To do so, each recursion call maintains the history-interval of $x$-values that
are still possible with respect to the decisions taken so far.

The algorithm starts at the root with the maximal interval,
i.e., $[-\infty,+\infty]$.
At each node there are three possible cases:
(i) the recursion reaches a curve node, it splits for the upper and
lower path while the interval remains unchanged;
(ii) the node is a point node whose $x$-coordinate is contained in the
current interval $I$, the recursion splits to the left and the right side
with updated intervals, i.e., $I$ is split at the $x$-coordinate of the node;
(iii) the node is a point node whose $x$-coordinate is {\it not} contained in $I$
(bouncing node for this path),
the recursion does not split and continues to the proper child only with $I$
unchanged.
Figure~\ref{fig:eff_ver_alg:rec_alg} illustrates a partial run of the algorithm.

\begin{figure}[h]
  \centering
  \setlength{\spacewidth}{0pt}
  \setlength{\subfigwidth}{0.34\textwidth}
    \hspace{-2mm}\subfigure[]{\includegraphics[width=\subfigwidth]{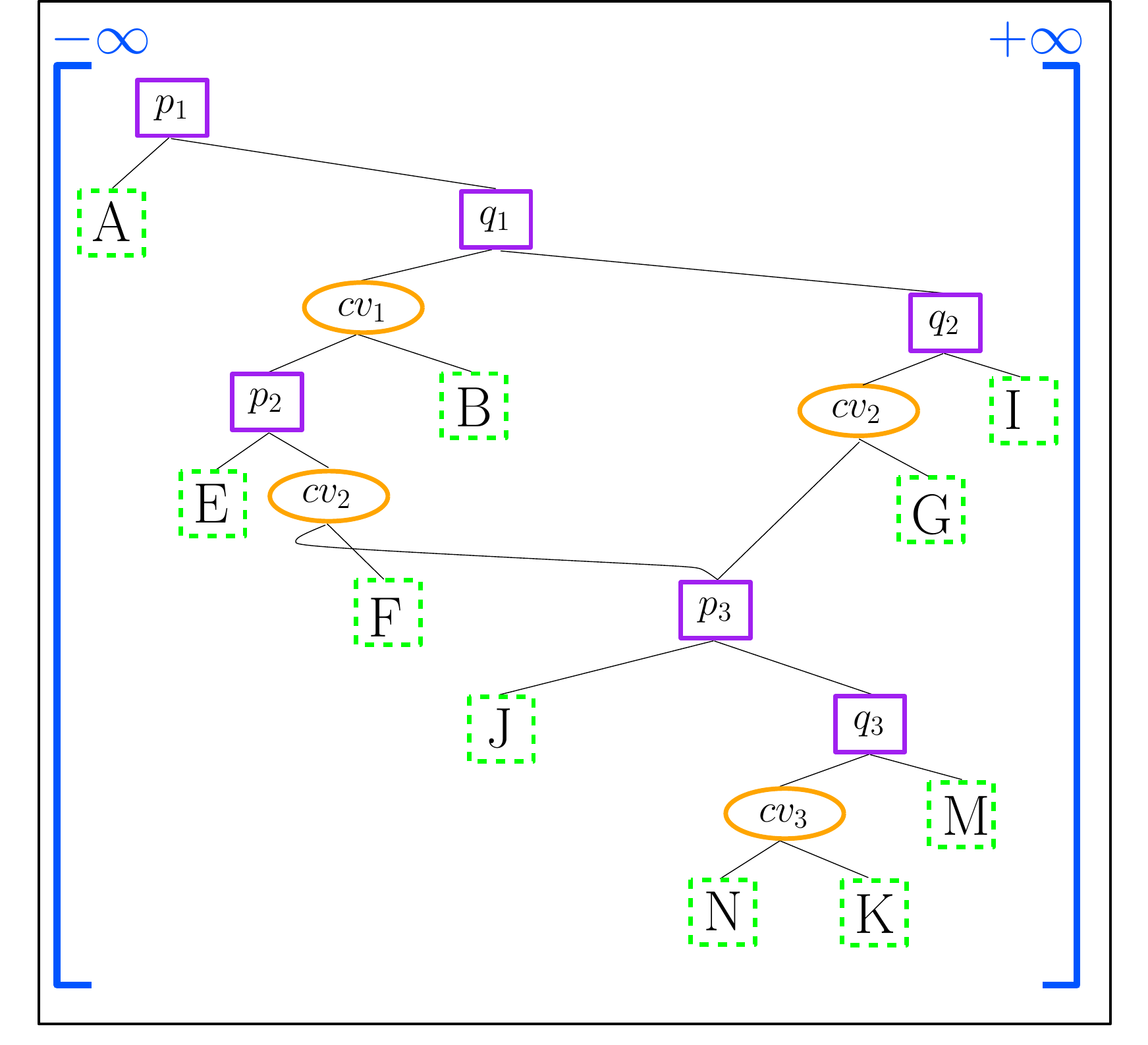}} \hspace{-1.5mm}
    \subfigure[]{\includegraphics[width=\subfigwidth]{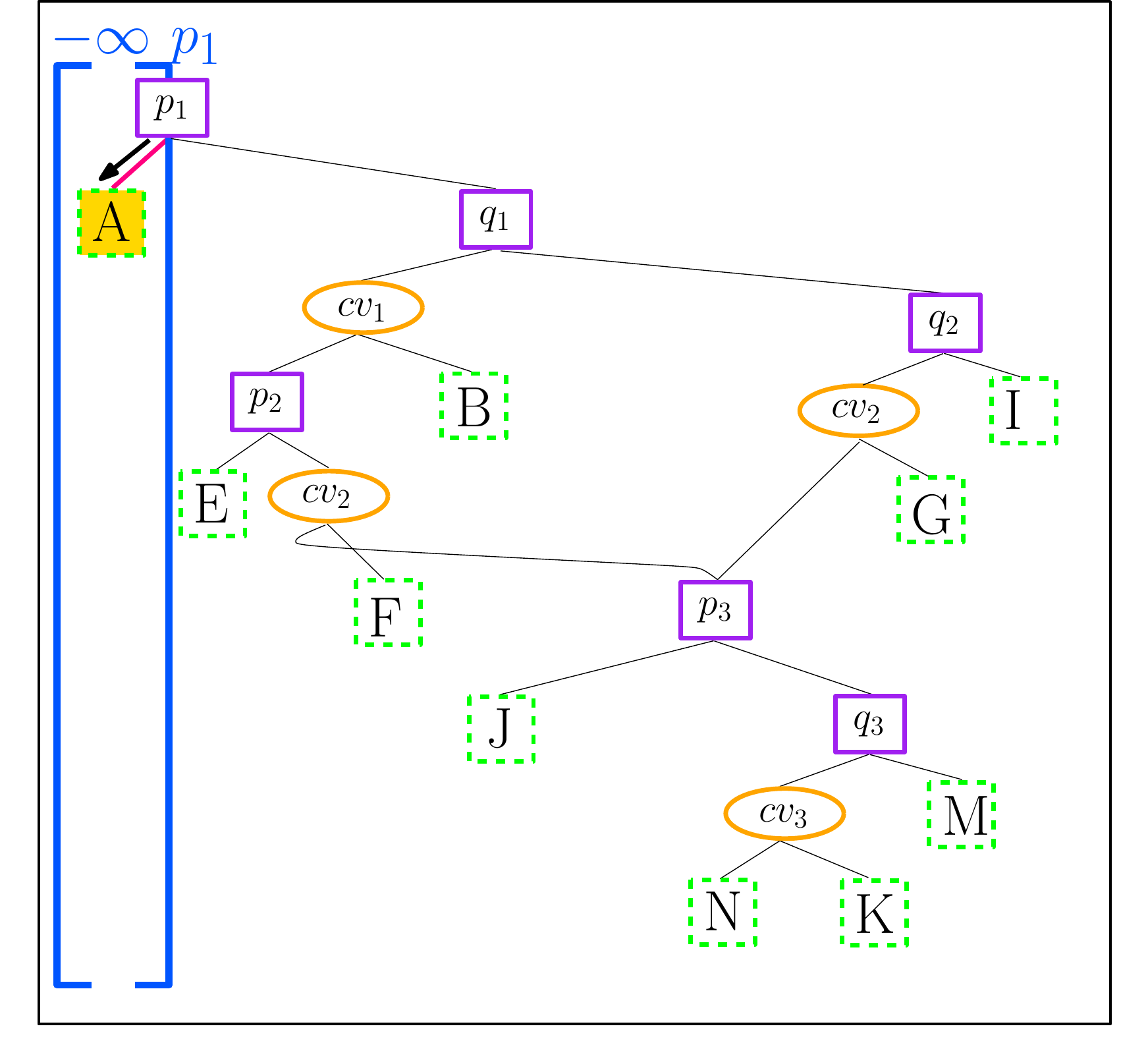}} \hspace{-1.5mm}
    \subfigure[]{\includegraphics[width=\subfigwidth]{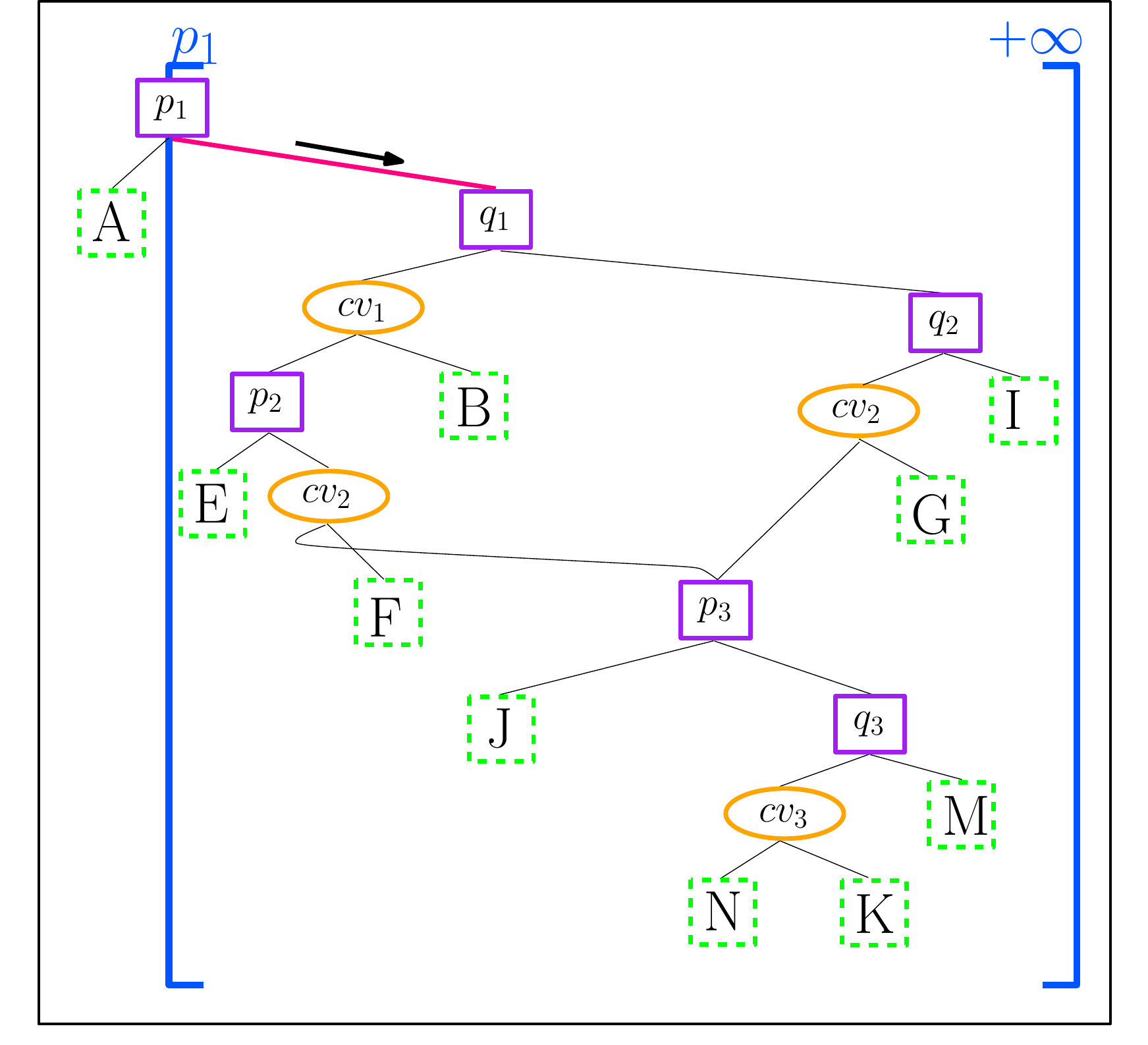}}\hspace{-3mm}\\
    \hspace{-2mm}\subfigure[]{\includegraphics[width=\subfigwidth]{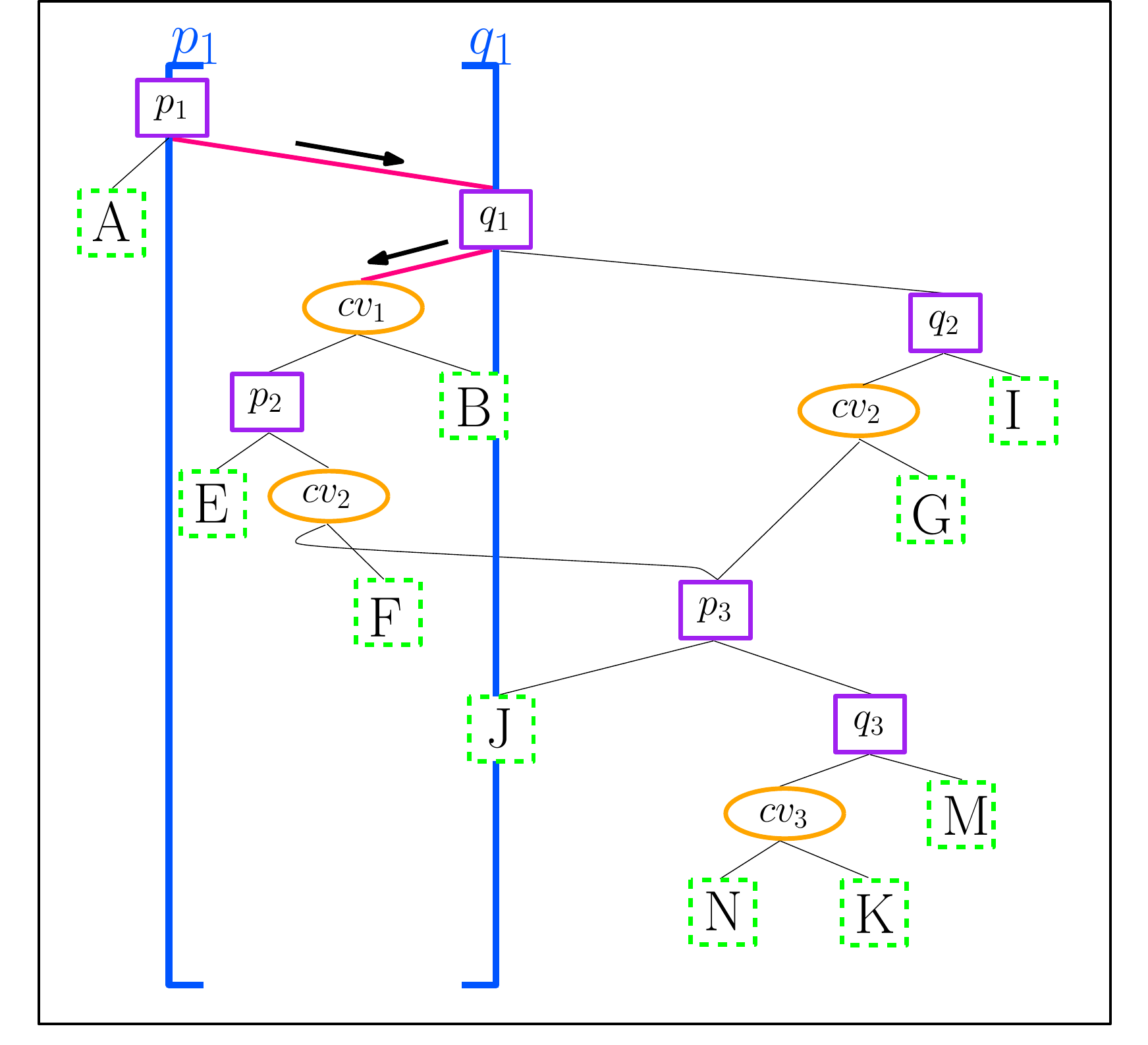}} \hspace{-1.5mm}
    \subfigure[]{\includegraphics[width=\subfigwidth]{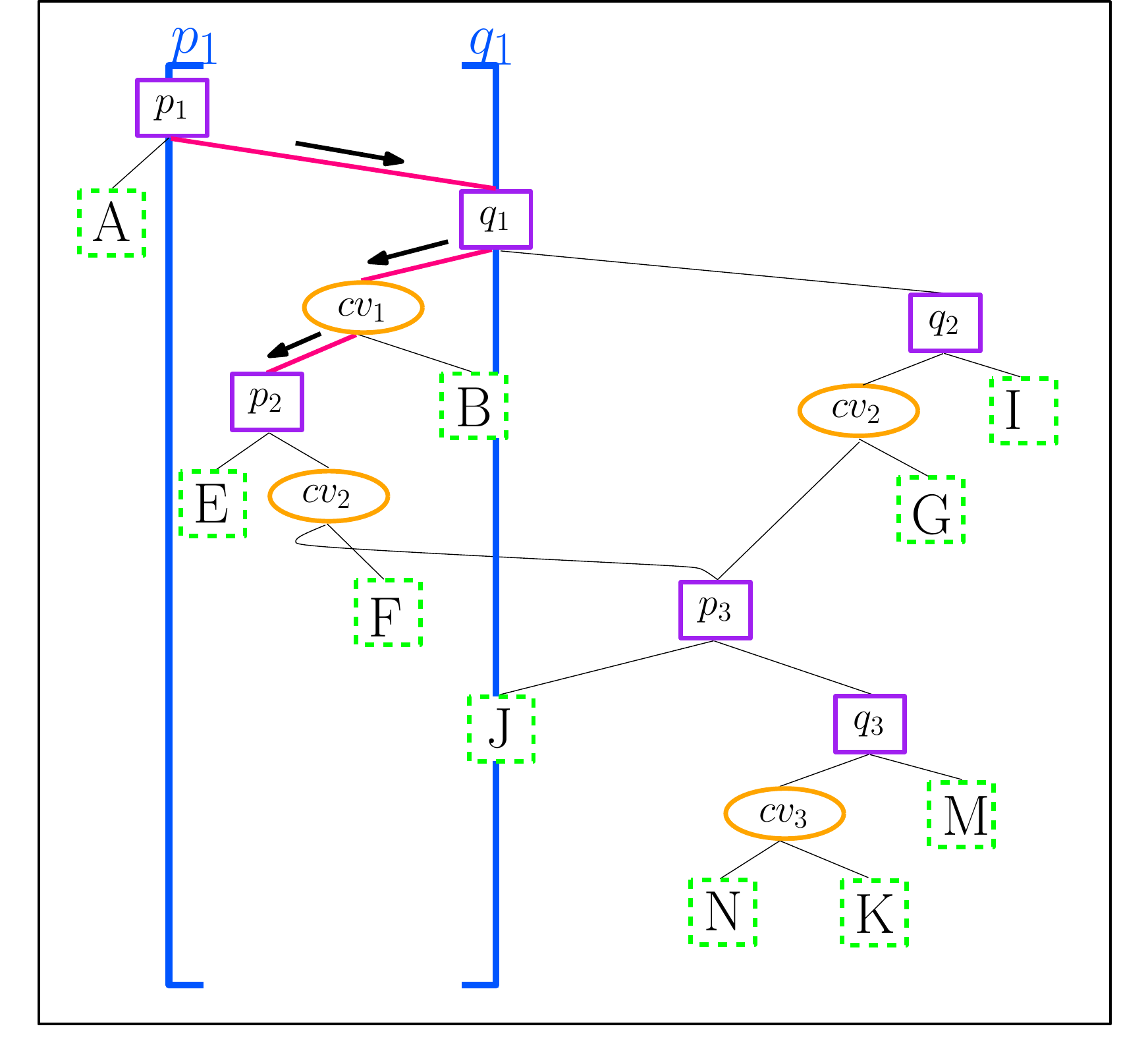}} \hspace{-1.5mm}
    \subfigure[]{\includegraphics[width=\subfigwidth]{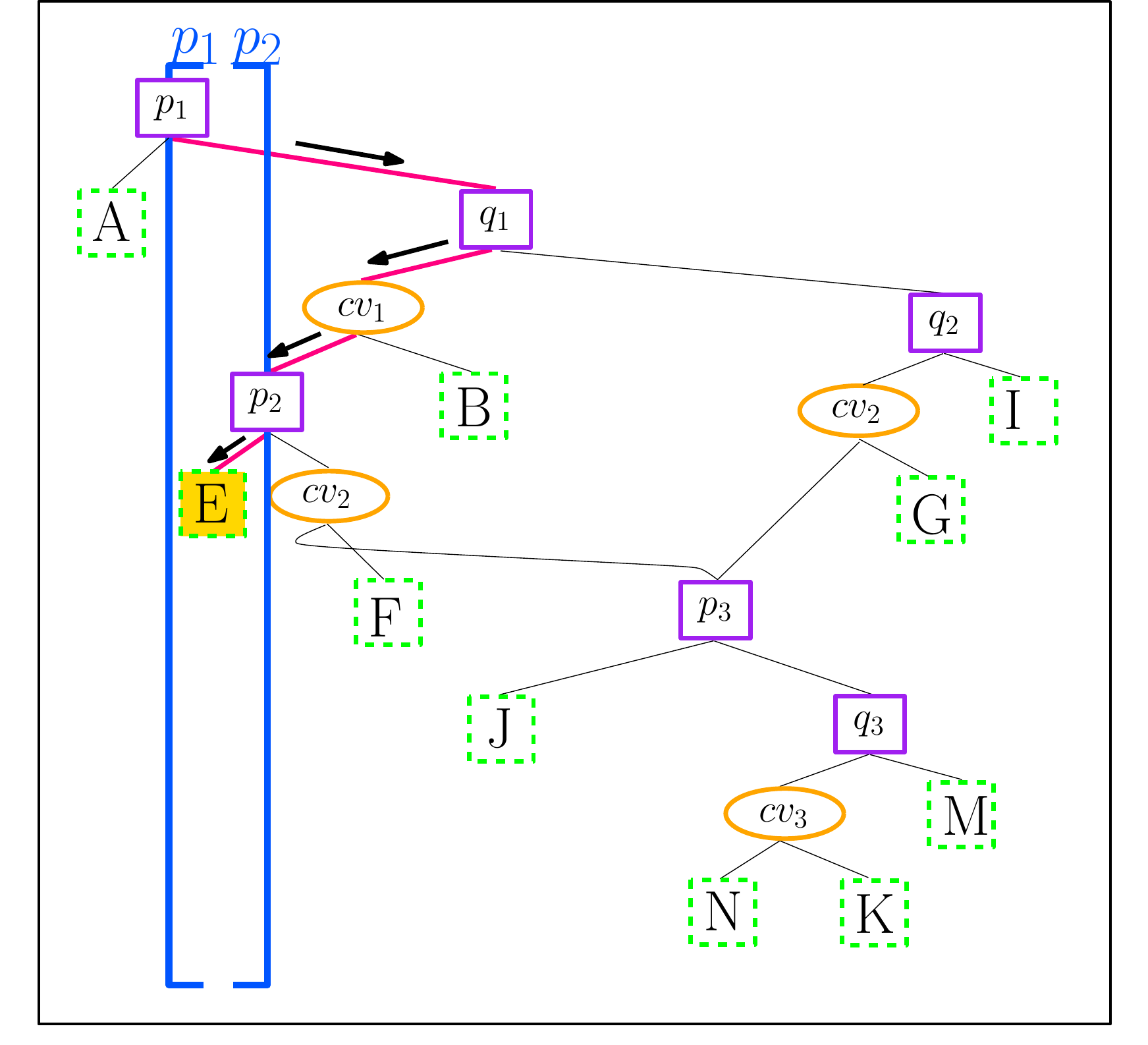}}\hspace{-3mm}\\
    \hspace{-4mm}\subfigure[]{\includegraphics[width=\subfigwidth]{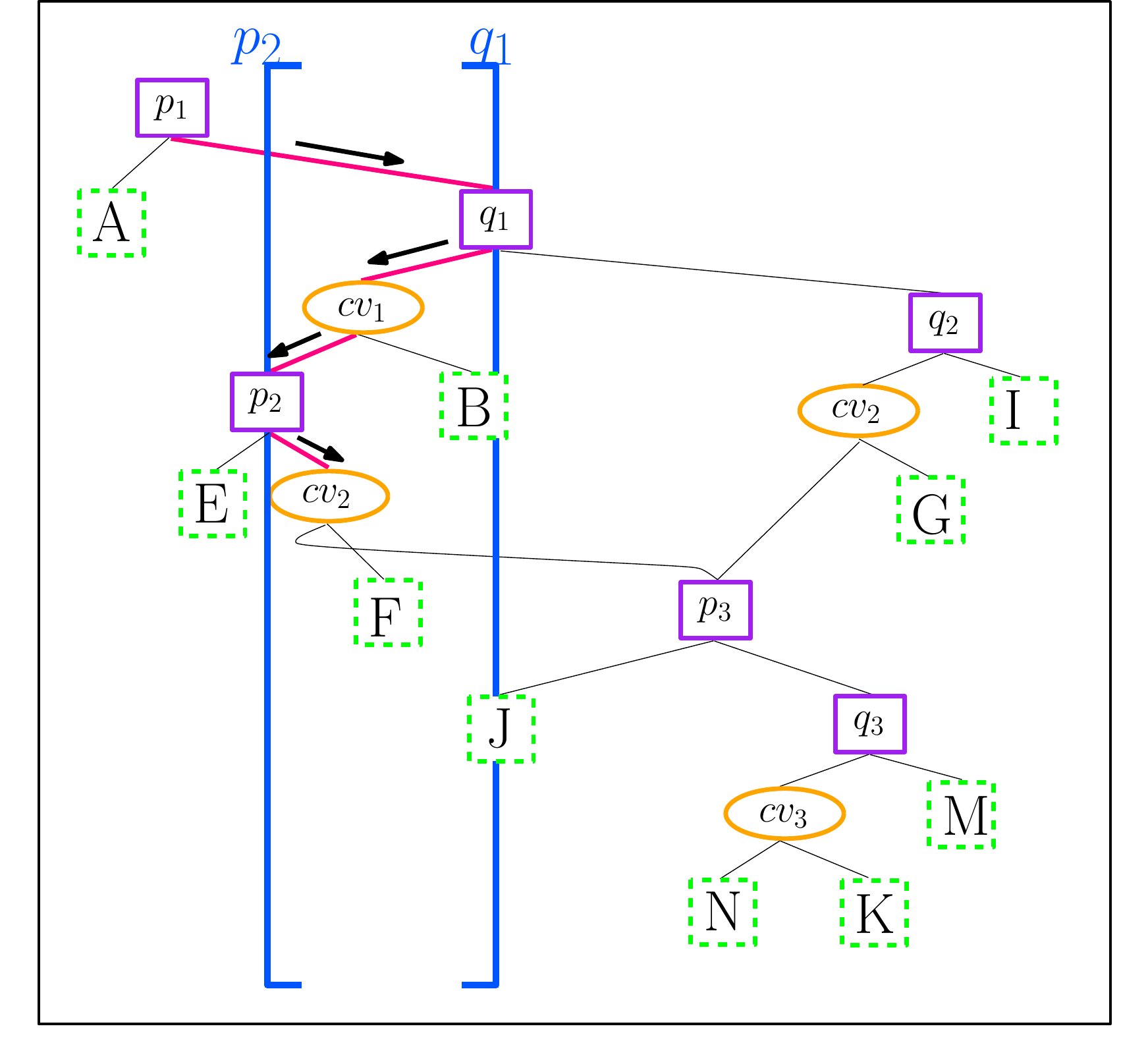}} \hspace{-1.5mm}
    \subfigure[]{\includegraphics[width=\subfigwidth]{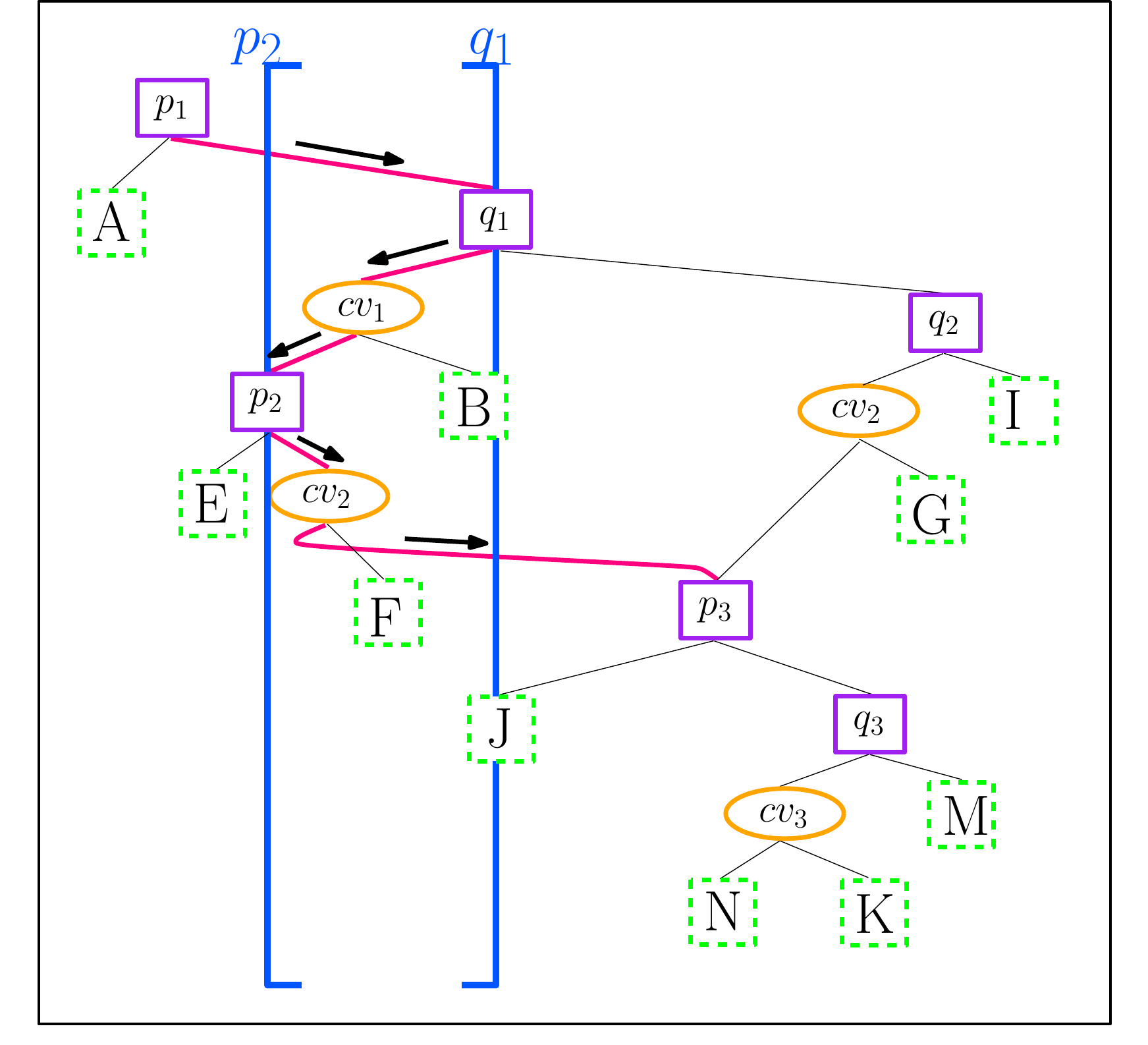}} \hspace{-1.5mm}
    \subfigure[]{\label{fig:rec-i}\includegraphics[width=\subfigwidth]{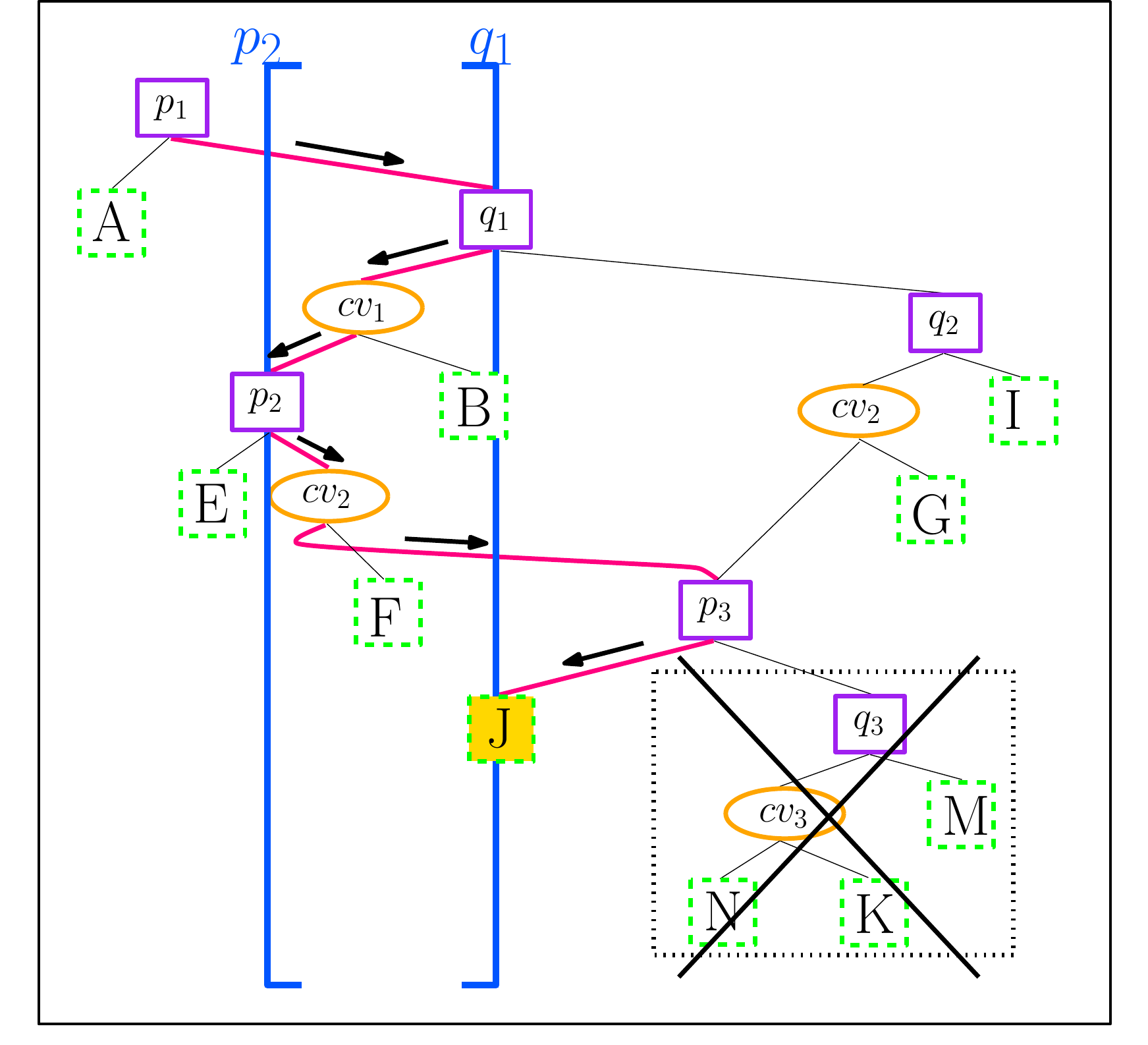}}\hspace{-3mm}

  \vspace{-5pt}
   \caption[Recursive Verification Algorithm Flow]{
   \capStyle{The first 9 steps of the recursive verification algorithm
   run on the search structure for 3 curves,
   as illustrated in Figure~\ref{fig:depth_vs_qlen:wc_ratio:third_curve}.
   The interval of possible $x$-values is marked by the blue brackets.
   In each step the growing path so far is marked with arrows.
   In \subref{fig:rec-i} the interval of possible $x$-values
   remains~$[p_2,q_1]$ and does not shrink
   since $p_3$ is not contained in it.
   The subgraph rooted at the right child of $p_3$
   is clearly not contained in $[p_2,q_1]$,
   since it represents regions that are
   completely to the right of $p_3$, and
   is, therefore, skipped. $p_3$ is a bouncing node for
   the path depicted in (i).
   \vspace{-7pt}
   }}
  \label{fig:eff_ver_alg:rec_alg}
\end{figure}

The expected running time of the above recursive algorithm applied
to \TrpSrchTree would be $O(n\log n)$.
This follows from the fact that the algorithm would use each edge of the tree
exactly once and by the expected size of the tree,
which by Lemma~\ref{lemma:tree_size} is $O(n\log n)$.
In fact, the behavior of the algorithm when applied to the corresponding DAG \G
is very similar since the bouncing nodes create additional
costs but do not let the recursion split.
That is, the algorithm still follows each edge $e'$ of \TrpSrchTree
but with extra costs per edge incurred by bouncing nodes,
see also Lemma~\ref{lemma:charging_edges}.
Thus, the total cost of the recursive verification algorithm
applied to~\G~is

\begin{equation}
\label{eq:recursive_0}
\sum_{e'\in\TrpSrchTree}(w_{e'}+1),
\end{equation}
\noindent
where $w_{e'}$ is the number of additional bouncing nodes in the corresponding
subpath for $e'$ in \G. 
By Observation~\ref{obs:bouncing_nodes_on_leaf_edges} we know that
only critical edges may accumulate many bouncing nodes.
On the other hand, by Observation~\ref{obs:max_bounce_per_curve},
all other edges can only accumulate up to two bouncing nodes.
Each of the critical edges can be associated with a trapezoid
that existed during the construction of \TrpSrchTree.
Hence, let $\Delta^\TrpSrchTree$ denote the set of all trapezoids that
were created during the construction of $\TrpSrchTree$.
For every such trapezoid $t'\in\Delta^\TrpSrchTree$ we define its weight
as $w_{t'} = w_{e'}$, where $e'$ is its corresponding
critical edge in $\TrpSrchTree$.
For all other edges the total cost is at most 3
(at most 2 bouncing nodes by Observation~\ref{obs:max_bounce_per_curve}).
Hence, we can upper bound~(\ref{eq:recursive_0}) as follows:

\begin{equation}
\label{eq:recursive_1}
\sum_{e'\in\TrpSrchTree}(w_{e'}+1) \leq 3|\TrpSrchTree| + \sum_{t'\in\Delta^{\TrpSrchTree}}w_{t'}.
\end{equation}

\noindent
Let $\Delta$ be the set of all possible trapezoids that may exist
during the construction of a trapezoidal search tree.
Now, set $w_{t'}=0$ for all $t'\in\Delta\setminus\Delta^{\TrpSrchTree}$,
\ie for those that were not created with respect to a specific insertion order.
We can now extend the right hand side of (\ref{eq:recursive_1}) without
changing its value as follows:
\begin{equation}
\label{eq:recursive_2}
3|\TrpSrchTree| + \sum_{t'\in\Delta^{\TrpSrchTree}}w_{t'} = 3 |\TrpSrchTree| + \sum_{t'\in\Delta} w_{t'}.
\end{equation}
We are interested in the expected value of the right hand side
of~(\ref{eq:recursive_2}), \ie the expected value with respect to
all~$n!$ possible insertion orders of the segments in $S$.
This can be can be written as
\begin{equation}
\label{eq:recursive_3}
O(n\log n) + \E[\sum_{t'\in\Delta}w_{t'}],
\end{equation}
since, by Lemma~\ref{lemma:tree_size}, the expected size of $\TrpSrchTree$ is $O(n\log n)$.
By linearity of expectation~(\ref{eq:recursive_3}) is equivalent to:
\begin{equation}
\label{eq:recursive_4}
O(n\log n) + \sum_{t'\in\Delta}\E[w_{t'}]
\end{equation}

\noindent
Let $\delta_{t'} = 1$ if $t'\in\Delta^{\TrpSrchTree}$ and $0$ otherwise.
By the law of iterated expectation we can now split up
the expected value according to the condition whether $t'$ exists during
the construction of \TrpSrchTree or not.

\begin{equation}
\label{eq:recursive_4a}
O(n\log n) + \sum_{t'\in\Delta}
\Big(
\E[w_{t'}|\delta_{t'}=1]\Pr[\delta_{t'}=1]
+
\E[w_{t'}|\delta_{t'}=0]\Pr[\delta_{t'}=0]
\Big)
\end{equation}
Observing $\E[w_{t'}|\delta_{t'}=0]=0$ and $\Pr[\delta_{t'}=1]=\E[\delta_{t'}]$,
we are left with:
\begin{equation}
\label{eq:recursive_4b}
O(n\log n) + \sum_{t'\in\Delta}
\E[w_{t'}|\delta_{t'}=1]\E[\delta_{t'}].
\end{equation}

\noindent
Most of the remainder of the section is dedicated to the fact that
$\E[w_{t'}|\delta_{t'}=1]$ is a constant. It is then straightforward
to conclude that the expected running time is $O(n\log n)$;
see Proposition~\ref{pro:recursive_preproc_time} at the end of this section.


Let $\Pi$ be the set of all $n!$ insertion sequences.
Every $\pi\in\Pi$ defines a construction of a trapezoidal
search tree $\TrpSrchTree(\pi)$ and the corresponding DAG~$\G(\pi)$.
Recall that the difference between the trapezoidal map of \TrpSrchTree and
\G are the merges that occur during the construction of \G.
Hence, a trapezoid $t'\in\Delta^{\TrpSrchTree(\pi)}$ may be covered by
several trapezoids of \G during its existence. We denote the number
of these trapezoids by $n_{t'}(\pi)$.
Obviously, $n_{t'}(\pi) =0$ iff $t' \not \in \Delta^{\TrpSrchTree(\pi)}$.

\begin{lemma}
\label{lem:injection_lemma}
Let $\Pi^{n_{t'} \diamond i} =\{\pi\in\Pi | n_{t'}(\pi)\diamond i\}$
for $\diamond\in\{=,<,>,\leq,\geq\}$.
For any integer $i>0$, the number of insertion sequences where
$n_{t'}(\pi) = i$ is greater or equal to the number of
sequences where $n_{t'}(\pi)$ is larger than $i$, that is:
$$|\Pi^{n_{t'} > i}| \leq |\Pi^{n_{t'} = i}|.$$
\end{lemma}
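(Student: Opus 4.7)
I would prove the lemma by constructing an explicit injection $\phi: \Pi^{n_{t'} > i} \to \Pi^{n_{t'} = i}$, from which the cardinality inequality follows immediately. The intuition is that permutations producing too many covers of $t'$ can be canonically modified, by moving or swapping specific ``surplus'' curves, into permutations that produce exactly $i$ covers; the modification must be both effective (reducing the count to the target value) and reversible.

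The first step is to isolate the cover-changing events. For $\pi \in \Pi^{n_{t'} > i}$ with $n_{t'}(\pi) = k > i$, the sequence of DAG trapezoids $t_1 \supsetneq t_2 \supsetneq \cdots \supsetneq t_k$ covering $t'$ during its lifetime in $\TrpSrchTree(\pi)$ forms a nested chain, and each transition $t_j \to t_{j+1}$ is triggered by the insertion of a unique curve $c_j^\pi$ whose endpoint lies in the strip between $\Top(t')$ and $\Bottom(t')$, inside the $x$-range of $t_j$ but outside the $x$-range of $t'$. Since $t' \in \Delta^{\TrpSrchTree(\pi)}$, none of these cover-changers crosses $t'$, so none of them can destroy it. The second step is to define $\phi(\pi)$ by a canonical swap: let $c_D^\pi$ denote the first curve (inserted after $t'$'s creation) that crosses $t'$, i.e., the destroyer of $t'$, and obtain $\phi(\pi)$ from $\pi$ by exchanging the positions of $c_i^\pi$ and $c_D^\pi$. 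In $\phi(\pi)$, every curve before $c_i^\pi$'s original position is unchanged, so $t'$ is still created at the same moment; then $c_D^\pi$, now inserted at $c_i^\pi$'s old position, destroys $t'$ just after the $i$-th cover $t_i$ has been reached. Hence during $t'$'s (shortened) lifetime only $c_1^\pi, \ldots, c_{i-1}^\pi$ fire, giving $n_{t'}(\phi(\pi)) = i$. One must also verify that no previously inert curve becomes a new cover-changer in $\phi(\pi)$ within the shortened lifetime; this follows because, up to the destruction point, the set of inserted curves and the geometry of the successive covers are identical to those in $\pi$.

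The main obstacle is establishing the injectivity of $\phi$. Given $\phi(\pi)$, the destroyer $c_D^\pi$ is easy to recover by simulating the construction and identifying the curve that eliminates $t'$; however, the displaced curve $c_i^\pi$, now located at $c_D^\pi$'s former position, lies somewhere in the post-destruction segment of $\phi(\pi)$, and singling it out requires a canonical rule. A natural candidate is to declare $c_i^\pi$ to be the unique post-destruction curve whose reinsertion at the current destruction point of $\phi(\pi)$ would trigger the ``missing'' $(i+1)$-th cover tightening of $t'$; proving uniqueness and well-definedness will require a careful geometric case analysis in the spirit of the one deferred to an appendix for Proposition~\ref{prop:identical_search_paths_per_query}. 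Should that route prove too delicate, a cleaner alternative is to iterate single-step reductions $|\Pi^{n_{t'} = k+1}| \leq |\Pi^{n_{t'} = k}|$ and chain the resulting injections, at the cost of more bookkeeping to ensure that the composed map remains injective across all values of $k > i$.
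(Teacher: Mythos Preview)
Your high-level strategy---build an injection $\Pi^{n_{t'}>i}\to\Pi^{n_{t'}=i}$---is exactly the paper's. But your particular swap (the $i$-th cover-changer $c_i^\pi$ with the destroyer $c_D^\pi$) is \emph{not} injective, and no case analysis can rescue it. Here is a counterexample for $i=1$. Take any $t'$ whose initial DAG cover $t_1$ extends strictly to the left of $t'$ (such configurations arise after a merge; cf.\ Figure~\ref{fig:phi}). Let $A$ and $B$ be curves whose right endpoints lie between $\Top(t')$ and $\Bottom(t')$, inside $t_1$ but to the left of $t'$, with $B$'s endpoint strictly closer to $t'$ than $A$'s; let $c_D$ intersect~$t'$. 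Consider
\[
\pi_1=[\text{defs of }t',\,A,\,c_D,\,B],\qquad
\pi_2=[\text{defs of }t',\,B,\,A,\,c_D].
\]
In $\pi_1$ the covers are $t_1\to t_2^A$, then $c_D$ destroys $t'$; so $n_{t'}(\pi_1)=2$, $c_1^{\pi_1}=A$, and your swap yields $\phi(\pi_1)=[\text{defs},\,c_D,\,A,\,B]$. In $\pi_2$ the covers are $t_1\to t_2^B$; next $A$ is inserted, but its endpoint lies \emph{outside} $t_2^B$ (since $B$ is closer), so the cover is unchanged; then $c_D$ destroys $t'$. Hence $n_{t'}(\pi_2)=2$, $c_1^{\pi_2}=B$, and your swap yields $\phi(\pi_2)=[\text{defs},\,c_D,\,A,\,B]=\phi(\pi_1)$. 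Since both $\pi_1,\pi_2\in\Pi^{n_{t'}=2}$, the same pair also defeats your fallback single-step reduction $\Pi^{n_{t'}=k+1}\to\Pi^{n_{t'}=k}$.

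The structural problem is that after you advance the destruction, several later curves may each qualify as ``the curve whose reinsertion at position $p$ would tighten the cover,'' and $\phi(\pi)$ carries no information distinguishing them. The paper sidesteps this by working from the \emph{other} end of the lifetime of~$t'$: it takes the $i$-th \emph{last} covering trapezoid $t$, identifies the last-inserted segment $s$ among the four defining segments of $t$ (necessarily $\Left(t)$ or $\Right(t)$, since $t'$ already exists), and swaps $s$ with $\Top(t')$ or $\Bottom(t')$ respectively---thereby \emph{delaying the creation} of $t'$ rather than advancing its destruction. Because the set of segments inserted up to and including position~$j$ is unchanged, the last $i$ covers---in particular $t$ itself---are identical in $\phi_{t'}^i(\pi)$; so $t$ is recoverable as the first cover of $t'$, and both swapped segments are then determined by~$t$, giving a clean inverse.
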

\begin{proof}

We first define a map $\phi_{t'}^i:\Pi^{n_{t'} > i}\rightarrow\Pi^{n_{t'} = i}$
and then show that it is well-defined and injective.

{\em Definition of $\phi_{t'}^i$:} Since $\pi \in \Pi^{n_{t'} > i}$ there is a sequence
of more than $i$ trapezoids from $\Delta^{G(\pi)}$ that cover $t'$.
Let $t$ be the $i$-th last trapezoid in that sequence.
Let $S(t)$ be the set of at most~$4$ segments that define $t$.
Let $s$ be the segment among those in $S(t)$ that
is inserted last with respect to $\pi$.
Notice that $t'$ must already exists when $s$ is inserted since $n_{t'}(\pi) > i > 0$.
Hence, $s$ cannot be in $S(t')$, otherwise it would
contradict the fact that $t'$ already exists.
Therefore, $s$ can only be $\Left(t)$
or $\Right(t)$, which must extend to the left or right, respectively.
If $s$ equals $\Left(t)$ swap it with $\Top(t)=\Top(t')$,
otherwise with $\Bottom(t)=\Bottom(t')$.
Assuming that $s$ was at position $j$ and that the swapped segment $\overline{s}$
was at position $k<j$, the resulting sequence $\phi_{t'}^i(\pi)$ is~$[s_1,\dots,s_{k-1},s,\dots,\overline{s},s_{j+1},\dots,s_n]$.

{\em  $\phi_{t'}^i$ is well-defined:} We must show that $\phi_{t'}^i(\pi)$ is indeed
in $\Pi^{n_{t'} = i}$.
First observe that $t$ is still constructed at position $j$ since
for $\pi$ and $\phi_{t'}^i(\pi)$ the
set of segments inserted until the $j$-th position (inclusive) is
identical.
Also notice that from this position on $\phi_{t'}^i(\pi)$ and
$\pi$ are identical, which implies that the set of trapezoids in the
trapezoidal map that is constructed from now on, is identical for both
permutations. Specifically, $t$ remains the $i$-th last trapezoid that
covers $t'$.

We still need to argue that $t'$ is now constructed with the insertion
of $\overline{s}$.
Obviously, it cannot be constructed earlier since by definition
$\overline{s}$ is either the top or bottom segment.
The important part is that the vertical walls that define $t'$ are not blocked
due to the new insertion order $\phi_{t'}^i(\pi)$;
see also Figure~\ref{fig:phi}.
The only two segments that changed position are $s$ and $\overline{s}$.
The insertion of $\overline{s}$, which defines the top or bottom sides of $t'$,
could block vertical walls.
However, since its position in $\phi_{t'}^i(\pi)$ is later than its position
in $\pi$ it cannot block a wall that it did not block before.
On the other hand, $s$, which is inserted earlier can only be a
left segment that extends to the left or a right segment that extends to the right.
Hence, it cannot intersect the vertical walls of  $t'$ at all.
We conclude that $t'$ is constructed with the insertion of $\overline{s}$
and that $\phi_{t'}^i(\pi) \in \Pi^{n_{t'} = i}$.

{\em  $\phi_{t'}^i$ is injective:}
By definition of $\phi_{t'}^i(\pi)$ the $i$-th last
trapezoid  that covers~$t'$ is still~$t$.
Therefore, the inverse mapping of $\phi_{t'}^i(\pi)$ can be easily defined by
interchanging the role of $\Left(t)$ with $\Top(t)$ and
$\Right(t)$ with $\Bottom(t)$, respectively. \myqed
\end{proof}

\begin{figure}
  \centering
  \includegraphics[width=0.8\textwidth]{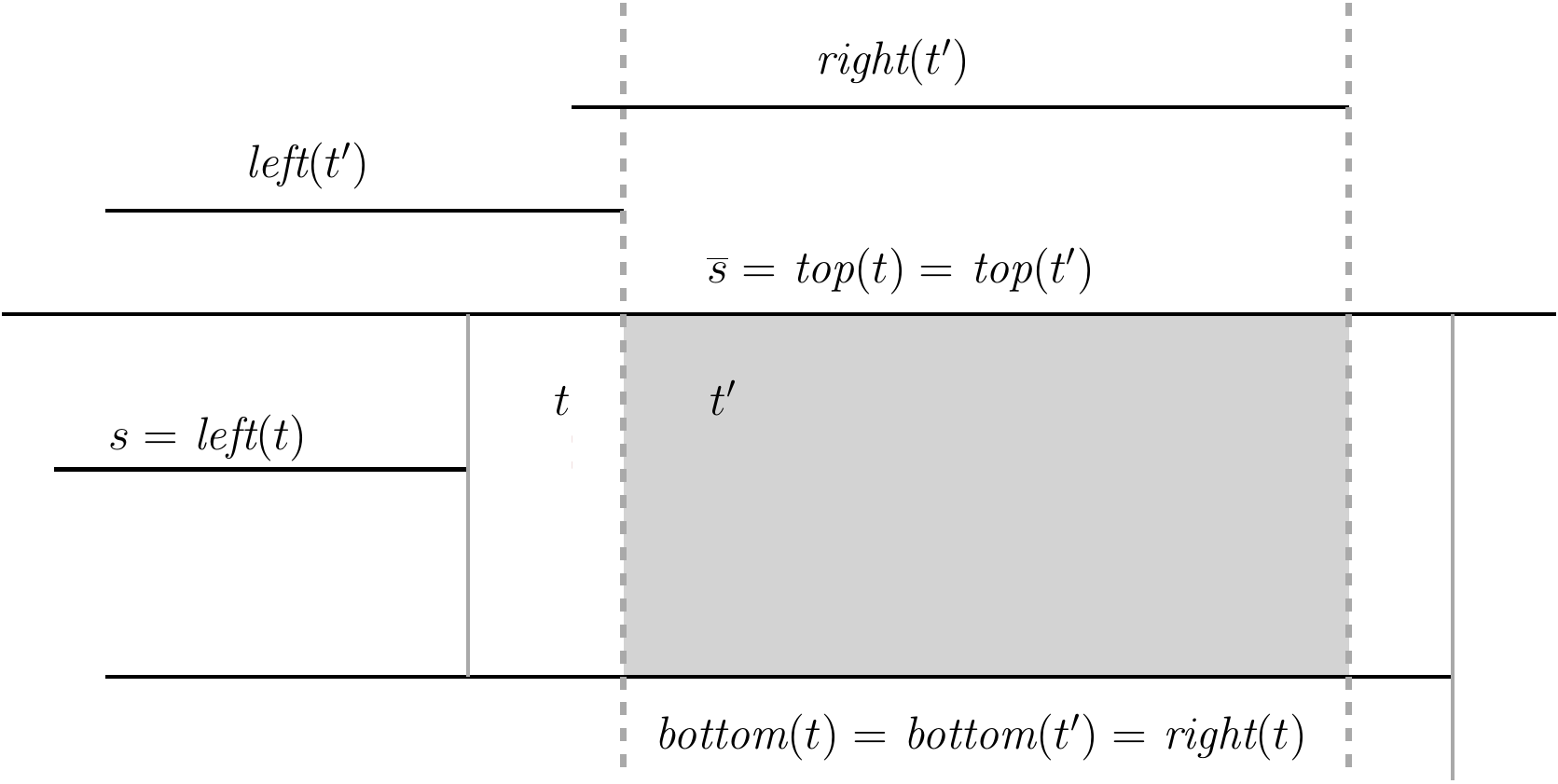} \\
 \caption[Definition of $\phi$]{
   \label{fig:phi}
   \capStyle
   Example configuration of $t'$ covered by $t$.
   In this case $\Right(t)$ is also $\Bottom(t) = \Bottom(t')$ as it
   extends to the left.
   Hence, $s$ must be $\Left(t)$.
   One possible insertion order $\pi$ of the segments causing this configuration is:
   $\Right(t')$, $\Left(t')$, $\Top(t)$, $\Bottom(t)$, $s = \Left(t)$.
   Note that $t'$ is created with the insertion of $\Bottom(t)$,
   whereas $t$ is created afterwards, \ie with the insertion of $s = \Left(t)$.
   Now, $\phi_{t'}^i$ swaps~$s$ with $\overline{s} = \Top(t) = \Top(t')$.
   At its new insertion position in $\phi_{t'}^i(\pi)$ the segment~$\overline{s}$ cannot
   block the vertical walls (dashed) induced by $\Left(t')$ and $\Right(t')$ as
   it did not do so at its earlier position in $\pi$.
   On the other hand, the segment $s$, which is now inserted earlier,
   extends to the left and cannot block these walls at all.
   Hence, according to $\phi_{t'}^i(\pi)$ the trapezoids $t$ and $t'$ are created simultaneously, 
   namely with the insertion of ~$\overline{s}$. 
 } 
\end{figure}

\begin{corollary}
For a random element $\pi$ of $\Pi^{n_{t'} \geq 1}$ the probability that
$n_{t'}(\pi) =i$ for $i>0$ is less than or equal to $1/2^{i-1}$.
\end{corollary}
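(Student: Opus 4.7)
The plan is to use Lemma~\ref{lem:injection_lemma} to obtain a geometric decay on $|\Pi^{n_{t'} \geq i}|$ as $i$ grows, and then bound the probability of the event $\{n_{t'}(\pi) = i\}$ by the probability of the larger event $\{n_{t'}(\pi) \geq i\}$.

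First I would write the identity $|\Pi^{n_{t'} \geq i}| = |\Pi^{n_{t'} = i}| + |\Pi^{n_{t'} > i}|$, which is just a disjoint union. Combining this with the lemma's bound $|\Pi^{n_{t'} > i}| \leq |\Pi^{n_{t'} = i}|$ gives $|\Pi^{n_{t'} \geq i}| \geq 2\, |\Pi^{n_{t'} > i}| = 2\, |\Pi^{n_{t'} \geq i+1}|$, i.e., the cardinalities of the tail sets halve at each step.

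A straightforward induction on $i$ then yields $|\Pi^{n_{t'} \geq i}| \leq \frac{1}{2^{i-1}} |\Pi^{n_{t'} \geq 1}|$ for every $i \geq 1$. The base case $i=1$ is trivial, and the inductive step is exactly the halving inequality just derived.

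Finally, for a uniformly random $\pi \in \Pi^{n_{t'} \geq 1}$ we have
\[
\Pr[n_{t'}(\pi) = i] \;=\; \frac{|\Pi^{n_{t'} = i}|}{|\Pi^{n_{t'} \geq 1}|} \;\leq\; \frac{|\Pi^{n_{t'} \geq i}|}{|\Pi^{n_{t'} \geq 1}|} \;\leq\; \frac{1}{2^{i-1}},
\]
which is the claimed bound. There is no real obstacle here: once the lemma is in hand, the argument is a one-line telescoping/induction plus the trivial inclusion $\Pi^{n_{t'} = i} \subseteq \Pi^{n_{t'} \geq i}$. The only thing to be a little careful about is that the lemma is stated for $i > 0$, so the induction should start at $i = 1$, matching the range of $i$ in the corollary.
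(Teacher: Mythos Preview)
Your proof is correct and follows essentially the same approach as the paper: both derive the halving inequality $2\,|\Pi^{n_{t'} \geq i+1}| \leq |\Pi^{n_{t'} \geq i}|$ from Lemma~\ref{lem:injection_lemma} via the disjoint-union identity, iterate it to obtain $|\Pi^{n_{t'} \geq i}| \leq 2^{-(i-1)}|\Pi^{n_{t'} \geq 1}|$, and then use the inclusion $\Pi^{n_{t'} = i} \subseteq \Pi^{n_{t'} \geq i}$ to conclude.
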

\begin{proof}
By  Lemma~\ref{lem:injection_lemma} we know that $|\Pi^{n_{t'} > i}| \leq |\Pi^{n_{t'} = i}|$, adding 
$|\Pi^{n_{t'} > i}|$ to both sides we obtain
$$2 |\Pi^{n_{t'} > i}| \leq |\Pi^{n_{t'} > i-1}|,$$
which implies
\begin{eqnarray*}
2^{i-1} |\Pi^{n_{t'} > i-1}| &\leq& |\Pi^{n_{t'} > 0}|,\\
2^{i-1}|\Pi^{n_{t'} \geq i}| &\leq& |\Pi^{n_{t'} \geq 1}|.
\end{eqnarray*}


\noindent
And with $|\Pi^{n_{t'} = i}| \leq |\Pi^{n_{t'} \geq i}|$ we obtain
$$ 2^{i-1}|\Pi^{n_{t'} = i}|  \leq  2^{i-1}|\Pi^{n_{t'} \geq i}| \leq  |\Pi^{n_{t'} \geq 1}|.$$
Thus,
$$\Pr[n_{t'}(\pi) = i | \pi \in \Pi^{n_{t'} \geq 1}] = |\Pi^{n_{t'} = i}| / |\Pi^{n_{t'} \geq 1}| \leq 1/2^{i-1}$$
\myqed
\end{proof}

\begin{corollary}
For a random element $\pi$ of $\Pi^{n_{t'} \geq i}$,
the expected value for $n_{t'}(\pi)$ is constant. \label{cor:expected value}
\end{corollary}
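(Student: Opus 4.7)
The plan is to apply the preceding corollary directly: we have the tail-style bound $\Pr[n_{t'}(\pi) = i \mid \pi \in \Pi^{n_{t'} \geq 1}] \leq 1/2^{i-1}$, and a geometric-type decay on the mass function immediately forces the expectation to be bounded by an absolute constant. Concretely, I would expand the conditional expectation as
\begin{equation*}
\E[\,n_{t'}(\pi) \mid \pi \in \Pi^{n_{t'} \geq 1}\,]
\;=\; \sum_{i=1}^{\infty} i \cdot \Pr[\,n_{t'}(\pi) = i \mid \pi \in \Pi^{n_{t'} \geq 1}\,]
\;\leq\; \sum_{i=1}^{\infty} \frac{i}{2^{i-1}} \;=\; 4,
\end{equation*}
so the expectation is at most $4$, independent of $t'$, $S$, and $n$.

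The only subtlety to address is that the previous corollary is phrased with conditioning on $\Pi^{n_{t'} \geq 1}$, while the present statement reads $\Pi^{n_{t'} \geq i}$. I would note that the natural reading is $\Pi^{n_{t'} \geq 1}$ (i.e.\ a random permutation among those insertion orders for which $t'$ actually appears during the construction of $\TrpSrchTree(\pi)$), which is what makes the quantity $n_{t'}(\pi)$ meaningful for the analysis in Section~\ref{subsec:eff_ver_alg:rec}; under this reading the bound from the preceding corollary applies directly and no additional work is required.

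I do not expect a substantive obstacle: the heavy lifting was done in Lemma~\ref{lem:injection_lemma}, where the injection $\phi_{t'}^i$ established the halving of the tail, and in the corollary that converted this into a probability bound. The convergence of $\sum_{i\geq 1} i/2^{i-1}$ is standard (it is the mean of a shifted geometric random variable), and the constant $4$ is harmless for the subsequent use of this fact, namely the argument in Proposition~\ref{pro:recursive_preproc_time} that $\E[w_{t'} \mid \delta_{t'}=1]$ is constant. The only thing to keep in mind if one wants to record the downstream application explicitly is that the number of bouncing nodes $w_{t'}$ accumulated on the critical edge for $t'$ is, by Observation~\ref{obs:max_bounce_per_curve}, at most a constant factor times $n_{t'}(\pi)$, so a constant bound on $\E[n_{t'}]$ yields a constant bound on $\E[w_{t'} \mid \delta_{t'} = 1]$.
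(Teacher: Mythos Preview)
Your proposal is correct and matches the paper's own proof essentially line for line: the paper also expands the conditional expectation as $\sum_{0<i\leq n} i\cdot \Pr[n_{t'}(\pi)=i \mid \pi\in\Pi^{n_{t'}\geq 1}]$, bounds each term by $i/2^{i-1}$ via the preceding corollary, and concludes the sum is at most $4$. You are also right that the ``$\geq i$'' in the statement is a typo for ``$\geq 1$''---the paper's proof silently works under the conditioning on $\Pi^{n_{t'}\geq 1}$.
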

\begin{proof}

\begin{eqnarray*}
\E[n_{t'}(\pi)| \pi \in \Pi^{n_{t'} \geq 1}]
&=   &\sum_{0<i\leq n}i \cdot \Pr[n_{t'}(\pi) = i | \pi \in \Pi^{n_{t'} \geq 1}]\\
&\leq&\sum_{0<i\leq n} i \cdot \frac{1}{2^{i-1}}\\
&=   &2 \cdot\sum_{0<i\leq n} \frac{i}{2^i}\\
&\leq &4
\end{eqnarray*}
\myqed

\end{proof}

According to Corollary~\ref{cor:expected value} the expected number
of different DAG trapezoids that cover $t'$ is not more than $4$.
For every such $t$ that contains $t'$ during the construction
we may only get up to two bouncing nodes.
By also taking into account two additional bouncing nodes that may
occur at the destruction of $t'$,
we can bound $\E[w_{t'}|\delta_{t'}=1]$ by $2\cdot4+2=8$.
Applying this to $\E[w_{t'}|\delta_{t'}=1]$ in Equation~\ref{eq:recursive_4b} yields
\begin{equation}
\label{eq:recursive_5}
O(n\log n) + 8 \sum_{t'\in\Delta}\E[\delta_{t'}],
\end{equation}
which equals:
\begin{equation}
\label{eq:recursive_6}
O(n\log n) + 8\cdot \E[\sum_{t'\in\Delta}\delta_{t'}].
\end{equation}
Clearly, $\E[\sum_{t'\in\Delta}\delta_{t'}] = O(n\log{n})$, since $\E[\sum_{t'\in\Delta}\delta_{t'}]$
is the expected number of trapezoids in $\TrpSrchTree$, proving the following proposition:

\begin{proposition}
\label{pro:recursive_preproc_time}
Let $S$ be a set of $n$ pairwise interior disjoint x-monotone curves
inducing a planar subdivision. Let \G be a DAG of linear size that was constructed by a
randomized incremental insertion.
The expected running time $f(n)$ of the recursive
algorithm executed on \G is $O(n\log n)$.
\end{proposition}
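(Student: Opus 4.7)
The plan is to account for the cost of the recursive algorithm by charging it to the edges of the trapezoidal search tree $\TrpSrchTree$ that is implicitly associated to $\G$ under the same insertion order, and to show that the extra work contributed by bouncing nodes only multiplies this cost by a constant factor in expectation.

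First I would argue that the algorithm traverses every valid search path, so by Proposition~\ref{prop:identical_search_paths_per_query} and Lemma~\ref{lemma:charging_edges} its work is captured exactly by the expression in~(\ref{eq:recursive_0}), where $w_{e'}$ counts the bouncing nodes accumulated on the subpath in $\G$ corresponding to the edge $e' \in \TrpSrchTree$. Using Observations~\ref{obs:bouncing_nodes_on_leaf_edges} and~\ref{obs:max_bounce_per_curve}, only critical edges can carry more than two bouncing nodes, which reduces the total work to the bound $3|\TrpSrchTree| + \sum_{t' \in \Delta^{\TrpSrchTree}} w_{t'}$ appearing in~(\ref{eq:recursive_1}). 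Extending the second sum to the full set $\Delta$ of potential trapezoids by setting $w_{t'}=0$ on trapezoids that are not realized leaves the value unchanged, as in~(\ref{eq:recursive_2}).

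Next, I would take expectation over the $n!$ insertion orders. By Lemma~\ref{lemma:tree_size} the deterministic term contributes $O(n \log n)$, and linearity of expectation lets me push the expectation inside the sum, as in~(\ref{eq:recursive_4}). The law of total expectation, conditioning on whether $t'$ is actually constructed (\ie on $\delta_{t'}$), then yields~(\ref{eq:recursive_4b}) after noting that $w_{t'} = 0$ whenever $\delta_{t'} = 0$.

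The crux — and the main obstacle — is to show that the conditional expectation $\E[w_{t'} \mid \delta_{t'} = 1]$ is bounded by an absolute constant. For this I would appeal to Corollary~\ref{cor:expected value}, which bounds the expected number of DAG trapezoids covering $t'$ during its lifetime by a constant (namely $4$). Each such covering trapezoid can contribute at most two bouncing nodes to the subpath reaching $t'$ by Observation~\ref{obs:max_bounce_per_curve}, and at most two additional bouncing nodes arise when $t'$ is itself destroyed, so $\E[w_{t'} \mid \delta_{t'}=1] \leq 8$. Substituting this constant into~(\ref{eq:recursive_4b}) and recognizing $\sum_{t'\in\Delta}\E[\delta_{t'}] = \E[|\Delta^{\TrpSrchTree}|] = O(n\log n)$ by Lemma~\ref{lemma:tree_size} once more, I would conclude that the total expected running time is $O(n\log n)$, as claimed.
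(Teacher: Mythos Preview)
Your proposal is correct and follows essentially the same approach as the paper: you charge the recursion to the edges of~$\TrpSrchTree$, isolate the contribution of critical edges via Observations~\ref{obs:bouncing_nodes_on_leaf_edges} and~\ref{obs:max_bounce_per_curve}, condition on $\delta_{t'}$, and bound $\E[w_{t'}\mid\delta_{t'}=1]$ by the constant~$8$ using Corollary~\ref{cor:expected value} before invoking Lemma~\ref{lemma:tree_size} twice. This is exactly the derivation the paper carries out in equations~(\ref{eq:recursive_0})--(\ref{eq:recursive_6}).
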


Our main theorem for this section follows immediately by plugging the value
of $f(n)$ obtained in Proposition~\ref{pro:recursive_preproc_time} into
Theorem~\ref{thrm:gnrl_preproc_time}.

\begin{theorem}
Let $S$ be a set of $n$ pairwise interior disjoint x-monotone curves
inducing a planar subdivision. A point location data structure for S, which has
$O(n)$ size and $O(\log n)$ query time in the worst case, can be built in 
expected $O(n \log n)$ time.
\end{theorem}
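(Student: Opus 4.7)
The plan is to assemble this theorem as a direct corollary of the two main technical results that have just been proved, namely Theorem~\ref{thrm:gnrl_preproc_time} and Proposition~\ref{pro:recursive_preproc_time}. Theorem~\ref{thrm:gnrl_preproc_time} already reduces the problem to the existence of an efficient verification procedure for~\lqpl: it states that a linear-size, logarithmic-query structure can be produced in expected $O(n\log n + f(n))$ time, where $f(n)$ is the cost of verifying $\lqpl \leq c\log n$ on a linear-size DAG built from $n$ curves. So the only remaining task is to exhibit a verification routine with $f(n) = O(n\log n)$ and plug its cost into Theorem~\ref{thrm:gnrl_preproc_time}.

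First, I would run the standard Mulmuley--Seidel RIC on a uniformly random permutation of $S$, which produces~\G in expected $O(n\log n)$ time and whose linear size can be checked on the fly by maintaining a counter of the nodes. If the size exceeds the chosen threshold $c_1 n$ the construction is aborted and restarted with a fresh permutation; by Lemma~\ref{lemma:prelim:guaranteed:constant_rebuilds} only an expected constant number of restarts is needed before both the size bound and (after the next step) the query-length bound hold.

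Second, once \G passes the size test, I would invoke the recursive verification algorithm of Subsection~\ref{subsec:eff_ver_alg:rec} on \G to check that $\lqpl \leq c_{\lqpl}\log n$. By Proposition~\ref{pro:recursive_preproc_time} this algorithm runs in expected $O(n\log n)$ time, so $f(n) = O(n\log n)$. Substituting this into Theorem~\ref{thrm:gnrl_preproc_time} gives total expected running time
\[
O(n\log n) + f(n) = O(n\log n),
\]
while the termination conditions of the verification ensure that the final DAG has worst-case linear size and worst-case logarithmic query time, as required.

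There is essentially no obstacle left at this stage: the two ingredients are independent and both have already been established earlier in the paper, so the theorem follows by direct combination. The only thing worth being careful about in writing up the proof is to make clear that the constant from Lemma~\ref{lemma:prelim:guaranteed:constant_rebuilds} bounds the \emph{expected} number of size/length failures, so that the $O(n\log n)$ bound for a single attempt indeed carries over to the total expected construction time.
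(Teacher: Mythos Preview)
Your proposal is correct and matches the paper's approach exactly: the paper states that this theorem ``follows immediately by plugging the value of $f(n)$ obtained in Proposition~\ref{pro:recursive_preproc_time} into Theorem~\ref{thrm:gnrl_preproc_time}.'' Your write-up simply spells out that one-line combination in more detail.
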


\subsection{An $O(n\log n)$ Verification Algorithm}
\label{subsec:eff_ver_alg:det}

Let \calTSi denote the trapezoidal map obtained after inserting the
first $i$ curves.
We also use this notation in order to identify
the set of trapezoids of this map.
We denote by \Tall the collection of {\it all} trapezoids created during the
construction of the DAG, including intermediate trapezoids that are
killed by the insertion of later segments.
More formally:
$$\Tall = \bigcup\limits_{i=1}^n \calTSi .$$
Let \Arr{\Tall}
denote the arrangement of all trapezoids in \Tall .
Notice that a face of the arrangement
may be covered by overlapping trapezoids.
The {\it ply} of a point
$p$ in \Arr{\Tall} is defined as the number of
trapezoids in \Tall that cover $p$.
The key to the improved algorithm is the following observation by
Har-Peled~\cite{Har-personal-2012}.
\\

\noindent
\begin{observation} \label{obs:path_len_and_ply_relation}
The length of a path in the DAG for a query point
$q$ is at most three times the ply of $q$ in \Arr{\Tall}.
\end{observation}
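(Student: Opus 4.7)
The plan is to decompose the search path for $q$ in the DAG into a sequence of short blocks, each associated with one trapezoid of $\Tall$ that contains $q$, and then to bound each block by a constant.

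First, I would make the correspondence between internal DAG nodes and destroyed trapezoids explicit: every internal node of the DAG was, at some intermediate step of the construction, a leaf representing a trapezoid in the then-current trapezoidal map, and it became internal precisely when that trapezoid was destroyed by a newly inserted curve. Descending along the search path for $q$ from the root, I would then read off the sequence $\Delta_0, \Delta_1, \ldots, \Delta_m$ of trapezoids that the path ``visits'': $\Delta_0$ is the initial bounding trapezoid, $\Delta_m$ is the final leaf containing $q$, and, for $i<m$, $\Delta_{i+1}$ is the child-trapezoid of $\Delta_i$ (produced by $\Delta_i$'s destruction) that actually contains $q$. Each $\Delta_i$ belongs to $\Tall$ and contains $q$, and the $\Delta_i$'s are pairwise distinct, so $m+1$ is bounded by the ply of $q$ in $\Arr{\Tall}$.

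The heart of the argument is then a short case analysis on the local replacement subtree created when $\Delta_i$ is destroyed by the insertion of a curve $cv$. If neither endpoint of $cv$ lies in the interior of $\Delta_i$, then the former leaf is replaced by a single curve-node with two leaf children; if exactly one endpoint lies in $\Delta_i$, it is replaced by a point-node whose non-leaf child is such a curve-node; and if both endpoints lie in $\Delta_i$, it is replaced by a point-node whose non-leaf child is another point-node whose non-leaf child is such a curve-node. In every case the search path for $q$ traverses at most three internal nodes of the replacement before reaching a new leaf, which coincides with $\Delta_{i+1}$. Any merges performed while inserting $cv$ only cause some of the newly created leaves to be shared with neighbouring destruction subtrees and hence cannot increase this local count.

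Summing the contributions over all $m$ destructions and adding one for the terminal leaf $\Delta_m$ gives a total path length of at most $3m + 1 \leq 3(m+1)$, which is at most three times the ply of $q$. The main delicate point is the case analysis above; in particular, one must verify that even when both endpoints of $cv$ lie inside $\Delta_i$ the replacement subtree has depth exactly three rather than four, which follows directly from the standard insertion rule reviewed in Section~\ref{subsubsec:background:basic:insertion} together with the fact that the curve-node always sits at the bottom of the replacement. Beyond that I do not anticipate any further obstacle.
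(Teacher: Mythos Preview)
The paper does not supply its own proof of this observation; it is simply stated with attribution to Har-Peled (personal communication) and then used. Your argument is the natural one and is correct: the search path for $q$ decomposes into the root-to-leaf paths through the successive replacement subtrees, one per destroyed trapezoid of $\Tall$ containing $q$, and each such subtree contributes at most three internal nodes to the path.

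One small slip worth fixing: it is not true that \emph{every} internal node of the DAG was at some earlier step a leaf. When a trapezoid $\Delta_i$ is destroyed, the former leaf is replaced by a subtree of up to three internal nodes, and only the root of that subtree sits where the old leaf used to be; the remaining one or two internal nodes (the second point-node and the curve-node) were never leaves. Fortunately your actual decomposition does not depend on this misstatement, since you track the sequence $\Delta_0,\dots,\Delta_m$ of trapezoids containing $q$ rather than the internal nodes themselves, and you bound the length of each transition $\Delta_i \to \Delta_{i+1}$ by the depth of the replacement subtree. Once that opening sentence is corrected or removed, the proof is complete.
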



It follows that we need to verify that the maximum ply of a point
in \Arr{\Tall} is $c_1 \log n$ for
some constant $c_1 > 0$.
We remark that this ply is established in an
interior of a face of \Arr{\Tall},
since the longest path will always end in a leaf of the DAG,
which, under the general position assumption, represents a trapezoid.
Moreover, for any query point that falls on either a curve or an endpoint
of the initial subdivision
the search path will end in an internal node of the DAG.
If, on the other hand, the query point $q$
falls on a vertical edge of a trapezoid,
the search path for $q$ will be identical to a path
for a query point in a neighboring trapezoid.
Therefore, we consider the boundaries of the trapezoids as open.

Since the input curves are interior
pairwise disjoint,
according to the separation property deduced from \cite{GY-TSR-80},
one can define a total order on the curves;
see more details in Subsection~\ref{subsubsec:eff_ver_alg:reduction} below.
This order allows us to apply a modified version
of an algorithm by Alt and Scharf~\cite{as-cdaaa-13},
which originally detects the
maximum ply in an arrangement of $n$
axis-parallel rectangles in $O(n \log n)$ time.
Recall that we only apply this verification algorithm on DAGs of
linear size.

We would like to describe a linear space
algorithm with $O(n\log n)$ running time
for computing the ply of an arrangement of
open trapezoids with the following properties:
their bases are $y$-axis parallel (vertical walls) and if
the top or bottom curves of two different trapezoids intersect
not only in a joint endpoint
then the two curves overlap completely in their joint $x$-range.
The ply of such an arrangement is the maximum
number of trapezoids containing a common point,
that is, we are only interested in points located
in faces of this arrangement.
In Appendix~\ref{appendix_alt_scharf} we restate the algorithm by Alt \& Scharf~\cite{as-cdaaa-13}
such that the general position assumption can be dropped.
\arxivdcg{The restated algorithm
is more general than what we essentially need
as it can handle rectangles
with independently open or closed boundaries,
while we are only interested in open rectangles.}{}
The algorithm constructs a balanced binary tree representing the possible $x$-intervals.
It then performs a vertical sweep, recording the events of creation and destruction of a rectangle.
The data is kept in the tree nodes, and a final traversal pushes the collected information to the leaves.
The maximal ply will appear in one of the leaves.

Next, we define a reduction from
the collection of open trapezoids \Tall to a collection \Rall of
open axis-parallel rectangles such that the maximum ply in \Arr{\Rall}
is the same as the maximum ply in \Arr{\Tall}.
Using this reduction we can finally
%
describe a modification for the
restated algorithm such that it can compute the ply of
the arrangement of all trapezoids
created during the
construction of the DAG.


\ignore{
\noindent{\bf An Algorithm for Computing the Ply of an Arrangement
 of Axis-aligned Rectangles}
\label{subsubsubsec:eff_ver_alg:depth_rect}
The algorithm of Alt \& Scharf~\cite{as-cdaaa-13}
is an $O(n\log n)$ algorithm that computes
the ply of an arrangement of axis-aligned rectangles
in general position, using $O(n)$ space.
{
We present here a minor modification,
which does not assume general position,
i.e., rectangles may share boundaries.
Moreover, it can consider each of the four boundaries
of a rectangle as either 
belonging to the rectangle or not; we call these closed or open boundaries, respectively.

Given a finite set of rectangles,
the set of all $x$-coordinates of the
vertical sides of the input rectangles is first sorted.
Let $x_1,x_2,...,x_m$, $m \leq 2n$ be the sorted set of $x$-coordinates.
The ordered set of intervals $\cal I$, is defined as follows;
for $i\in{1,2,...,m-1}$,
the $2(i-1)$th and $2(i-1)+1$st intervals in the set $\cal I$
are $[x_{i},x_{i}]$ and $(x_{i},x_{i+1})$, respectively.
The last interval is $[x_m,x_m]$.
A balanced binary tree $T$ is then constructed,
holding all intervals in $\cal I$ in its leaves,
according to their order in $\cal I$.
An internal node represents the union of
the intervals of its two children,
which is a contiguous interval.
In addition, each internal node $v$ stores in a variable $v.x$
the $x$-value of the merge point between the intervals
of its two children. Since we extended the algorithm
to support both open or closed boundaries, internal nodes
also maintain a flag indicating whether the merge point
is to the left or to the right of the $x$-value.

According to the description of the algorithm in~\cite{as-cdaaa-13},
a sweep is performed using a horizontal line from
$y=\infty$ to $y=-\infty$.
The sweep-line events occur when a rectangle starts or ends,
i.e., when top or bottom boundary of a rectangle is reached.
Since the rectangles are not in general position,
several events may share the same $y$-coordinate.
In such a case, the order of event processing in
each $y$-coordinate is as follows:
\begin{enumerate}
  \item Closing rectangle with open bottom boundary events.
  \item Opening rectangle with closed top boundary events.
  \item Closing rectangle with closed bottom boundary events.
  \item Opening rectangle with open top boundary events.
\end{enumerate}
The order of event processing within each of these
four groups in a specific $y$-coordinate is not important.

The basic idea of the algorithm is that
each sweep event updates the 
leaves of the tree $T$
that span the intervals that are covered by the event.
Therefore, each leaf holds a counter $c$ for the
number of covering rectangles in the current position
of the horizontal sweep line.
In addition, each leaf maintains in
a variable $c_m$ the maximal number of
covering rectangles for this leaf seen so far.
Clearly, the maximal coverage of an interval
is the maximal $c_m$ of all leaves.
The problem with this na\"{i}ve approach
is that one such update can already
take $O(n)$ time.
Therefore, the key idea of~\cite{as-cdaaa-13}
is that when updating an event of a rectangle
whose $x$-range is $(a,b)$,
one should follow only two paths;
the path to $a$ and the path to $b$.
The nodes on the path should hold the
information of how to update the
interval spanned by their children.
In the end of the update the union of
intervals spanned by the updated nodes
(internal nodes and only 2 leaves) is $(a,b)$.

In order to hold the information in the internal nodes
each internal node should maintain the following variables:
\begin{itemize}
\item[$l$]
A counter storing the difference between the number of
rectangles that were opened and that were closed
since the last traversal of the left child of $v$
and that cover the interval spanned by that child.
\item[$r$]
A counter storing the difference between the number of
rectangles that were opened and that were closed
since the last traversal of the right child of $v$
and that cover the interval spanned by that child.
\item[$l_m$]
A counter storing the maximum value of~$l$
since the last traversal of that child.
\item[$r_m$]
A counter storing the maximum value of~$r$
since the last traversal of that child.
\end{itemize}

\noindent
A leaf, on the other hand, holds two variables:
\begin{itemize}
\item[$c$]
The coverage of the associated interval during the sweep
at the point the leaf was traversed for the last time. 
\item[$c_m$]
The maximum coverage of the associated
interval during the sweep from the start until the leaf was traversed for the last time. 
\end{itemize}

\noindent
In relation to these values we define the following functions:
\[
 t(v) =  \left\{ \begin{array}{ll}
        u.l + t(u) & \mbox{if $v$ is the left child of $u$} \\
        u.r + t(u) & \mbox{if $v$ is the right child of $u$} \\
        0 & \mbox{if $v$ is the root} \\
        \end{array} \right.
,
\]

\[
 t_m(v) =  \left\{ \begin{array}{ll}
        max(u.l_m, u.l + t_m(u)) & \mbox{if $v$ is the left child of $u$} \\
        max(u.r_m, u.r + t_m(u)) & \mbox{if $v$ is the right child of $u$} \\
        0 & \mbox{if $v$ is the root} \\
        \end{array} \right.
 .
\]

\noindent
At any point of the sweep the following two invariants hold for every leaf $\ell$
and its associated interval $I$:
\begin{itemize}
    \item
    The current coverage of $I$ is: $\ell.c + t(\ell)$.
    \item
    The maximum coverage of $I$ that was seen so far is: $\max(\ell.c_m, \ell.c + t_m(\ell))$.
\end{itemize}

\noindent
Updating the structure with an event is done as follows:
Let $I$ be the $x$-interval spanned by the processed rectangle creating the event.
Depending on whether the rectangle starts or ends,
we set a variable $d=1$ or $d=-1$, respectively.
We follow the two search paths to the leftmost leaf and the rightmost leaf that are covered by $I$.
In the beginning the two paths are joined until they split, for every node $w$ on this path
(including the split node) we can ignore $d$ and simply
update the tuple $(w.l,w.r,w.l_m,w.r_m)$ using $t(w)$ and $t_m(w)$
according to the invariants stated above. Note that this process needs to clear the
corresponding values in the parent node as otherwise the invariants would be violated.%
\footnote{Notice that using $t(w)$ and $t_m(w)$ here
takes constant time since we only need to access the parent
node as all previous nodes on the path towards the root are already processed.}
After the split the paths are processed separately.
We discuss here the left path, the behavior for the right path is symmetric.
Let $v$ be a node on the left path.
As long as $v$ is not a leaf we update $(v.l,v.r,v.l_m,v.r_m)$ as usual.
However, if the path continues to the left we also have to incorporate $d$ into
$v.r$ and $v.r_m$ as the subtree to the right is covered by $I$.
If $v$ is a leaf we simply update $v.c$ and $v.c_m$ using $t(v), t_m(v)$ and $d$.
A more detailed description (including pseudo code) can be found in~\cite{as-cdaaa-13}.
In total, this process takes $O(\log n)$ time.

Finally, in order to find the maximal number of rectangles covering
an interval one last propagation from root to leaves is needed, such that
all $l, r, l_m, r_m$ values of internal nodes are cleared. This is done using one
traversal on $T$.
Now, the maximal number of rectangles covering an interval
is the maximal $c_m$ of all leaves of $T$.

Clearly, the running time of the algorithm is $O(n\log n)$,
since constructing the tree and sorting the $y$-events takes
$O(n\log n)$ time. Updating each of the $2n$ $y$-events takes $O(\log n)$
time, and the final propagation of values to the leaves takes $O(n)$ time.
The algorithm uses $O(n)$ space.

We remark that the above algorithm is not optimal in
memory usage in practice. A more efficient variant which stores less variables
in the nodes of the tree can be easily implemented.
}
{
In~\cite{} \textbf{(MICHAL: ref_to_arxiv,ref_to_thesis)}
we describe a minor modification to the algorithm,
which does not assume general position,
i.e., rectangles may share boundaries.
Moreover, it can consider each of the four boundaries
of a rectangle as either 
belonging to the rectangle or not; we call these closed or open boundaries, respectively.
}
}

\subsubsection {A Ply Preserving Reduction}
\label{subsubsec:eff_ver_alg:reduction}

Let \TcollReduc be a collection of open trapezoids with $y$-axis parallel bases
with the following property:
if the top or bottom curves of two different trapezoids intersect
not only in joint endpoints
then the two curves overlap completely in their joint $x$-range.
Let \Arr{\TcollReduc} denote the arrangement
of the trapezoids in \TcollReduc . Notice that 
each arrangement face can be covered by overlapping trapezoids.
We describe a reduction from \TcollReduc to
\RcollReduc, where \RcollReduc is a collection of
axis-parallel rectangles, such that the maximum ply
in \Arr{\RcollReduc} equals to the
maximum ply in \Arr{\TcollReduc}.

In order to define the reduction
we need to have a total order~$<$
on the non-vertical curves of the
trapezoids in~\TcollReduc,
such that one can translate the curves
one by one
according to this order
to $y=-\infty$
without hitting other curves that
have not been moved yet.
Guibas~\&~Yao~\cite{GY-TSR-80} defined
an acyclic relation $\prec$ on a set $C$ of $n$ interior disjoint
$x$-monotone curves as follows:

\begin{definition}
For two such curves $cv_i, cv_j \in C$,
let the open interval $(a,b)$ be the $x$-range of $cv_i$
and the open interval $(c,d)$ be the $x$-range of $cv_j$.\\
If $x\text{-range}(cv_i) \bigcap x\text{-range}(cv_j) \neq\emptyset$ then:\\
\mbox{\ \ } $cv_i \prec cv_j \Leftrightarrow
 cv_i(x) < cv_j(x) \text{ for some } x \in x\text{-range}(cv_i) \bigcap x\text{-range}(cv_j)$.
\end{definition}

As a matter of fact, their definition is more specific,
in a way that the relation $cv_i \prec cv_j$ exists only
if $cv_i$ is the first curve encountered by $cv_j$
in their joint $x$-range
while translating $cv_j$ to $y=-\infty$.
In~\cite{GY-TSR-80} it is also mentioned that $\prec^+$, which is the transitive closure
of $\prec$, is a partial order (as it allows transitivity).
This partial order $\prec^+$ can be extended to a total order $<$ in many ways.
One possible extension is defined as follows:

\begin{definition}
\label{def:ord}
Let $C$ be a set of interior disjoint $x$-monotone curves.
For two curves $cv_i, cv_j \in C$,
let the open interval $(a,b)$ be the $x$-range of $cv_i$
and the open interval $(c,d)$ be the $x$-range of $cv_j$.\\
The total order $<$ on $C$ is defined as follows:\\
\mbox{\ \ } $cv_i < cv_j \Leftrightarrow (cv_i \prec^+ cv_j)$ or $(\neg(cv_j \prec^+ cv_i)$ and $(cv_i$ \emph{left} $cv_j))$ \\
where
$(cv_i$ \emph{left} $cv_j)$ is true if the $x$-value of the left endpoint of $cv_i$ is
less than the $x$-value of the left endpoint of $cv_j$.
\end{definition}

Clearly, if $cv_i \prec^+ cv_j$ is true then $cv_i < cv_j$ is true as well.
If for two different curves $cv_i, cv_j$ the expression $cv_j \prec^+ cv_i$
is true then obviously $cv_i \prec^+ cv_j$ is false
and also the right-hand side expression in the ``or" phrase is false, since~%
$\neg(cv_j \prec^+ cv_i)$ is false. Therefore, $cv_i < cv_j$ is also false.
If the partial order $\prec^+$ does not say anything about $cv_i$ and $cv_j$
then both $(cv_i \prec^+ cv_j)$ and~$(cv_j \prec^+ cv_i)$ are false.
Thus, $cv_i < cv_j$ will be true only if $(cv_i$ \emph{left} $cv_j)$ is true.

Ottmann~\&~Widmayer~\cite{OW-TSLS-83} presented a one-pass
$O(n\log n)$ time algorithm for computing
$<$, as in Definition~\ref{def:ord}, using linear space.
Their algorithm performs a sweep using a horizontal line
from bottom to top which stops at each endpoint of a curve.
The data structure maintained by the algorithm represents
the curves encountered so far in reverse order.
When a bottom endpoint of a curve is met the curve is inserted into an auxiliary structure
holding the active curves only. A curve is removed from the auxiliary structure when its
top endpoint is met by the sweep line.
Since we would like to translate the curves to $y=-\infty$,
then we should only require the curves to be $x$-monotone.
In addition, we can require the curves to be interior disjoint,
rather than completely disjoint.

\begin{definition}
Let Rank$: C\rightarrow\{1,...,n\}$ denote a function
returning the rank of a given $x$-monotone curve $cv \in C$
when sorting $C$ according to the total order~$<$ as defined above. 
\end{definition}

\begin{definition}
\label{def:reduction}
We define a reduction from \TcollReduc to
\RcollReduc as follows;
Every trapezoid~$\tred~\in~\TcollReduc$ is
reduced to a rectangle $\rred \in \RcollReduc$,
such that:
\begin{itemize}
  \item \tred and \rred have the same $x$-range,\\
  i.e., $($left$(\tred) =$ left$(\rred))$ and
  $($right$(\tred) =$ right$(\rred))$,
  where left and right denote the left $x$-value
  and the right $x$-value of \tred (or \rred), respectively.
  \item top$(\rred)$ and bottom$(\rred)$ 
  lie on $y=$Rank$($top$(\tred))$ and
  $y=$Rank$($bottom$(\tred))$, respectively.
\end{itemize}
\end{definition}

Definition~\ref{def:reduction} provides a mapping from \TcollReduc to \RcollReduc,
such that $r$ is the rectangular region corresponding to $t$.
\arxivdcg{
We will now show that this mapping is bijective.
One can partition the plane into
vertical slabs by passing a vertical line through every endpoint
of the subdivision, and then partition each slab into regions by
intersecting it with all possible curves in the subdivision.
This defines a decomposition of the plane into at most~$2(n+1)^2$
regions (see~\cite{CG-alg-app}, for example).

\begin{lemma}
\label{lemma:regions_bijection}
Let Regions(arr) denote the collection of regions
of an arrangement arr, as defined above.
For any region $a_t \in$ Regions(\Arr{\TcollReduc})
let $a_r \in$ Regions(\Arr{\RcollReduc}) be the corresponding
rectangular region to $a_t$.
The collection Regions(\Arr{\RcollReduc})
of all such rectangular regions spans the plane.
\end{lemma}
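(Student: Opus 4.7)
The plan is to analyze the decomposition slab-by-slab and exploit the fact that the total order $<$ on curves (and hence the function Rank) is a linear extension of the ``below'' relation $\prec^+$ of Guibas and Yao. Inside any fixed vertical slab, the curves of $\mathcal{T}^c$ that cross the slab are totally ordered by $y$-position, and this order coincides with the order induced by Rank. This geometric-vs-combinatorial agreement is what transforms the $y$-stacking of regions in a slab into a consecutive $y$-stacking of rectangles.

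First I would fix an arbitrary vertical slab $\sigma = (x_\ell, x_r)$ from the slab decomposition that defines the regions, and list the curves crossing $\sigma$ as $cv_{i_1}, \ldots, cv_{i_k}$ in increasing $y$-order inside $\sigma$. Since the joint $x$-range of any two of these curves contains a subinterval of $\sigma$ and the curves are interior-disjoint and $x$-monotone, the Guibas--Yao relation gives $cv_{i_j} \prec cv_{i_{j+1}}$, hence $cv_{i_j} \prec^+ cv_{i_{j+1}}$, and therefore
\[
\mathrm{Rank}(cv_{i_1}) < \mathrm{Rank}(cv_{i_2}) < \cdots < \mathrm{Rank}(cv_{i_k}).
\]
The regions of $\mathcal{A}(\mathcal{T}^c)$ inside $\sigma$ are the (open) strips bounded by two consecutive curves from this list, together with the unbounded strips at the bottom and top of $\sigma$. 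By Definition~\ref{def:reduction}, such a region maps to the rectangle with the same $x$-range $(x_\ell, x_r)$ and with $y$-range equal to $(\mathrm{Rank}(cv_{i_j}), \mathrm{Rank}(cv_{i_{j+1}}))$, respectively $(-\infty, \mathrm{Rank}(cv_{i_1}))$ or $(\mathrm{Rank}(cv_{i_k}), +\infty)$ for the unbounded ones.

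Next I would observe that, within the slab $\sigma$, the images of the regions therefore form a vertical stack of open rectangles whose $y$-ranges are pairwise disjoint, abut at $y$-values $\mathrm{Rank}(cv_{i_j})$, and together cover $(x_\ell, x_r) \times \mathbb{R}$ except for finitely many horizontal lines through these rank values. Because the open vertical slabs partition the plane up to the finitely many vertical lines through endpoint $x$-coordinates, taking the union over all slabs shows that the rectangles obtained from all regions cover the whole plane, with only a measure-zero set of boundary lines (which are anyway excluded because all rectangles are open) left uncovered; this is precisely what ``spans the plane'' means in this open setting.

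The step I expect to be the main obstacle is the clean translation between the geometric $y$-order of crossing curves in a slab and the combinatorial order Rank, in particular making sure that the consecutive pairs $(cv_{i_j}, cv_{i_{j+1}})$ that bound a region really map to consecutive Rank values \emph{within the slab}. Curves not crossing $\sigma$ may carry intermediate Rank values, and one has to argue that the horizontal lines at those intermediate ranks are harmless because they lie in the (measure-zero) boundary between open rectangles stacked in $\sigma$; here it is essential that the reduction preserves the $x$-range exactly, so that contributions from different slabs do not interfere.
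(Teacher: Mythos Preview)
Your argument is correct and is precisely the detailed version of the paper's own proof, which reads in full: ``Trivial. The slabs remain the same and within each slab the rectangular regions remain adjacent.'' Your closing worry about curves carrying intermediate Rank values dissolves once you note that such a curve does not cross $\sigma$, and since the reduction preserves $x$-ranges exactly, its image horizontal \emph{segment} in $\mathcal{R}^c$ does not cross $\sigma$ either---so nothing at that rank appears in the slab at all, not even as a measure-zero boundary.
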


\begin{proof}
Trivial. The slabs remain the same and
within each slab the rectangular regions remain adjacent.
\myqed
\end{proof}

\begin{lemma}
\label{lemma:reduction_from_trpz_to_rect}
Let $a_t \in$ Regions(\Arr{\TcollReduc}) be a region
and let $a_r \in$ Regions(\Arr{\RcollReduc}) be the
corresponding rectangular region to $a_t$.
The number of rectangles in \RcollReduc that cover $a_r$
is at least the number of trapezoids in \TcollReduc that cover $a_t$.
In other words,
for every $\tred \in \TcollReduc$ that covers $a_t$
its corresponding rectangle $\rred \in \RcollReduc$ covers $a_r$.
\end{lemma}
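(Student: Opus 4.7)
The plan is to show the stronger pointwise statement following the ``In other words'' clause: for every $\tred\in\TcollReduc$ that covers $a_t$, its corresponding rectangle $\rred\in\RcollReduc$ covers $a_r$. Since the reduction of Definition~\ref{def:reduction} maps each trapezoid of $\TcollReduc$ to a (distinct occurrence of a) rectangle of $\RcollReduc$, this pointwise statement immediately yields the cardinality inequality.

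First, I would pin down the geometry in a single slab. Let $(x_l,x_r)$ be the maximal $x$-interval of the slab in which $a_t$ lies, and let $c_T,c_B$ be the curves of $\calA(\TcollReduc)$ that bound $a_t$ from above and below (treating the cases where $a_t$ is unbounded above or below as $c_T=+\infty$, $c_B=-\infty$ with the obvious rank conventions). By Definition~\ref{def:reduction}, $a_r$ has the same $x$-range $(x_l,x_r)$, and its vertical range is $(\text{Rank}(c_B),\text{Rank}(c_T))$. Now fix a trapezoid $\tred$ covering $a_t$. Because $\tred$ covers the whole region $a_t$, the $x$-range of $\tred$ contains $(x_l,x_r)$, and hence the $x$-range of $\rred$ also contains $(x_l,x_r)$, so $a_r$ and $\rred$ agree horizontally on $a_r$'s full $x$-extent.

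For the vertical containment I would argue that $\text{Rank}(\text{bottom}(\tred))\le\text{Rank}(c_B)$ and $\text{Rank}(c_T)\le\text{Rank}(\text{top}(\tred))$. Consider $\text{top}(\tred)$ versus $c_T$; the argument for bottoms is symmetric. If $\text{top}(\tred)=c_T$ there is nothing to prove. Otherwise $\text{top}(\tred)$ lies (non-strictly) above $c_T$ at every $x\in(x_l,x_r)$, because $\tred$ covers $a_t$ and $c_T$ is the upper boundary of $a_t$. The defining hypothesis of $\TcollReduc$ (two top/bottom curves either share only endpoints or coincide on their joint $x$-range) rules out the possibility that the two distinct curves $\text{top}(\tred)$ and $c_T$ share a non-trivial piece while being distinct; hence in the joint $x$-range of $\text{top}(\tred)$ and $c_T$ we have $c_T(x)<\text{top}(\tred)(x)$ at some $x$, giving $c_T\prec\text{top}(\tred)$ in the Guibas--Yao relation. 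By transitivity this yields $c_T\prec^+\text{top}(\tred)$, and by Definition~\ref{def:ord} of the total order $<$ we obtain $c_T<\text{top}(\tred)$, whence $\text{Rank}(c_T)<\text{Rank}(\text{top}(\tred))$.

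Combining the horizontal and vertical containments shows that $\rred$ covers $a_r$, which is exactly the pointwise claim; the lemma then follows as noted. The main technical point that needs care is the step invoking the $\TcollReduc$ hypothesis to exclude a potential crossing between $c_T$ and $\text{top}(\tred)$ within the slab: I would verify that the relation $c_T\prec\text{top}(\tred)$ is established on a non-empty sub-interval of their joint $x$-range, and that the overlap clause of the hypothesis indeed precludes the two curves from swapping vertical order elsewhere in that joint $x$-range, so that the acyclic relation $\prec^+$ gives a consistent total order compatible with the vertical order inside the slab. All other steps are routine once this ordering step is in place.
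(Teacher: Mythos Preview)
Your proposal is correct and follows essentially the same route as the paper: show the pointwise claim, use preservation of $x$-ranges for horizontal containment, and compare ranks of the top/bottom curves for vertical containment. If anything, you are more careful than the paper, which simply asserts that ``according to Definition~\ref{def:reduction}, it immediately follows that $\text{Rank}(\text{top}(\tred_i)) \geq \text{Rank}(\text{top}(a_t))$''; you actually justify this via $c_T \prec \text{top}(\tred)$ and Definition~\ref{def:ord}. One small phrasing note: the exclusion of a non-trivial overlap between distinct curves $c_T$ and $\text{top}(\tred)$ is more cleanly justified by the fact that Rank is defined on a set of \emph{interior-disjoint} curves, rather than by the \TcollReduc hypothesis itself (which is phrased to also permit the ``coincide on the joint $x$-range'' case, occurring precisely when two trapezoids share the same input curve as their top).
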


\begin{proof}
Let $\{\tred_1,\tred_2,...,\tred_m\}\subseteq \TcollReduc$
be the set of trapezoids, ordered by creation time,
such that for every $i\in\{1,...,m\}$, $\tred_i$ covers $a_t$.
Let $\{\rred_1,\rred_2,...,\rred_m\}\subseteq \RcollReduc$
be the set of corresponding
rectangles, such that $\rred_i$ corresponds to $\tred_i$ for
$i \in \{1,...,m\}$.
For any $\tred_i$, since $\tred_i$ covers $a_t$
we get that $x\text{-range}(a_t)\subseteq x\text{-range}(\tred_i)$.
By Definition~\ref{def:reduction} the
$x$-ranges remain the same after the reduction,
and therefore $x\text{-range}(a_r) \subseteq x\text{-range}(\rred_i)$.
Since $\tred_i$ covers $a_t$ then we also get that
in the shared $x$-range top$(\tred_i)$ is above or on top$(a_t)$
and bottom$(\tred_i)$ is below or on bottom$(a_t)$.
According to Definition~\ref{def:reduction},
it immediately follows that
Rank$($top$(\tred_i)) \geq$ Rank$($top$(a_t))$.
In other words, top$(\rred_i)$ is above or on top$(a_r)$.
Similarly, bottom$(\rred_i)$ is below or on bottom$(a_r)$.
We conclude that $\rred_i$ covers $a_r$.
\myqed
\end{proof}

\begin{lemma}
\label{lemma:reduction_from_rect_to_trpz}
Let $a_r\in$ Regions$(\Arr{\RcollReduc})$ be a rectangular region,
whose corresponding region is $a_t \in$ Regions$(\Arr{\TcollReduc})$.
The number of trapezoids in \TcollReduc that cover $a_t$
is at least the number of rectangles in \RcollReduc that cover $a_r$.
In other words,
for every $\rred \in \RcollReduc$ that covers $a_r$
its corresponding trapezoid $\tred \in \TcollReduc$ covers $a_t$.
\end{lemma}

\begin{proof}
Let $\{\rred_1,\rred_2,...,\rred_m\}\subseteq \RcollReduc$
be the set of rectangles,
such that for every $i\in\{1,...,m\}$, $\rred_i$ covers $a_r$.
Let $\{\tred_1,\tred_2,...,\tred_m\}\subseteq \TcollReduc$
be the set of corresponding
trapezoids, such that $\tred_i$ corresponds to $\rred_i$ for
$i \in \{1,...,m\}$.
Proving that for any $i \in \{1,...,m\}$, $\tred_i$ covers $a_t$,
is done symmetrically to the proof of Lemma~\ref{lemma:reduction_from_trpz_to_rect}.
\myqed
\end{proof}

Combining Lemma~\ref{lemma:reduction_from_trpz_to_rect} and
Lemma~\ref{lemma:reduction_from_rect_to_trpz}
we conclude that the number of trapezoids in \TcollReduc
that cover a region $a_t$ equals to the number
of rectangles in \RcollReduc that cover $a_r$, which is
the corresponding region to $a_t$.
The covering rectangles are the reduced trapezoids
in the set of trapezoids covering $a_t$.
Since both Regions$(\Arr{\TcollReduc})$
and Regions$(\Arr{\RcollReduc})$ span the plane
(Lemma~\ref{lemma:regions_bijection}),
In summary, we obtain the following theorem.
}
{
In Appendix~\ref{appendix_reduction_bijection} we show that this mapping is bijective.
We show there that the number of trapezoids in \TcollReduc
that cover a region $a_t$ equals to the number
of rectangles in \RcollReduc that cover $a_r$, which is
the region corresponding to $a_t$.
In summary, we obtain the following theorem.
}

\begin{theorem}
\label{theorem:reduction_correctness}
Let \TcollReduc be a collection of
open trapezoids with the following properties:
their bases are $y$-axis parallel (vertical walls) and if
the top or bottom curves of two different trapezoids intersect
not only in joint endpoints
then the two curves overlap completely in their joint $x$-range.
Let \Arr{\TcollReduc} denote the arrangement
of the trapezoids in~\TcollReduc. Notice
that each arrangement face can be covered by
overlapping trapezoids.
\TcollReduc can be reduced to a collection of open
axis-parallel rectangles~\RcollReduc,
such that the maximum ply in \Arr{\RcollReduc} equals
to the maximum ply in \Arr{\TcollReduc}.
\end{theorem}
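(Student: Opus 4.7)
The plan is to establish the equality of maximum ply via a canonical region-bijection between the two arrangements that preserves covering multiplicity. I would first decompose the plane by passing a vertical line through every $x$-coordinate of a trapezoid endpoint of $\TcollReduc$, and then within each vertical slab split by all top/bottom curves active in that slab, yielding the faces of $\Arr{\TcollReduc}$. Since the reduction in Definition~\ref{def:reduction} preserves $x$-ranges, the vertical slab structure is identical in $\Arr{\RcollReduc}$; so it suffices to argue that within each slab the top-to-bottom ordering of faces is the same in both arrangements.

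The key enabling fact is that $\text{Rank}$ is a faithful proxy for vertical height inside each slab. The defining property of $\TcollReduc$ guarantees that two distinct top/bottom curves either share no interior point in their common $x$-range or coincide on it entirely, so one curve lies strictly below the other throughout their shared $x$-range whenever they do not coincide there. Consequently, the Guibas--Yao relation $\prec^+$ captures the vertical order inside every slab, and its extension $<$ from Definition~\ref{def:ord} orders the curves so that $\text{Rank}$ is monotone with respect to height inside each slab.

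With this in hand, I would prove two symmetric containments on corresponding regions $a_t \leftrightarrow a_r$. For the forward direction, if $\tred \in \TcollReduc$ covers $a_t$, then $\Top(\tred)$ lies at or above $\Top(a_t)$ throughout the common $x$-range, so $\text{Rank}(\Top(\tred)) \geq \text{Rank}(\Top(a_t))$; the analogous inequality holds for the bottoms; combined with equal $x$-ranges under the reduction, this yields that the reduced rectangle $\rred$ covers $a_r$. The reverse direction runs symmetrically: the Rank inequalities translate back via $\prec^+$ into vertical containment of the curves on the common $x$-range, and hence into $\tred$ covering $a_t$.

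The main obstacle is the delicate interplay between the combinatorial total order $<$ and the geometric covering relation. The pairwise non-crossing/full-overlap hypothesis on $\TcollReduc$ is essential: without it, two curves could swap vertical order across a slab boundary, breaking the monotonicity of $\text{Rank}$ with height and invalidating the reduction. Given the region bijection together with the resulting equality of covering multiplicities on each pair of corresponding faces, the maximum ply in $\Arr{\RcollReduc}$ necessarily matches that in $\Arr{\TcollReduc}$, completing the proof.
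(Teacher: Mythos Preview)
Your proposal is correct and follows essentially the same approach as the paper: both arguments decompose the plane into slab-based regions (the paper's Lemmas~\ref{lemma:regions_bijection}--\ref{lemma:reduction_from_rect_to_trpz} in Appendix~\ref{appendix_reduction_bijection}), establish a region bijection preserving $x$-ranges, and then prove the two symmetric containment directions by translating vertical above/below relations in the shared $x$-range into $\text{Rank}$ inequalities via $\prec^+$. Your explicit justification of why $\text{Rank}$ is monotone with height inside each slab (using acyclicity of $\prec^+$ and the non-crossing/full-overlap hypothesis) is in fact slightly more detailed than the paper's treatment, which states the reverse direction as ``symmetric'' without spelling it out.
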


\subsubsection{ Modification of Alt \& Scharf}
\label{subsubsubsec:eff_ver_alg:depth_general}

Based on the correctness of the reduction
described above 
we can extend the basic algorithm by Alt \& Scharf~\cite{as-cdaaa-13}
to support not only collections of axis-aligned rectangles but
also collections of open trapezoids with $y$-axis parallel bases and
non-intersecting top and bottom boundaries, if they intersect
not only in joint endpoints
then they overlap completely in their joint $x$-range.
The only part of the basic algorithm that should change
is the top-to-bottom sweep.
More precisely, the simple predicate that is used for sorting the $y$-events
should be replaced with a new predicate that
compares according to the reverse order of~$<$, 
as given in Definition~\ref{def:ord}.
The total order~$<$ can be computed in a preprocessing phase
using the algorithm in~\cite{OW-TSLS-83}.

Notice that for simplicity
we assumed that no two distinct endpoints
in the original subdivision have
the same $x$-value.
However, if this is not the case,
lexicographical comparison can be used on the endpoints
of the curves in order to define the order
of the induced vertical walls.

\subsection{Summary}
\label{subsec:eff_ver_alg:static_bounds}

The two algorithms described in Subsection~\ref{subsec:eff_ver_alg:rec}
and Subsection~\ref{subsec:eff_ver_alg:det} can be used
for defining
efficient construction algorithms for static settings,
according to the scheme presented in Theorem~\ref{thrm:gnrl_preproc_time}.

Using either verification algorithm, 
a construction algorithm with expected $O(n\log {n})$ running time is obtained,
implying the following main contribution of the paper:

\begin{theorem}
\label{thrm:new}
Let $S$ be a set of $n$
pairwise interior disjoint $x$-monotone curves
inducing a planar subdivision.
A \PL data
structure for $S$, which has~$O(n)$ size and $O(\log{n})$ query time
in the worst case, can be built in expected~$O(n\log{n})$ time.
\end{theorem}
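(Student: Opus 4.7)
The plan is to assemble the theorem as a direct corollary of the machinery already developed in the paper, rather than to prove anything new from scratch. Specifically, Theorem~\ref{thrm:gnrl_preproc_time} reduces the task to producing an efficient verification procedure for $\lqpl$: once we can certify in time $f(n)$ that the longest search path in a freshly built linear-size DAG is $O(\log n)$, the overall expected running time becomes $O(n \log n + f(n))$, because Lemma~\ref{lemma:prelim:guaranteed:constant_rebuilds} bounds the expected number of rebuilds by a constant and the linear-size check can be performed on the fly during construction.

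First I would invoke the basic RIC of Mulmuley and Seidel to build a trapezoidal-map DAG \G in expected $O(n \log n)$ time; while building, I would maintain the running size $\calS$ and abort/restart whenever $\calS$ exceeds $c_1 n$. Once the construction finishes with $\calS < c_1 n$, I would run a verification of $\lqpl$; here I can pick either of the two algorithms of Section~\ref{sec:eff_ver_alg}. Using the recursive algorithm of Subsection~\ref{subsec:eff_ver_alg:rec}, Proposition~\ref{pro:recursive_preproc_time} gives $f(n) = O(n \log n)$ in expectation; alternatively, using the ply-based algorithm of Subsection~\ref{subsec:eff_ver_alg:det}, together with the ply-preserving reduction of Theorem~\ref{theorem:reduction_correctness} and Observation~\ref{obs:path_len_and_ply_relation}, gives $f(n) = O(n \log n)$ deterministically. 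In either case, if the verification certifies $\lqpl < c_2 \log n$ we output \G; otherwise we restart with a fresh random permutation.

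Plugging either bound on $f(n)$ into Theorem~\ref{thrm:gnrl_preproc_time} yields an expected preprocessing time of $O(n \log n) + O(n \log n) = O(n \log n)$. The linear size and logarithmic worst-case query time follow by construction from the two certified invariants $\calS < c_1 n$ and $\lqpl < c_2 \log n$, since the worst-case cost of any point-location query is precisely proportional to $\lqpl$.

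There is essentially no remaining obstacle: all the nontrivial ingredients have been established earlier, in particular the fact that $f(n) = O(n \log n)$ is achievable (which was the central technical contribution of Section~\ref{sec:eff_ver_alg} and which, in the recursive version, relied on the delicate bijection of Proposition~\ref{prop:identical_search_paths_per_query} together with the injection argument of Lemma~\ref{lem:injection_lemma} bounding $\E[w_{t'} \mid \delta_{t'}=1]$ by a constant). The only thing to be careful about is bookkeeping: one must verify that the constant expected number of rebuilds from Lemma~\ref{lemma:prelim:guaranteed:constant_rebuilds} still applies when rebuilds may be triggered by either $\calS$ or $\lqpl$, but this is immediate since the two thresholds are chosen so that each is exceeded with probability at most a small constant, and the events can be union-bounded.
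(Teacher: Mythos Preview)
Your proposal is correct and follows exactly the same route as the paper: the theorem is stated as an immediate consequence of plugging either verification algorithm (Proposition~\ref{pro:recursive_preproc_time} or the ply-based algorithm of Subsection~\ref{subsec:eff_ver_alg:det}) into the general scheme of Theorem~\ref{thrm:gnrl_preproc_time}. The paper itself offers no additional argument beyond this assembly, so nothing is missing from your write-up.
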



 \section{Conclusions and Open Problems}
\label{sec:conclusion}

In this work we have described an optimal variant of a known algorithm for point location:
the randomized incremental construction of the trapezoidal map. 
This fundamental 
point location algorithm supports general
$x$-monotone curves and guarantees logarithmic query time and linear space.
Previously with such guarantees, the expected construction time of the randomized search structure
was $O(n\log^2{n})$, as was mentioned in~\cite{CG-alg-app}.
Their construction algorithm uses an auxiliary algorithm
for verifying that the maximal query path length is logarithmic,
whose expected time complexity is $O(n\log^2{n})$.
The latter
dominates the overall construction complexity.
However, we have proposed two novel verification algorithms%
---either of which could
be used instead when constructing the search structure.
These two efficient verification algorithms allow an expected $O(n\log{n})$ construction time while having the same guarantees.

The two possible verification algorithms
we have described can both be plugged into the suggested construction scheme.
The first algorithm operates directly on the DAG~\G and has a recursive nature.
Its analysis relies on a bijection
between all search paths in~\G
and those of the trapezoidal search tree~\TrpSrchTree.
The second algorithm has a deterministic $O(n \log {n})$ time complexity,
and it is based on the computation of the maximal ply of all
trapezoids that existed during the construction of the DAG~\G.
While the latter is deterministic,
the former
does not require the construction of any other auxiliary structures
and only uses the already constructed DAG.

Another contribution of this work, which in fact triggered the
entire project,
is the study of the fundamental difference
between the length~\lqpl of the longest search path
and the DAG depth~\depth, which is the length of the longest path in the constructed DAG.
Clearly, efficiently computing the value of~\lqpl is not trivial,
whereas~\depth can be easily accessed.
We have clarified why the two entities
are not trivially interchangeable and proved that
the worst-case ratio of $\depth/\lqpl$
is in~$\Theta(n/\log n)$.

One major open problem, which is of theoretic interest, is whether it is
sufficient to simply check the~\depth during the construction,
as it is in fact done in the current \cgal
implementation~\cite{HemmerKH12}, and still expect a constant number of rebuilds.
In other words, can we still expect a constant number
of rebuilds if we just rely on~\depth, which is only an upper bound of~\lqpl?



\section{Acknowledgement}
\label{sec:ack}
We thank Haim Kaplan for many helpful discussions. 


\bibliography{how_to_cite_cgal,bibliography,links}



\bibliographystyle{plain}


\newpage
\appendix
\section{The Trapezoidal Search Tree~\TrpSrchTree}
\label{appendix_trapezoidal_search_tree}

We have shown a bijection
between all search paths in the DAG~\G
and those of the trapezoidal search tree~\TrpSrchTree (see Section~\ref{sec:relation_G_and_T}).
Using this bijection we were able to devise an efficient
recursive verification algorithm that is described in Subsection~\ref{subsec:eff_ver_alg:rec}.
The analysis of the algorithm relies on
the expected $O(n\log{n})$ size of~\TrpSrchTree.
Even though this bound seems to be folklore,
we have not found the source in which it is actually proven.
Therefore, we provide here a proof.

\begin{definition}
Let $S$ be a set of $n$
pairwise interior disjoint $x$-monotone curves
inducing a planar subdivision.
The trapezoidal search tree~\TrpSrchTree
for $S$ is a full binary tree constructed as the DAG~\G using the same insertion order
while skipping the merge step.
\end{definition}

The following lemma bounds the expected size of \TrpSrchTree to be $O(n \log n)$.
 \\
 \\
\noindent\textbf{Lemma 2} \emph{Let $S$ be a set of $n$
pairwise interior disjoint $x$-monotone curves
inducing a planar subdivision.
The expected number of leaves
in the trapezoidal search tree \TrpSrchTree,
which is constructed as the DAG but without merges,
is~$O(n\log{n})$.}

\begin{proof}
We would like to bound the expected number
of leaves in~\TrpSrchTree, namely,
the expected number
of trapezoids in the decomposition without merges.
To ease the argument, we can symbolically shorten every curve at its two endpoints
by an arbitrarily small value $\varepsilon>0$.
In other words, if a curve $cv$ 
has an $x$-range $(a,b)$,
then the shortened curve will have an $x$-range $(a+\varepsilon, b-\varepsilon)$.
The curves of the updated subdivision are now completely disjoint.
This operation only gives rise to new artificial trapezoids.
Hence, it is sufficient to bound the expected number of trapezoids in this subdivision
of shortened curves.

Now we would like to bound the number of trapezoids in the set
of shortened curves.
It is clearly bounded by the number of vertical edges plus 1.
First, consider the vertical line~$W$ through one endpoint of a curve $cv$.
$W$ is intersected by $m$ curves.
Suppose that $cv$ is the $i$th inserted curve among these $m$ curves.
The $i-1$ already inserted curves partition $W$ into $i$ intervals.
However, we are only interested in the interval $I$ containing the
endpoint of~$cv$, as it will
appear in the final structure.
Curves inserted after~$cv$ may split~$I$.
The expected number of intersections in $I$
(including the endpoint of~$cv$) is~$O((m-i)/i)$.
The probability that $cv$ will be inserted $i$th is $\frac{1}{m}$.
Summing up over all possible insertion orders we get that the expected number
of intersections in $I$ is $\sum\limits_{i=1}^{m} \frac{1}{m}\cdot\frac{(m-i)}{i} = O(\log{m})$.
However, since $m \leq n$,
this number can be bounded $O(\log{n})$.
There are $O(n)$ vertical walls, giving a total of expected $O(n\log{n})$ intersections.
Thus, the expected number of vertical edges is $O(n\log{n})$ as well,
and, clearly, this is also the expected number of leaves in the tree.

\ignore{
First, consider the vertical line~$W$ through one endpoint of the 
$i$th inserted curve.
$W$ is intersected by $n$ curves, in the worst case.
The $i-1$ already inserted curves partition $W$ into $i$ intervals.
However, we are only interested in the interval $I$ containing the
endpoint of the $i$th curve, as it will
appear in the final structure.
Curves inserted after the $i$th curve may split $I$.
The expected number of intersections in $I$
(including the endpoint of the $i$th curve) is $O((n-i)/i)$.
Summing up over all vertical walls gives a total of expected $O(n\log{n})$ intersections.
Thus, the expected number of vertical edges is $O(n\log{n})$ as well,
and, clearly, this is also the expected number of leaves in the tree.}
\myqed
\end{proof}

\clearpage


\section{Bijection between Search Paths in~\G and~\TrpSrchTree}
\label{appendix_bijection}
We provide here the full case analysis
that is needed for proving Proposition~\ref{prop:identical_search_paths_per_query}.
\\


\noindent
{\bf Proposition 1}
{\em Let $S$ be a set of $n$
pairwise interior disjoint $x$-monotone curves
inducing a planar subdivision.
Let \G and~\TrpSrchTree be the DAG and the trapezoidal search tree
created using the same permutation of the curves in $S$, respectively.
There exists a canonical bijection among all search paths in~\G and those of~$\TrpSrchTree$,
that is, for any query point $q$, the corresponding search paths for $q$
in~\G and~\TrpSrchTree are identical up to bouncing nodes.}
%
\begin{proof}
Let~$q$ be a query point and~$t$ and~$t'$ be the leaf trapezoids containing~$q$
in~$\G$ and~\TrpSrchTree, respectively. Obviously, the top and bottom curves
of~$t$ and~$t'$ are identical and~$t$ covers~$t'$ since merges are only allowed
in~\G.
Suppose that while searching for~$q$ we maintain the history-interval of possible~$x$-values\footnote{Since we do not require the
points to be in general position we consider the lexicographic order of the points.
Therefore, this interval which is referred to as $x$-interval is essentially defined by the $x$,$y$ coordinates of two points}.
At the end of the search in~\TrpSrchTree this interval is obviously identical
to the~$x$-range of~$t'$.
We show by induction the bijection between the two search paths for~$q$ and in
fact we also show that the history intervals maintained while searching in~\G
and~\TrpSrchTree are identical, i.e., the history interval that is eventually
obtained by the search in~\G is identical to the~$x$-interval of~$t'$.

Let~$\G_{i}$, $\TrpSrchTree_{i}$ denote the DAG and the trapezoidal search tree
after the first~$i$ curves were inserted, respectively.
We denote by~$t_{i}$ and~$t_{i}'$ the trapezoids containing the query point~$q$
in~$\G_{i}$ and~$\TrpSrchTree_{i}$, respectively.
Let~$(a_{i}, b_{i})$ and~$(a'_{i}, b'_{i})$ denote the~$x$-intervals
of~$t_i$ in~$\G_i$ and~$t'_i$ in~$\TrpSrchTree_i$, respectively.
The base case is trivial since~$\G_1 = \TrpSrchTree_1$.
Now suppose that the statement holds for~$i-1$.
We show that it holds for~$i$ as well.
The~$i$th curve~$cv_i(p_i,q_i)$ is now inserted into both~$\G_{i-1}$ and~$\TrpSrchTree_{i-1}$.
The basic argument is as follows. For both endpoints of~$cv_i$ there are essentially
three cases:
(i) the point is outside $t_{i-1}$ and $t'_{i-1}$: the point has no effect on both paths.
(ii) the point is inside $t_{i-1}$ and $t'_{i-1}$: the point shows up as a normal node in both paths
and the history intervals are updated accordingly.
(iii) the point is inside $t_{i-1}$ but not in $t'_{i-1}$: the point has no effect on the search path
in $T_i$ while it may show up on the search path in~$\G_i$, but only as a bouncing node, i.e., the
history interval remains unchanged.
Figure~\ref{fig:induction_all_full} shows the 15 possible positions to insert $cv_i$ with respect to
$t_{i-1}$ and $t'_{i-1}$.
We denote the five different vertical slabs (regions) depicted in the figure by $A,B,C,D$, and $E$.
Note that regions $B$ and $D$ may have zero width.

\begin{figure} [h] 
    \centering
    \vspace{-10pt}
    \includegraphics[width=0.45\textwidth]{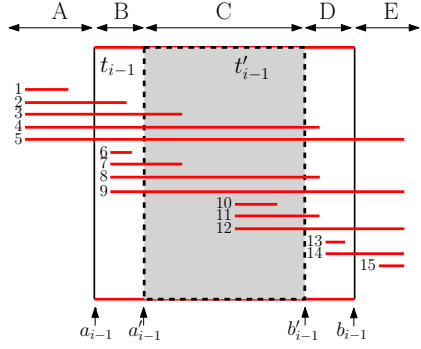}
   \caption[Optional Next-Step in the Induction]{
   \capStyle{Possible positions for $cv_i$ in relation to trapezoid $t_{i-1}$ in $\G_{i-1}$, which covers trapezoid $t'_{i-1}$ in $\TrpSrchTree_{i-1}$. }}
        \label{fig:induction_all_full}
\end{figure}
\noindent

For ease of reading we group the optional positions according to
the region containing $p_i$ as follows:
\begin{itemize}
    \item $p_i$ is located at region~$A$ (positions 1-5 in Figure~\ref{fig:induction_all_full}).
         For these positions $p_i$ will not be added to the path to $q$ in either $\G_i$ or $\TrpSrchTree_i$ (case (i)).
         We now distinguish the different cases depending on the position of $q_i$.
         \begin{itemize}
            \item Position~1: $q_i$ lies in region~$A$ as well and, therefore, 
                will not affect both paths (case (i)).
                Clearly,
                $t_i = t_{i-1}$ and $t_i' = t_{i-1}'$.
            \item Position~2: $q_i$ lies in region~$B$ (case (iii)).
                $q_i$ will not be added to the path to $q$ in~$\TrpSrchTree_i$.
                However, it will be added to the path in $\G_i$ but only as a bouncing node for this path.
                The $x$-interval maintained during the search in $\G_i$ will not be affected.
            \item Position~3: $q_i$ lies in region~$C$ (case (ii)).
                The search paths for $q$ in both $\G_i$ and $\TrpSrchTree_i$
                will include $q_i$.
                If $q$ is in the $x$-range of $cv_i$ then an additional
                internal node representing $cv_i$
                will appear in the path for $q$ in both structures.
                The interval $(a_{i}', b_{i}')$ in such a case would be
                $(a_{i-1}',q_i)$.
                If, on the other hand, $q$ is to the left of $q_i$ then the
                new interval would be $(q_i,b_{i-1}')$.
            \item Position~4: $q_i$ lies in region~$D$ (case (iii)).
                Similar to the case where $q_i$ lies in region~$B$.
                In addition, since $cv_i$ intersects~$t_{i-1}'$
                completely, an internal node $cv_i$ will be added to both structures.
            \item Position~5: $q_i$ lies in region~$E$ and, therefore,
                will not affect both paths (case (i)).
                Since $cv_i$ intersects~$t_{i-1}'$
                completely, an internal node $cv_i$ will be added to both structures.
         \end{itemize}
    \item $p_i$ is located in region~$B$ (positions 6-9 in Figure~\ref{fig:induction_all_full}). 
         For these positions~$p_i$ will not be added to the path to $q$ in~$\TrpSrchTree_i$.
         However, it will be added to the path in~$\G_i$ but only as a bouncing node for this path, since it is not contained in $(a_{i-1}',b_{i-1}')$ (case (iii)).
         We now distinguish the several cases depending on the position of $q_i$.
         \begin{itemize}
            \item Position~6: $q_i$ lies in region~$B$ (case (iii)).
                In such a case $q_i$ will be added to the query path
                to $q$ as a bouncing node in $\G_i$,
                but will not affect the interval maintained during the search since $q_i$ is not
                contained in $(a_{i-1}',b_{i-1}')$.
                The query path to $q$ in $\TrpSrchTree_i$ will not change, since $cv_i$ does not intersect~$t_{i-1}'$.
            \item Position~7: $q_i$ lies in region~$C$ (case (ii)).
                The search paths for $q$ in both $\G_i$ and $\TrpSrchTree_i$
                will include $q_i$.
                If $q$ is in the $x$-range of $cv_i$ then an additional
                internal node representing $cv_i$
                will appear in the path for $q$ in both structures.
                The interval $(a_{i}', b_{i}')$ in such a case would be
                $(a_{i-1}',q_i)$.
                If, on the other hand, $q$ is to the left of $q_i$ then the
                new interval would be $(q_i,b_{i-1}')$.
            \item Position~8: $q_i$ lies in region~$D$ (case (iii)).
                Similar to the case where $q_i$ lies in region~$B$.
                In addition, since $cv_i$ intersects~$t_{i-1}'$
                completely, an internal node $cv_i$ will be added to both structures.
            \item Position~9: $q_i$ lies in region~$E$ and, therefore, will not affect both paths (case (i)).
                Since $cv_i$ intersects~$t_{i-1}'$
                completely, an internal node $cv_i$ will be added to both structures.
         \end{itemize}
    \item $p_i$ is located inside in region~$C$ (positions 10-12 in Figure~\ref{fig:induction_all_full}). 
         In these positions $p_i$ will be added to the search path of a query point~$q$
         that lies in $t_{i-1}'$ both in~$\G_i$ and in~$\TrpSrchTree_i$,
         since it is contained in $(a_{i-1}',b_{i-1}')$ (case (ii)).
         We now distinguish the several cases depending on the position of $q_i$.
         \begin{itemize}
            \item Position~10: $q_i$ lies in region~$C$ (case (ii)).
                $cv_i$ is contained completely in region~$C$.
                The same internal nodes, depending on the position of~$q$, will
                be added for both search structures.
            \item Position~11: $q_i$ lies in region~$D$ (case (iii)).
                If the query point~$q$ is located to the left of $p_i$ then
                no new node (other than $p_i$) will be added to the search paths
                of $q$ in both~$\G_i$ and~$\TrpSrchTree_i$.
                If, on the other hand,~$q$ is in the $x$-range of $cv_i$
                then~$q_i$ will be added to the path as a bouncing node in $\G_i$,
                but will not appear in~$\TrpSrchTree_i$.
                In addition the paths in the two structures will be added with a
                node representing~$cv_i$.
            \item Position~12: $q_i$ lies in region~$E$ and, therefore,
                will not affect both paths (case (i)).
                Depending on the location of~$q$, an internal node~$cv_i$ may be added
                to the paths in both structures.
         \end{itemize}
      \item $p_i$ is located in region~$D$ (positions 13-14 in Figure~\ref{fig:induction_all_full}). 
         For these positions~$p_i$ will not be added to path to $q$ in~$\TrpSrchTree_i$.
         However, it will be added to the path in $\G_i$ but only as a bouncing node for this path (case (iii)).
         In both positions $q_i$ is to the right of $p_i$ and is, therefore,
         blocked by $p_i$ for query points that lie in~$t_{i-1}'$ and will not be added
         to the search paths of such points.
       \item $p_i$ is located to the right of $t_{i-1}$, that is, in region~$E$ (case (i)).
         In Figure~\ref{fig:induction_all_full} the relevant position is~15.
         Both $p_i$ and $q_i$, which is located to the right of $p_i$, will not affect the search paths for $q$ in both structures.
\end{itemize}

\noindent In each of these 15 different cases, whenever $p_i$ or $q_i$ are only added to $\G_i$,
the node will appear as a bouncing node for the path to $q$.

\myqed
\end{proof}

\clearpage

\section{An Algorithm for Computing the Ply of an Arrangement of Axis-aligned Rectangles}
\label{appendix_alt_scharf}

The algorithm of Alt \& Scharf~\cite{as-cdaaa-13}
is an $O(n\log n)$ algorithm that computes
the maximum ply of an arrangement of axis-aligned rectangles
in general position, using $O(n)$ space.
{
We present a minor modification,
which does not assume general position,
i.e., rectangles may share boundaries.
Moreover, it can consider each of the four boundaries
of a rectangle as either 
belonging to the rectangle or not; we call these closed or open boundaries, respectively.

Given a set of $n$ axis-aligned rectangles,
let $x_1,x_2,...,x_k$, $k \leq 2n$ be the sorted set of $x$-coordinates
of the vertical sides of the rectangles.
The ordered set of intervals $\cal I$ is defined as follows;
for $i\in{1,2,...,k-1}$,
the~$2(i-1)$th and~$2(i-1)+1$st intervals in the set $\cal I$
are $[x_{i},x_{i}]$ and $(x_{i},x_{i+1})$, respectively.
The last interval is $[x_k,x_k]$.
A balanced binary tree $T$ is then constructed,
holding all intervals in $\cal I$ in its leaves,
according to their order in $\cal I$.
An internal node represents the union of
the intervals of its two children,
which is a contiguous interval.
In addition, each internal node $v$ stores in a variable~$v.x$
the $x$-value of the merge point between the intervals
of its two children. Since we extended the algorithm
to support both open or closed boundaries, internal nodes
also maintain a flag indicating whether the merge point
is to the left or to the right of the $x$-value.

According to the description of the algorithm in~\cite{as-cdaaa-13},
a sweep is performed using a horizontal line from
$y=\infty$ to $y=-\infty$.
The sweep-line events occur when a rectangle starts or ends,
i.e., when top or bottom boundary of a rectangle is reached.
Since the rectangles are not in general position,
several events may share the same $y$-coordinate.
In such a case, the order of event processing in
each $y$-coordinate is as follows:
\begin{enumerate}
  \item Closing rectangle with open bottom boundary events.
  \item Opening rectangle with closed top boundary events.
  \item Closing rectangle with closed bottom boundary events.
  \item Opening rectangle with open top boundary events.
\end{enumerate}
The order of event processing within each of these
four groups in a specific $y$-coordinate is not important.

The basic idea of the algorithm is that
each sweep event updates the 
leaves of the tree $T$
that span the intervals that are covered by the event.
Therefore, each leaf holds a counter $c$ for the
number of covering rectangles in the current position
of the horizontal sweep line.
In addition, each leaf maintains in
a variable $c_m$ the maximal number of
covering rectangles for this leaf seen so far.
Clearly, the maximal coverage of an interval
is the maximal $c_m$ of all leaves.
The problem with this na\"{i}ve approach
is that one such update can already
take $O(n)$ time.
Therefore, the key idea of~\cite{as-cdaaa-13}
is that when updating an event of a rectangle
whose $x$-range is $(a,b)$,
one should follow only two paths;
the path to $a$ and the path to $b$.
The nodes on the path should hold the
information of how to update the
interval spanned by their children.
In the end of the update the union of
intervals spanned by the updated nodes
(internal nodes and only 2 leaves) is $(a,b)$.

In order to hold the information in the internal nodes
each internal node should maintain the following variables:
\begin{itemize}
\item[$l$]
A counter storing the difference between the number of
rectangles that were opened and that were closed
since the last traversal of the left child of $v$
and that cover the interval spanned by that child.
\item[$r$]
A counter storing the difference between the number of
rectangles that were opened and that were closed
since the last traversal of the right child of $v$
and that cover the interval spanned by that child.
\item[$l_m$]
A counter storing the maximum value of~$l$
since the last traversal of that child.
\item[$r_m$]
A counter storing the maximum value of~$r$
since the last traversal of that child.
\end{itemize}

\noindent
A leaf, on the other hand, holds two variables:
\begin{itemize}
\item[$c$]
The coverage of the associated interval during the sweep
at the point the leaf was traversed for the last time. 
\item[$c_m$]
The maximum coverage of the associated
interval during the sweep from the start until the leaf was traversed for the last time. 
\end{itemize}

\noindent
In relation to these values we define the following functions:
\[
 t(v) =  \left\{ \begin{array}{ll}
        u.l + t(u) & \mbox{if $v$ is the left child of $u$} \\
        u.r + t(u) & \mbox{if $v$ is the right child of $u$} \\
        0 & \mbox{if $v$ is the root} \\
        \end{array} \right.
,
\]

\[
 t_m(v) =  \left\{ \begin{array}{ll}
        max(u.l_m, u.l + t_m(u)) & \mbox{if $v$ is the left child of $u$} \\
        max(u.r_m, u.r + t_m(u)) & \mbox{if $v$ is the right child of $u$} \\
        0 & \mbox{if $v$ is the root} \\
        \end{array} \right.
 .
\]

\noindent
At any point of the sweep the following two invariants hold for every leaf $\ell$
and its associated interval $I$:
\begin{itemize}
    \item
    The current coverage of $I$ is: $\ell.c + t(\ell)$.
    \item
    The maximum coverage of $I$ that was seen so far is: $\max(\ell.c_m, \ell.c + t_m(\ell))$.
\end{itemize}

\noindent
Updating the structure with an event is done as follows:
Let $I$ be the $x$-interval spanned by the processed rectangle creating the event.
Depending on whether the rectangle starts or ends,
we set a variable $d=1$ or $d=-1$, respectively.
We follow the two search paths to the leftmost leaf and the rightmost leaf that are covered by $I$.
In the beginning the two paths are joined until they split, for every node $w$ on this path
(including the split node) we can ignore~$d$ and simply
update the tuple $(w.l,w.r,w.l_m,w.r_m)$ using $t(w)$ and $t_m(w)$
according to the invariants stated above. Note that this process needs to clear the
corresponding values in the parent node as otherwise the invariants would be violated.%
\footnote{Notice that using $t(w)$ and $t_m(w)$ here
takes constant time since we only need to access the parent
node as all previous nodes on the path towards the root are already processed.}
After the split the paths are processed separately.
We discuss here the left path, the behavior for the right path is symmetric.
Let $v$ be a node on the left path.
As long as $v$ is not a leaf we update $(v.l,v.r,v.l_m,v.r_m)$ as usual.
However, if the path continues to the left we also have to incorporate $d$ into
$v.r$ and $v.r_m$ as the subtree to the right is covered by $I$.
If $v$ is a leaf we simply update~$v.c$ and~$v.c_m$ using $t(v), t_m(v)$ and $d$.
A more detailed description (including pseudo code) can be found in~\cite{as-cdaaa-13}.
In total, this process takes $O(\log n)$ time.

Finally, in order to find the maximal number of rectangles covering
an interval one last propagation from root to leaves is needed, such that
all $l, r, l_m, r_m$ values of internal nodes are cleared. This is done using one
traversal on $T$.
Now, the maximal number of rectangles covering an interval
is the maximal~$c_m$ of all leaves of $T$.

Clearly, the running time of the algorithm is $O(n\log n)$,
since constructing the tree and sorting the $y$-events takes
$O(n\log n)$ time. Updating each of the~$2n$ $y$-events takes $O(\log n)$
time, and the final propagation of values to the leaves takes $O(n)$ time.
The algorithm uses $O(n)$ space.

}
{
} 

\clearpage

\section{Bijection between the Trapezoids in \TcollReduc and the Rectangles in \RcollReduc }
\label{appendix_reduction_bijection}

We devise here a proof for Theorem~\ref{theorem:reduction_correctness},
claiming that the mapping from~\TcollReduc to~\RcollReduc,
presented in Definition~\ref{def:reduction}, is bijective.

Given a subdivision, one can partition the plane into
vertical slabs by passing a vertical line through every endpoint
of the subdivision, and then partition each slab into regions by
intersecting it with all the curves in the subdivision.
This defines a decomposition of the plane into at most~$2(n+1)^2$
regions (see~\cite{CG-alg-app}, for example).

\begin{lemma}
\label{lemma:regions_bijection}
Let Regions(arr) denote the collection of regions
of an arrangement arr, as defined above.
For any region $a_t \in$ Regions(\Arr{\TcollReduc})
let $a_r \in$ Regions(\Arr{\RcollReduc}) be the
rectangular region corresponding to $a_t$.
The collection Regions(\Arr{\RcollReduc})
of all such rectangular regions spans the plane.
\end{lemma}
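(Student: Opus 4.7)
The plan is to verify that the reduction preserves two pieces of structure: the vertical-slab decomposition of the plane and, within each slab, the vertical ordering of the curves cutting it. Once both are established, spanning of the plane by the rectangular regions follows immediately.

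First, I would note that the vertical slabs of $\Arr{\TcollReduc}$ are determined exclusively by the $x$-coordinates of the endpoints of the curves supporting the trapezoids in \TcollReduc. By Definition~\ref{def:reduction} the reduction preserves the $x$-range of every trapezoid, so in particular it preserves the set of distinguished $x$-values. Consequently, the vertical slabs of $\Arr{\RcollReduc}$ coincide (as vertical strips of the plane) with those of $\Arr{\TcollReduc}$, and it suffices to argue that within a single slab the rectangular regions tile the strip without gaps or overlaps.

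Second, fix a slab $\sigma$ and let $cv_{i_1}, cv_{i_2}, \dots, cv_{i_k}$ be the curves of the subdivision that cross $\sigma$, listed in increasing vertical order inside $\sigma$. These curves subdivide $\sigma$ into $k+1$ trapezoidal regions $a_t^{0}, a_t^{1}, \dots, a_t^{k}$ stacked bottom to top. Under the reduction each curve $cv_{i_j}$ contributes the horizontal line $y = \text{Rank}(cv_{i_j})$ inside the rectangular image of $\sigma$. Since $\text{Rank}$ is a bijection onto $\{1,\dots,n\}$ and the total order $<$ from Definition~\ref{def:ord} extends the Guibas--Yao relation $\prec^{+}$, any two curves that are comparable in the vertical order inside $\sigma$ (i.e.\ one lies above the other on their joint $x$-range) are ordered by $<$ in the same direction. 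Therefore $\text{Rank}(cv_{i_1}) < \text{Rank}(cv_{i_2}) < \cdots < \text{Rank}(cv_{i_k})$, and the $k$ horizontal lines inside the rectangular image of $\sigma$ appear in the same bottom-to-top order as the $k$ curves did in $\sigma$.

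Third, the rectangular region $a_r^{j}$ corresponding to $a_t^{j}$ is exactly the strip of the slab lying between consecutive Rank values (or between $\pm\infty$ and a Rank value at the extremes). The $k+1$ rectangular strips $a_r^{0}, \dots, a_r^{k}$ therefore cover the slab's image exactly once, and since the union of the slabs is the entire plane, the collection $\{a_r : a_t \in \text{Regions}(\Arr{\TcollReduc})\}$ spans the plane. The main (mild) obstacle is verifying that the within-slab order agrees with $<$, which is precisely the content of the Guibas--Yao separation property together with the extension to a total order described in Section~\ref{subsubsec:eff_ver_alg:reduction}; all remaining steps are bookkeeping on vertical strips.
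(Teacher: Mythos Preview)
Your argument is correct and follows exactly the approach the paper takes: the paper's proof consists of the single line ``Trivial. The slabs remain the same and within each slab the rectangular regions remain adjacent,'' and you have simply filled in the details behind that sentence (preservation of the slab $x$-values by Definition~\ref{def:reduction}, and preservation of the within-slab vertical order via the Guibas--Yao relation and the total order $<$). There is no substantive difference in strategy.
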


\begin{proof}
Trivial. The slabs remain the same and
within each slab the rectangular regions remain adjacent.
\myqed
\end{proof}

\begin{lemma}
\label{lemma:reduction_from_trpz_to_rect}
Let $a_t \in$ Regions(\Arr{\TcollReduc}) be a region
and let $a_r \in$ Regions(\Arr{\RcollReduc}) be the
rectangular region corresponding to $a_t$.
The number of rectangles in \RcollReduc that cover $a_r$
is at least the number of trapezoids in \TcollReduc that cover $a_t$.
In other words,
for every $\tred \in \TcollReduc$ that covers $a_t$
its corresponding rectangle $\rred \in \RcollReduc$ covers $a_r$.
\end{lemma}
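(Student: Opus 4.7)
The plan is to verify directly from the definition of the reduction (Definition~\ref{def:reduction}) that the covering relation is preserved. I would fix any $\tred \in \TcollReduc$ that covers $a_t$ and show that the corresponding rectangle $\rred$ satisfies the three axis-aligned containments needed for $\rred \supseteq a_r$: equality (containment) of $x$-ranges, $\mathrm{top}(\rred)$ lies weakly above $\mathrm{top}(a_r)$, and $\mathrm{bottom}(\rred)$ lies weakly below $\mathrm{bottom}(a_r)$.

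First I would handle the horizontal direction, which is immediate: by Definition~\ref{def:reduction} the $x$-ranges of $\tred$ and $\rred$ coincide, and similarly for $a_t$ and $a_r$. Since $\tred$ covers $a_t$, we have $x\text{-range}(a_t) \subseteq x\text{-range}(\tred)$, and hence $x\text{-range}(a_r) \subseteq x\text{-range}(\rred)$.

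Next I would handle the vertical direction, which is the heart of the argument. The top and bottom of $\rred$ are placed at heights $\mathrm{Rank}(\mathrm{top}(\tred))$ and $\mathrm{Rank}(\mathrm{bottom}(\tred))$, and similarly for $a_r$. So the claim reduces to showing
\[
\mathrm{Rank}(\mathrm{top}(\tred)) \ge \mathrm{Rank}(\mathrm{top}(a_t)) \quad \text{and} \quad \mathrm{Rank}(\mathrm{bottom}(\tred)) \le \mathrm{Rank}(\mathrm{bottom}(a_t)).
\]
Because $\tred$ covers $a_t$, in the common $x$-range of $a_t$ (which is nonempty) the curve $\mathrm{top}(\tred)$ lies weakly above $\mathrm{top}(a_t)$ and $\mathrm{bottom}(\tred)$ lies weakly below $\mathrm{bottom}(a_t)$. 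If the top curves coincide (or the bottom curves coincide), the corresponding ranks are equal and the inequality is trivial. Otherwise the two top curves are distinct interior-disjoint $x$-monotone curves sharing an $x$-range, and by the definition of the Guibas--Yao relation $\prec$ we obtain $\mathrm{top}(a_t) \prec \mathrm{top}(\tred)$, whence $\mathrm{top}(a_t) \prec^+ \mathrm{top}(\tred)$, and therefore $\mathrm{top}(a_t) < \mathrm{top}(\tred)$ by Definition~\ref{def:ord}. Thus $\mathrm{Rank}(\mathrm{top}(a_t)) < \mathrm{Rank}(\mathrm{top}(\tred))$. The bottom side is handled symmetrically.

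The main potential obstacle is ensuring that the total order $<$ really does respect the geometric above/below relation whenever the relation is well-defined; this is delivered by the fact that \TcollReduc has been assumed to contain only curves whose top/bottom boundaries are pairwise interior disjoint in their shared $x$-range (or else coincide completely there), which is precisely the setting in which Guibas--Yao's $\prec$ is defined and acyclic, and hence in which $\prec^+$ embeds into $<$. Combining the two vertical rank inequalities with the $x$-range containment from the first step yields that $\rred$ contains $a_r$, completing the proof.
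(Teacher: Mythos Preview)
Your proposal is correct and follows essentially the same approach as the paper's proof: verify $x$-range containment directly from Definition~\ref{def:reduction}, then deduce the rank inequalities for the top and bottom curves from the fact that covering implies the above/below relation in the shared $x$-range. If anything, your argument is slightly more explicit than the paper's, which simply asserts that ``according to Definition~\ref{def:reduction}, it immediately follows that $\mathrm{Rank}(\mathrm{top}(\tred_i)) \geq \mathrm{Rank}(\mathrm{top}(a_t))$'' without spelling out the appeal to the Guibas--Yao relation $\prec$ and its embedding into~$<$.
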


\begin{proof}
Let $\{\tred_1,\tred_2,...,\tred_m\}\subseteq \TcollReduc$
be the set of trapezoids, ordered by creation time,
such that for every $i\in\{1,...,m\}$, $\tred_i$ covers $a_t$.
Let $\{\rred_1,\rred_2,...,\rred_m\}\subseteq \RcollReduc$
be the set of corresponding
rectangles, such that $\rred_i$ corresponds to $\tred_i$ for
$i \in \{1,...,m\}$.
For any $\tred_i$, since $\tred_i$ covers $a_t$
we get that $x\text{-range}(a_t)\subseteq x\text{-range}(\tred_i)$.
By Definition~\ref{def:reduction} the
$x$-ranges remain the same after the reduction,
and therefore $x\text{-range}(a_r) \subseteq x\text{-range}(\rred_i)$.
Since $\tred_i$ covers $a_t$ then we also get that
in the shared $x$-range top$(\tred_i)$ is above or on top$(a_t)$
and bottom$(\tred_i)$ is below or on bottom$(a_t)$.
According to Definition~\ref{def:reduction},
it immediately follows that
Rank$($top$(\tred_i)) \geq$ Rank$($top$(a_t))$.
In other words, top$(\rred_i)$ is above or on top$(a_r)$.
Similarly, bottom$(\rred_i)$ is below or on bottom$(a_r)$.
We conclude that $\rred_i$ covers~$a_r$.
\myqed
\end{proof}

\begin{lemma}
\label{lemma:reduction_from_rect_to_trpz}
Let $a_r\in$ Regions$(\Arr{\RcollReduc})$ be a rectangular region,
whose corresponding region is $a_t \in$ Regions$(\Arr{\TcollReduc})$.
The number of trapezoids in \TcollReduc that cover $a_t$
is at least the number of rectangles in \RcollReduc that cover $a_r$.
In other words,
for every $\rred \in \RcollReduc$ that covers $a_r$
its corresponding trapezoid $\tred \in \TcollReduc$ covers $a_t$.
\end{lemma}

\begin{proof}
Let $\{\rred_1,\rred_2,...,\rred_m\}\subseteq \RcollReduc$
be the set of rectangles,
such that for every $i\in\{1,...,m\}$, $\rred_i$ covers $a_r$.
Let $\{\tred_1,\tred_2,...,\tred_m\}\subseteq \TcollReduc$
be the set of corresponding
trapezoids, such that $\tred_i$ corresponds to $\rred_i$ for
$i \in \{1,...,m\}$.
Proving that for any $i \in \{1,...,m\}$, $\tred_i$ covers $a_t$,
is done symmetrically to the proof of Lemma~\ref{lemma:reduction_from_trpz_to_rect}.
\myqed
\end{proof}

Combining Lemma~\ref{lemma:reduction_from_trpz_to_rect} and
Lemma~\ref{lemma:reduction_from_rect_to_trpz}
we conclude that the number of trapezoids in \TcollReduc
that cover a region $a_t$ equals to the number
of rectangles in \RcollReduc that cover $a_r$, which is
the corresponding region to $a_t$.
The covering rectangles are the reduced trapezoids
in the set of trapezoids covering $a_t$.
Since both Regions$(\Arr{\TcollReduc})$
and Regions$(\Arr{\RcollReduc})$ span the plane
(Lemma~\ref{lemma:regions_bijection}),
we obtain Theorem~\ref{theorem:reduction_correctness}. 
\clearpage


\end{document}